\newcommand\version{January 31, 2022}
\newtheorem{theorem}{Theorem}[section]
\newtheorem{proposition}[theorem]{Proposition}
\newtheorem{lemma}[theorem]{Lemma}
\newtheorem{corollary}[theorem]{Corollary}
\theoremstyle{definition}
\newtheorem{definition}[theorem]{Definition}
\newtheorem{example}[theorem]{Example}
\theoremstyle{remark}
\newtheorem{remark}[theorem]{Remark}
\numberwithin{equation}{section}
\newcommand{\B}{\mathfrak{B}}
\renewcommand{\epsilon}{\varepsilon}
\begin{document}

\title[Surgery Transformations--- \version]{Surgery transformations and eigenvalue estimates for quantum graphs with $\delta'$ vertex interactions }

\author{Aftab Ali and Muhammad Usman}
\address{Aftab Ali, Department of Mathematics, Lahore University of Management Sciences (LUMS), Lahore}
\email{18070008@lums.edu.pk}
\address{Muhammad Usman, Department of Mathematics, Lahore University of Management Sciences (LUMS), Lahore}
\email{m.usman08@alumni.imperial.ac.uk}

\keywords{Metric graphs; Surgery Principles; Bounds on eigenvalues.}

\subjclass[2010]{Primary: 34B45, 35P15; Secondary: 05C50}
\begin{abstract} 
We extend the surgical tool box for quantum graphs to anti-standard and $\delta'$ vertex conditions. Monotonicity properties of eigenvalues of graph Laplacian with $\delta'$ interactions at vertices depend on the sign of vertex parameter. Using several interlacing inequalities between eigenvalues of graph Laplacian with diffeent vertex conditions and surgery principles we obtain new upper and lower bounds on the eigenvalues of $\delta$ and $\delta'$ Laplacians.   
\end{abstract}

\maketitle

\section{Introduction}
The quantum graphs - differential operators on metric graphs - is a rapidly growing branch of mathematical physics lying on the border between differential equations, spectral geometry and operator theory. During the last decade, particular attention was paid to study the impact on the spectrum of several differential operators such as Laplacian and Schr\"odinger operators on the metric graphs under the perturbations of the topology and geometry, i.e., graph surgery. These surgery principles proved to be a useful tool for the
spectral analysis of quantum graphs, in particular, for eigenvalue estimates. The usual goal is to derive spectral estimates that depend on the simple geometric properties of the graph, such as total length, the Betti number, the diameter or the number of vertices or edges of the underlying metric graph, see \cite{R,ZS,KNN,RL,A,BKKM1,KKK,F,KN,KKMM} and references therein. \\

The focus of the current work is to
\begin{itemize}
    \item extend the existing surgical tool box to the graph Laplacian with $\delta'$ and anti-standard vertex interactions and
    \item exploit this broader set of surgery principles to obtain new spectral bounds for graph Laplacians equipped with $\delta$ and $\delta'$ vertex interactions including standard, anti-standard, Neumann and Dirichlet vertex conditions.
    \end{itemize}
    Theorem 4.3 of \cite{BKKM} provides a very useful tool to derive interlacing inequalities between the eigenvalues of Laplacians with different sets of vertex conditions for a fixed underlying metric graph. Our derivation of some new eigenvalue bounds make intensive use of this result. Wherever possible we derive estimates on the general eigenvalues and not just the lowest non-zero eigenvalue.\\  
    
Some of the surgical transformations for Laplacian with $\delta'$ vertex conditions appeared in \cite{RS}.
We will consider three additional (or a more general form of perturbations that appeared in \cite{RS}) types of perturbations of the graphs for $\delta'$-type vertex conditions.
\begin{enumerate}[(i)]
   \item Increasing or decreasing vertex coupling parameter,
    \item a pendant graph is added to a graph, or
    \item a graph is inserted at some vertex  of other graph.
\end{enumerate}
It is well known that under these transformations, eigenvalues of Laplacian with different set of vertex interactions behave differently.
For the Kirchoff (Standard) or $\delta$ vertex conditions, increasing the $\delta$-coupling parameter at a certain vertex of the graph increases the eigenvalues. That is, a general monotonicity principle is valid. Moreover, under the perturbation (i), the \textcolor{blue}{Theorem (3.4)} of \cite{BKKM} and \textcolor{blue}{Theorem (3.1.8)} of \cite{BK1} establishes the interlacing of the eigenvalues of the two graphs. Furthermore, the Dirichelet eigenvalues of the graph are sandwiched between two consecutive eigenvalues of same underlying $\delta$-graph. For the anti-standard as well as $\delta'$ type conditions, we show that the effect on the eigenvalues under the transformation (i), in contrast to standard or $\delta$-conditions,  depends on the sign of the coupling parameters. It turns out that the perturbation (i) divides the permutation invariant conditions into two classes: those for which increasing the coupling parameter from a negative (positive) number to a negative (positive) number leads to the non-increasing eigenvalues, and those for which increasing the coupling parameter from a negative number to a positive number lead to the non-decreasing eigenvalues. Moreover, the eigenvalues of graph equipped with anti-standard conditions are squeezed between two consecutive eigenvalues of same underlying $\delta'$-graph, see, \textcolor{blue}{Theorem \eqref{strength}}. \\

The later part of section $3$ is devoted to the surgery transformation which increases the total length of the metric graph. In \textcolor{blue}{Theorem \eqref{length1}} it is shown that for permutation invariant condition such as anti-standard and $\delta'$ conditions the eigenvalues behave non-increasingly if the length of an edge is scaled by a factor greater than one. Moreover, if graph and the coupling parameters at each vertex are scaled up by some factor $t>1$ then then eigenvalues are pushed down by the factor $\frac{1}{t^2}$, see  \textcolor{blue}{Theorem \eqref{length2}}. 
 In \cite{R,BK,KMN,KN} eigenvalue estimates were derived for certain basic surgical operations of quantum graphs, namely, gluing vertices and attaching edges. The paper \cite{RS} deals with these operations for more general (permutation invarient vertex conditions) self-adjoint vertex conditions. 
 More sophisticated surgery operations where the set of edges is changed were investigated in\cite{BKKM1,KKMM, RL}, using the symmetrisation technique
first applied to quantum graphs by L. Friedlander \cite{F} (see also \cite{KKK}). For the $\delta$-type vertex conditions it is known that the eigenvalues behave non-increasing under the surgery transformations (i) and (ii), see \cite{BKKM}.
In \textcolor{blue}{Theorem \eqref{pendant}} of the this paper, we show that for anti-standard and $\delta'$-type vertex conditions, the effect on the eigenvalues under the transformation (ii) also depend on the sign of the coupling parameters. This perturbation divides the vertex conditions into two parts: transformation (ii) that decreases the eigenvalues and those for which the perturbation (ii) increases the eigenvalues. Moreover, in \textcolor{blue}{Proposition \eqref{insert}} it is shown that for $\delta'$-type conditions the perturbation (iii) decreases the eigenvalues. In Section $4$, we provide some lower and upper bounds on the eigenvalues of the graphs equipped with (among others) $\delta$ and $\delta'$-type conditions. First, we provide some upper bounds on the lowest eigenvalue of $\delta'$-graph using some trial functions from the domain of the quadratic form. Later we use the surgical transformation to obtain a few bounds on the eigenvalues. \textcolor{blue}{Theorem \eqref{ariturk1}} provides an upper bound in terms of the betti number and the number of edges for the Neumann-Kirchoff eigenvalues and \textcolor{blue}{Theorem \eqref{delta1}} give bounds on the eigenvalues of $\delta$-graph. A similar but a better estimate can be found in \cite[\textcolor{blue}{Lemma 1.5}]{A}. However, for any flower graph the estimate in \textcolor{blue}{Theorem \eqref{ariturk1}} coincides with the estimate presented in \cite[\textcolor{blue}{Lemma 1.5}]{A}. In \textcolor{blue}{Theorem \eqref{ariturk2}} we provide the upper bounds on the eigenvalues of a metric graph with standard conditions at internal vertices and Dirichlet conditions at the vertices of degree one. For the lower and upper estimates on the eigenvalues of graph with anti-standard conditions, see \textcolor{blue}{Theorem \eqref{ec} and \eqref{anti2}}, similar estimates for standard conditions can be found in \cite{BKS}. \textcolor{blue}{Theorem \eqref{delta'} and \eqref{deltaprime}} provide lower bounds on the eigenvalues of $\delta'$-graph with negative and positive coupling parameters, respectively. The later part of this section includes a few more bounds on the eigenvalues of $\delta$ and $\delta'$-graph, these bounds are obtained using the well-known inequalities and especially the estimates on the eigenvalues of metric tree presented in \cite{R} and \cite{ZS} for standard and anti-standard vertex conditions, respectively. We also include the equivalent form of these estimates expressed in terms of other topological parameters of the underlying metric graph. In the last part, we have used a simple relation between the eigenvalues of some bipartite graphs equipped with standard and anti-standard conditions to first obtain a relation between the eigenvalues of a bipartite graph and then for general graphs equipped with  $\delta$ and $\delta'$ conditions.\\

We would like to mention that all results depend
only on the vertex conditions at those vertices which are changed by the graph transformation. At all other vertices, general self-adjoint conditions are allowed. The proofs of our results are all variational comparing Rayleigh quotients, which is a standard method in obtaining eigenvalue estimates. In fact, estimates on the quadratic form associated to the Laplace operator on suitable finite-dimensional subspaces together with an application of the min-max principle yields the desired estimates for the eigenvalues.
\section{Preliminaries}
This section provides some basics of the quantum graph. It also provides the set of interlacing inequalities between the eigenvalues of the same underlying graph equipped with different conditions such as Dirichlet, Neumann, standard, anti-standard, $\delta$ and $\delta'$. These inequalities are obtained using Theorem \cite[\textcolor{red}{Theorem 4.3}]{BKKM} and help to prove some bounds on the eigenvalues. For more detail on this subject, we refer to \cite{BK1}.\\
Let $\Gamma$ be a finite and connected graph having a finite set of edges $E$ and a finite set of vertices $V$. In addition, we identify each edge $e \in E$ with an interval $[0,\ell_e]$, where $\ell_e >0$ denote the length of the edge and let $x_e$ be coordinate along this edge which induces a metric on $\Gamma$. The sum of the lengths of each edge denotes the total length of a metric graph $\Gamma$. That is,  $$\mathcal{L}(\Gamma)=\sum\limits_{e \in E} \ell_e.$$  
Let $|E|$ and $|V|$ denote the number of edges and the number of vertices in a metric graph $\Gamma$. Each edge $e$ connects a pair of vertices $v_1$ and $v_2$, where a vertex is viewed as a subset of endpoints of edges. The vertices are disjoint such that their union coincides with the set containing each end point of all edges. An edge of $\Gamma$ can be a loop with both endpoints connected to a single vertex, and we also allow multiple edges to run between a pair of vertices. We assume that the graph $\Gamma$ is compact, which means there are a finite number of edges, each having a finite length. An edge $e$ is said to be an incident to a vertex $v$ if at least one of the endpoints belongs to that vertex and denote by $e \sim v$, and let $E_v$ denote the set of all edges incident to a vertex $v$. The number of edges incidents to a vertex $v$, denoted by $d_v$, is called the degree of a vertex. A vertex of degree one is the boundary vertex, and an edge incident to such a vertex is a boundary edge. The first Betti number, the number of independent cycles, of a metric graph $\Gamma$ is given by, 
$$\beta=|E|-|V|+1.$$
\\
Our main objective is to study the spectrum of the Laplacian $\frac{-d^2}{dx^2}$ acting on the functions living on the edges of $\Gamma$ with vertices equipped with one of the below-mentioned vertex conditions. The operator is self-adjoint in the Hilbert space 
$$L^2(\Gamma)=\bigoplus\limits_{e \in E} L^2(0,\ell_e).$$
Which consists of functions $\varphi: \Gamma \mapsto \mathbb{C}$, that are square-integrable along each edge. Let $\varphi_e$ be the restriction of $\varphi$ to some edge $e \in E$. We define this operator on a subspace of the Sobolev space $W^2_2(\Gamma \backslash V) $, consisting of functions $\varphi \in L^2(\Gamma)$, whose weak derivatives up to order two exist on each edge and are also square-integrable.  We denote these Sobolev spaces by 
$$H^2(\Gamma)=\bigoplus\limits_{e \in E} H^2(0, \ell_e).$$
equipped with the standard Sobolev norm and inner product. To describe the vertex conditions of our interest, let  us denote the limiting value of a function $\varphi$ and its derivative at any endpoint of an edge by 
$$\varphi(x_i)= \lim\limits_{x \mapsto x_i} \varphi(x) $$ and 
\begin{equation*}
\partial \varphi(x_i)
=\begin{cases}  \varphi'(x_i), \quad  \quad    \text{if $x_i$ is the left endpoint}, \\
- \varphi'(x_i),  \quad  \  \text{if $x_i$ is the right endpoint.}
\end{cases}
\end{equation*}
The vertex conditions define the linear relation between values of function and their derivatives at a vertex. The operator $L=-\frac{d^2}{dx^2}$ is a self-adjoint defined on the functions from the Sobolev space $W^2_2(\Gamma \backslash V)$ satisfying the following $\delta$-type vertex conditions.
\begin{equation} \label{delta}
\begin{cases} \varphi(x_i)= \varphi(x_j) \equiv \varphi(v_m),\quad x_i,x_j\in v_m, \quad (\text{continuity condition})\\ 
\sum\limits_{x_i \in v_m}\partial \varphi(x_i)=\alpha_m \varphi(v_m).
\end{cases}
\end{equation}
Here $\alpha_m$ are the fixed real numbers associated with each vertex $v_m$ and are commonly known as interaction strength of $\delta$-conditions. The corresponding quadratic form and its domain is given by,
\begin{equation} \label{qdelta}
h[\varphi]:=\int_\Gamma |\varphi'(x)|^2 \,dx+\sum_{m=1}^{|V|}\alpha_m|\varphi(v_m)|^2
\end{equation}
with domain, 
$$D(h)= \{ \varphi \in H^1(\Gamma): \varphi(x_i)=\varphi(x_j), \quad x_i, x_j \in v_m \}.$$
The standard conditions are defined by setting $\alpha_m=0$ in \eqref{delta} at all vertices of $\Gamma$, and the quadratic form is given by  $h[\varphi]:=\displaystyle{\int_\Gamma |\varphi'(x)|^2 dx}$   , defined on continuous functions in $H^1(\Gamma)$. We remark that the vertices of degree two equipped with standard vertex conditions do not affect the spectrum of the graph. Thus, any interior point can be regarded as a vertex of degree two equipped with standard vertex conditions and similarly, any vertex of degree two can be suppressed. \\
The Dirichlet vertex conditions are those in which the value of a function $\varphi$ is zero at each end point of an interval, that is, $$\varphi(x_i)=0, \quad x_i \in v_m. $$
Equivalently, this means the function vanishes at a vertex $v_m$ of a metric graph $\Gamma$. The Dirichlet conditions corresponds to $\delta$-type conditions if we let $\alpha_m=\infty$ in \eqref{delta}.
When Dirichlet conditions are imposed at some vertex, the functions along the incident edges do not interact (no communication between edges). Suppose all vertices are equipped with Dirichlet condition. In that case, the metric graph is the collection of independent intervals. The differential operator decouples into a direct sum of the operators on separate intervals with Dirichlet condition imposed at endpoints. The spectrum of the metric graph coincides with the disjoint union of the spectrum of each interval. That is the operator is not changed if two Dirichlet vertices are identified to form a single Dirichlet vertex. The quadratic form is given by the expression  
$h[\varphi]:=\int_\Gamma |\varphi'(x)|^2 dx$, with domain $D(h)=\{\varphi \in H^1(\Gamma): \varphi(v)=0  \}.$\\

In the $\delta'$-type condition, the role of the functions and their derivative are switched. In this case, the Laplacian is still self-adjoint on the functions from the Sobolev space $W^2_2(\Gamma \backslash V)$ satisfying the following $\delta'$-type vertex conditions.
\begin{equation} \label{delta'}
    \begin{cases}
      \partial \varphi(x_i)=\partial \varphi (x_j), \quad x_i,x_j\in v_m, \quad (\text{continuity condition}) \\
      \sum \limits_{x_i \in v_m}\varphi(x_i)=\alpha'_m \dfrac{d \varphi(v_m)}{dx}, \quad \alpha'_m \neq 0. 
    \end{cases}
\end{equation}
The corresponding quadratic form is defined on the functions in $H^1(\Gamma)$, and is given by,
\begin{equation} \label{qdelta'}
h[\varphi]:=\int_\Gamma |\varphi'(x)|^2 \,dx+\sum\limits_{m=1}^{|V|} \frac{1}{\alpha'_m} \left|\sum \limits_{x_i \in v_m}\varphi(x_i)\right|^2. 
\end{equation}
Consider the case when $\alpha'_m=0$ in \eqref{delta'} at some vertex $v_m$ of a metric graph $\Gamma$. The quadratic form in \eqref{qdelta'} only makes sense if we force that the sum of the vertex values of functions living on the incidents edges of $v_m$ is zero. The following vertex conditions are known as the anti-standard vertex conditions,
\begin{equation} \label{anti}
    \begin{cases}
      \partial \varphi(x_i)=\partial \varphi (x_j), \quad x_i,x_j\in v_m \quad (\text{continuity condition}), \\
      \sum \limits_{x_i \in v_m}\varphi(x_i)=0 \hspace{1.0in}  \quad   (\text{balance condition}). 
    \end{cases}
\end{equation}
And the quadratic form of a metric graph equipped with anti-standard vertex conditions is defined by the expression $h[\varphi]:=\int_\Gamma |\varphi'(x)|^2 \,dx$ defined on the domain. 
$$D(h)=\left \{ \varphi \in H^1(\Gamma): \sum \limits_{x_i \in v_m}\varphi(x_i)=0\right \}.$$
 Another type of vertex condition that decouples the graph into disjoints interval is the \textbf{Neumann conditions}. For these conditions, the derivatives of functions living on incident edges are zero at each vertex, and no conditions are assumed on the values of functions. That is, $$\varphi_e'(v)=0$$
Equivalently, this means that the derivatives of functions vanish at each endpoint of intervals in a metric graph $\Gamma$. The Neumann conditions corresponds to $\delta'$-type conditions if we let $\alpha'_m=\infty$ in \eqref{delta'}. The quadratic form of a metric graph equipped with such conditions is defined on the function from $H^1(\Gamma)$, given by the following expression. 
\begin{equation} \label{qneumann}
     h[\varphi]:=\int_\Gamma |\varphi'(x)|^2 \,dx.
\end{equation}
We remark that at the vertex of degree one, the anti-standard vertex conditions coincide with Dirichlet conditions, and the standard condition simplifies to Neumann conditions. Provided the length of an edge is preserved, the eigenvalues of operators corresponding to the above mentioned vertex conditions are independent of the choice of parametrization of edges in $\Gamma$. We further remark that these operators are self-adjoint and semi-bounded, and the underlying metric graph is compact, which guarantees that the spectrum of Laplacians consists of a sequence of real eigenvalues of finite multiplicity and can be listed as follows. 
$$\lambda_1 \leq \lambda_2 \leq \lambda_3 \leq \cdots,$$
 and each eigenvalue is repeated according to its multiplicity, and the corresponding eigenfunctions form an orthonormal basis of the Hilbert space $L^2(\Gamma)$. The eigenvalues and the corresponding eigenfunctions depend on the length, vertex conditions and the topological structure of the underlying metric graph \cite{BKKM,RS,BK1, KN}. For this reason, we will write $\lambda_k=\lambda_k(\Gamma)$ to reflect the spectrum's dependence on the quantum graph- a metric graph, differential operator and the vertex conditions. \\
 Generally, the explicit computation of spectrum is challenging to calculate, except in the case of a few simple
graphs, impossible to calculate explicitly. Because the relevant secular equation is transcendental to a very high degree. The zeros of these secular equations are the eigenvalues of a quantum graph, and it is not straightforward to find reliable analytical and numerical solutions to such equations. Therefore, the interesting question is whether we can find universal bounds on the eigenvalues depending on a few simple parameters of the underlying metric graph. To obtain these bounds and the corresponding extremal graphs, a better understanding of how the geometry of the graph influences the eigenvalues is required. Therefore, we attempt to understand how the eigenvalues change subject to  specific changes in the graph structure. For example, one may be interested to know if a particular graph minimizes or maximizes a given eigenvalue among all graphs with specific fixed geometric quantities such as length, number of edges, diameter, Betti number. To answer such questions, one needs to be able to make comparisons between two or more graphs. For this, we need to make some fundamental changes in the geometry of the underlying graph, such as lengthening (or shortening) an edge, gluing together vertices to obtain a new graph etc. These changes are made so that the impact on the eigenvalues are predictable; such alterations in the graph were referred to as surgery in \cite{BKKM}.\\
A standard method in obtaining eigenvalue estimates is based on variational arguments. The variational argument is based on comparing the Rayleigh quotients on suitable finite-dimensional subspaces. The Rayleigh quotient $R(\varphi)$ associated with Laplacian $L=-\dfrac{d^2}{dx^2}$ is defined as
\begin{equation}
R(\varphi)=\frac{h[\varphi]}{||\varphi||^2}
\end{equation}
on the space $L^2(\Gamma)$, the inner product is given by
\begin{equation}
<\varphi,\psi>=\int_{\Gamma} \varphi(x)\overline{\psi(x)}dx.
\end{equation}
Here, $h[\varphi]=< L \varphi, \varphi >$ is the quadratic form of the operator $L$. It is evidenced that computing the Rayleigh quotient for certain trial functions from the domain of the quadratic form leads to estimates for eigenvalues, and by minimizing the Rayleigh quotient, the eigenvalues and eigenfunctions of an operator can be obtained. Thus, the Rayleigh quotient is related to eigenvalues via the following min-max principle,  
\begin{align} \label{minmax}
        \lambda_k(\Gamma)= \min_{\underset{\dim(X) =k }{X \subset D\left(h\right) }}{\max_{0 \neq \varphi \in  X}{\frac{h[\varphi]}{||\varphi||^2}}}. 
\end{align}
If a $k$-dimensional subspace realises the minimum $\lambda_k(\Gamma)$ in \eqref{minmax}, we call it a minimising subspace for $\lambda_k(\Gamma)$.
Let $H$ and $\tilde{H}$ be two self-adjoint operators with discrete spectrum and  $h$ and $\tilde{h}$ be their corresponding semi-bounded from below quadratic forms. We say that $\tilde{h}$ is a \textit{positive rank-n perturbation} of $h$  if $\tilde{h}=h$ on some $Y\subset_n \mathrm{dom}(h)$  and either  $Y=\mathrm{dom}(\tilde{h})\subset_n \mathrm{dom(h)}$ or $\tilde{h}\geq h$ with $\mathrm{dom}(\tilde{h})=\mathrm{dom}(h)$. Here, the symbol $\subset_n$ denotes the subspace such that the quotient space $\mathrm{dom}(h)/Y$ is $n$-dimensional. \\ 

The following theorem provides the interlacing with equality characterisation between the eigenvalues of the two forms where one form is the rank $1$-perturbation of the other.  

\begin{theorem}\cite[\textcolor{red}{Theorem 4.3}]{BKKM}\label{berk1}. Let $H$ and $\tilde{H}$ be two self-adjoint operators with discrete spectrum and  $h$ and $\tilde{h}$ be their corresponding semi-bounded from below quadratic forms. If form $\tilde{h}$ is a positive rank-1 perturbation of the form $h$ then, eigenvalues of $H$ and $\tilde{H}$ satisfy
$$
\lambda_k(\Gamma)\leq \lambda_k(\tilde{\Gamma})\leq \lambda_{k+1}(\Gamma)\leq \lambda_{k+1}(\tilde{\Gamma}), \quad k\geq 1.
$$ 
Moreover, if $\lambda$ is a common eigenvalue of $H$ and $\tilde{H}$, then their respective multiplicities $m$ and $\tilde{m}$ differ by at most $1$ and the dimension of the intersection of their $\lambda$-eigenspaces is $\mathrm{min}(m,\tilde m)$. 
\end{theorem}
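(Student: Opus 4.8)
The plan is to get the interlacing chain straight from the min--max characterisation \eqref{minmax}, and then to extract the multiplicity statement from the interlacing together with a closer look at the eigenspaces. Everything below is carried out case by case according to the two alternatives in the definition of a positive rank-$1$ perturbation: either (a)~$Y=\dom(\tilde h)\subset_1\dom(h)$ and $\tilde h=h$ on $Y$ (restricting the form to a codimension-one subspace of its domain), or (b)~$\dom(\tilde h)=\dom(h)=:D$, $\tilde h\ge h$, and $\tilde h=h$ on some $Y\subset_1 D$. In case (b) I would first normalise: the nonnegative form $q:=\tilde h-h$ vanishes on the codimension-one subspace $Y$, and a short computation on a one-dimensional complement of $Y$ (using nonnegativity to kill the cross term) shows $q[\varphi]=|\ell(\varphi)|^2$ for a linear functional $\ell$ with $\ker\ell=Y$, unless $q\equiv 0$ and $\tilde H=H$. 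Thus in case (b) one may write $\tilde h=h+|\ell|^2$.

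For $\lambda_k(\Gamma)\le\lambda_k(\tilde\Gamma)$: in case (a) every $k$-dimensional subspace $X\subset\dom(\tilde h)=Y$ is also admissible for $H$ and $\tilde h=h$ on it, so minimising in \eqref{minmax} over such $X$ gives the bound; in case (b) the domain-wise inequality $\tilde h\ge h$ makes each Rayleigh quotient for $\tilde H$ at least the corresponding one for $H$, and \eqref{minmax} again does it. For $\lambda_k(\tilde\Gamma)\le\lambda_{k+1}(\Gamma)$ I would pick a minimising $(k+1)$-dimensional subspace $X\subset\dom(h)$ for $\lambda_{k+1}(\Gamma)$; since $Y$ has codimension one in $\dom(h)$ we have $\dim(X\cap Y)\ge k$, so choosing a $k$-dimensional $X'\subseteq X\cap Y$ gives a subspace admissible for $\tilde H$ on which $\tilde h=h$, whence $\max_{0\ne\varphi\in X'}\tilde h[\varphi]/\|\varphi\|^2=\max_{0\ne\varphi\in X'}h[\varphi]/\|\varphi\|^2\le\lambda_{k+1}(\Gamma)$. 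The last inequality $\lambda_{k+1}(\Gamma)\le\lambda_{k+1}(\tilde\Gamma)$ is the first one with $k+1$ in place of $k$, completing the chain.

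For the multiplicity statement I would pass to the eigenvalue counting functions $N(\mu)=\#\{k:\lambda_k(\Gamma)\le\mu\}$ and $\tilde N(\mu)=\#\{k:\lambda_k(\tilde\Gamma)\le\mu\}$: the interlacing already proved is precisely the two inequalities $N(\mu)-1\le\tilde N(\mu)\le N(\mu)$ for all $\mu$, and evaluating the differences $N(\lambda)-N(\lambda^-)=m$ and $\tilde N(\lambda)-\tilde N(\lambda^-)=\tilde m$ immediately gives $|m-\tilde m|\le1$, where $m,\tilde m$ are the multiplicities of the common eigenvalue $\lambda$ and $E,\tilde E$ the corresponding eigenspaces. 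The equality $\dim(E\cap\tilde E)=\min(m,\tilde m)$ is equivalent to saying that the smaller of $E,\tilde E$ is contained in the larger, and I would prove this by comparing, for an eigenfunction of one operator, the form identity it satisfies with the one needed to be an eigenfunction of the other. In case (a) one checks at once that $E\cap\tilde E=E\cap Y$; if $E\subseteq Y$ then in fact $E\subseteq\tilde E$ and the count is immediate, and if $E\not\subseteq Y$ one fixes $\psi\in E\setminus Y$ (so $\dom(h)=Y\oplus\C\psi$) and observes that an arbitrary $\tilde\psi\in\tilde E$ (necessarily in $Y$) already satisfies $h[\tilde\psi,\varphi]=\lambda\langle\tilde\psi,\varphi\rangle$ for all $\varphi\in Y$, while the one extra identity $h[\tilde\psi,\psi]=\lambda\langle\tilde\psi,\psi\rangle$ needed to conclude $\tilde\psi\in\dom(H)$ with $H\tilde\psi=\lambda\tilde\psi$ follows for free by conjugating the eigenvalue equation for $\psi$ (using $\lambda\in\R$ and Hermiticity of $h$ and of $\langle\cdot,\cdot\rangle$); hence $\tilde E\subseteq E$, $\tilde m\le m$, and $\dim(E\cap\tilde E)=\dim(E\cap Y)=m-1=\min(m,\tilde m)$. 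Case (b) runs identically with $\tilde h=h+|\ell|^2$: one first checks $E\cap\tilde E=E\cap Y=\tilde E\cap Y$; then, if $\ell$ does not vanish identically on $E$, one shows it must vanish identically on $\tilde E$, because otherwise choosing normalised eigenfunctions $\psi\in E$, $\tilde\psi\in\tilde E$ with $\ell(\psi)=\ell(\tilde\psi)=1$ and evaluating the eigenvalue identities for $H$ and $\tilde H$ against one another yields, via Hermitian symmetry, the contradiction $0=-1$; and $\ell\equiv 0$ on $\tilde E$ means $\tilde E\subseteq Y$, hence $\tilde E=E\cap\tilde E\subseteq E$ with $\tilde m=\dim(E\cap Y)=m-1$.

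The interlacing is the routine part; the main obstacle will be pinning down $\dim(E\cap\tilde E)$. The two delicate points there are (i)~verifying that an eigenfunction of $\tilde H$ that satisfies the form identity merely on the codimension-one subspace $Y$ genuinely lies in $\dom(H)$, so that only a single one-dimensional discrepancy can keep it from being an $H$-eigenfunction, and (ii)~ruling out that discrepancy in the relevant direction, which is exactly where Hermitian symmetry is used. Once that is settled, tracking which of $E,\tilde E$ is larger and that the smaller sits inside it yields $\dim(E\cap\tilde E)=\min(m,\tilde m)$ in both cases, and the theorem follows.
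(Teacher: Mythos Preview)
The paper does not actually prove this theorem: it is quoted verbatim from \cite[Theorem~4.3]{BKKM} and used as a black box throughout, with no proof supplied. So there is nothing in the paper to compare your argument against.

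That said, your outline is a correct and standard proof of the result. The interlacing chain via the min--max principle is routine, and your treatment of the eigenspace intersection is the right idea: in case~(a) the identity $E\cap\tilde E=E\cap Y$ is immediate from the form characterisation of eigenfunctions, and the Hermitian-symmetry trick you describe (conjugating the eigenvalue equation for a fixed $\psi\in E\setminus Y$ to supply the single missing form identity for any $\tilde\psi\in\tilde E$) is exactly what forces $\tilde E\subseteq E$ when $E\not\subseteq Y$. Case~(b) with $\tilde h=h+|\ell|^2$ works the same way, and your contradiction argument ruling out $\ell\not\equiv 0$ on both $E$ and $\tilde E$ simultaneously is clean. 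One small wrinkle worth spelling out when you write this up in full: you should check that the form identity $h[\tilde\psi,\varphi]=\lambda\langle\tilde\psi,\varphi\rangle$ for all $\varphi\in\dom(h)$ really does place $\tilde\psi$ in $\dom(H)$; this is precisely the first representation theorem for closed semibounded forms, and it is what makes the passage from ``satisfies the form identity'' to ``is an eigenfunction of the operator'' legitimate.
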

We will use the following notation to represent the metric graph equipped with respective vertex conditions. 
\begin{center}
\begin{tabular}{ |c|c| } 
 \hline
 Metric Graph & Vertex Conditions \\ 
 \hline
 $\Gamma^s$ & Standard  \\
 $\Gamma^a$ & Anti-Standard \\
  $\Gamma^d$ & Dirichlet  \\ 
 $\Gamma^n$ & Neumann \\
 $\Gamma^\delta$  & $\delta$ \\
 $\Gamma^{\delta'}$  & $\delta'$ \\
 \hline
\end{tabular}
\end{center}
Furthermore, if arbitrary vertex conditions are assumed at all other vertices of $\Gamma$ except $v_1$ where Dirichlet conditions are imposed,  we will use the notation $\lambda_k(\Gamma;v_1)$ to represent the eigenvalues for this graph. Moreover, if more than one vertex is equipped with Dirichlet condition, we will denote the eigenvalues by $\lambda_k(\Gamma;D)$, where $D$ is the subset of vertex set $V$ containing all Dirichlet vertices. We will only consider the case when the degree one vertex are endowed with Dirichlet conditions. \\

Let $\Gamma^n$ be the metric graph obtained from $\Gamma^a$ by imposing Neumann conditions at a vertex $v_0$. Then the corresponding quadratic form for $\Gamma^a$ and $\Gamma^n$ are  defined by the same expression \eqref{qneumann}. However, the domain $D(h^a)$ is a co-dimension one subspace of $D(h^n)$. Thus by \textcolor{blue}{Theorem} \eqref{berk1}, 
\begin{equation*}
    \lambda_k(\Gamma^n) \leq \lambda_k(\Gamma^a) \leq \lambda_{k+1}(\Gamma^n) \leq \lambda_{k+1}(\Gamma^a).
\end{equation*}
Moverover, for any Neumann graph, graph equipped with Neumann conditions, obtained from $\Gamma^a$ the domain $D(h^a)$ is a co-dimension $|V|$ subspace of $D(h^n)$, then applying the \textcolor{blue}{Theorem} \eqref{berk1} we obatined the following inequality.
\begin{equation*}
    \lambda_k(\Gamma^n) \leq \lambda_k(\Gamma^a) \leq \lambda_{k+|V|}(\Gamma^n). 
\end{equation*}
Let $\Gamma^{\delta'}$ be the graph obtained from the Neumann graph $\Gamma^n$ by imposing the $\delta'$-type conditions at vertex $v_0$ in $\Gamma^n$ with strength $\alpha'_0 >0 $, then the domains of the quadratic forms are same, $D(h^n)=D(h^{\delta'})$, and 
$$h^n[\varphi]=\int_\Gamma |\varphi'(x)|^2 \,dx \leq \int_\Gamma |\varphi'(x)|^2 \,dx+ \frac{1}{\alpha'_0} \left|\sum \limits_{x_i \in v_0}\varphi(x_i)\right|^2=h^{\delta'}[\varphi].$$
Since $D(h^a) \subset_1 D(h^n)$ and on this domain $h^n=h^{\delta'}$. Thus, $h^{\delta'}$ is a positive rank-$1$ perturbation of the form $h^n$. Therefore, 
$$\lambda_k(\Gamma^n) \leq \lambda_k(\Gamma^{\delta'}) \leq \lambda_{k+1}(\Gamma^n).$$
Similarly if $\alpha'_0 <0 $, then 
$$h^{\delta'}[\varphi]=\int_\Gamma |\varphi'(x)|^2 \,dx+ \frac{1}{\alpha'_0} \left|\sum \limits_{x_i \in v_0}\varphi(x_i)\right|^2 \leq \int_\Gamma |\varphi'(x)|^2 \,dx=h^{n}[\varphi],$$
and $D(h^a) \subset_1 D(h^{\delta'})$ and on this domain $h^n=h^{\delta'}$. Thus, $h^{n}$ is a positive rank-$1$ perturbation of the form $h^{\delta'}$. Therefore, 
$$\lambda_k(\Gamma^{\delta'}) \leq \lambda_k(\Gamma^{n}) \leq \lambda_{k+1}(\Gamma^{\delta'}).$$
An equality between a Neumann and a $\delta'$- eigenvalue is possible only if the eigenspace of
the Neumann eigenvalue contains a function which satisfy the balance condition at $v_0$, or, equivalently, the eigenspace of the $\delta'$-eigenvalue contains a function whose derivative vanishes at $v_0$.\\

Let $\Gamma^n$ be the Neumann graph and obtain the graph $\Gamma^d$ by prescribing the Dirichlet conditions at a vertex $v_0$ in $\Gamma^n$. Since the quadratic forms of $\Gamma^n$ and $\Gamma^d$ are defined by the same expression and agree on the domain $D(h^d) \subset_1 D(h^n)$. Thus,  
$$\lambda_k(\Gamma^n) \leq \lambda_k(\Gamma^d;v_0) \leq \lambda_{k+1}(\Gamma^n).$$
For the Dirichlet graph $\Gamma^d$ and the graph  $\Gamma^a$ obtained from $\Gamma^d$ by imposing the anti-standard vertex condition at $v_0$ in $\Gamma^d$, the respective forms $h^d$ and $h^a$ coincides on $D(h^d) \subset_1 D(h^a)$. Hence, 
$$\lambda_k(\Gamma^a) \leq \lambda_k(\Gamma^d) \leq \lambda_{k+1}(\Gamma^a).$$
The equality is attained when the eigenspace of the Dirichlet eigenvalue contains a function such that its derivatives are continuous at $v_0$, or, equivalently, the eigenspace of the eigenvalue of $\Gamma^a$ contains a function that vanishes at $v_0$. \\
Moreover, for the Dirichlet graph $\Gamma^d$ and the graph  $\Gamma^{\delta'}$ obtained from $\Gamma^d$ by imposing the  $\delta'$-type condition at $v_0$ in $\Gamma^d$, the respective forms $h^d$ and $h^{\delta'}$ coincides on $D(h^d) \subset_1 D(h^{\delta'})$. Hence, 
$$\lambda_k(\Gamma^{\delta'}) \leq \lambda_k(\Gamma^d) \leq \lambda_{k+1}(\Gamma^{\delta'}).$$ 
Let $\Gamma^{\delta'}$ be finite compact metric graph with arbitrary self-adjoint vertex conditions at each vertex, except $v_0$ where $\delta'$-condition are imposed with strength $\alpha'_{0}$, and obtain a graph $\Gamma^{\delta}$ from $\Gamma^{\delta'}$ by imposing $\delta$-condition at $v_0$ with strength $\alpha_{0}$. Let $h^{\delta'}$ and $h^{\delta}$ denote the quadratic forms with domain $D(h^{\delta'})$ and $D(h^{\delta})$, respectively. Then the domain $D(h^{\delta})$ is a subspace of $D(h^{\delta'})$, and on $D(h^{\delta})$, we have 
\begin{align*}
h^{\delta}-h^{\delta'}&=\alpha_{0}|\varphi(v_0)|^2- \frac{1}{\alpha'_{0}}\left| \sum\limits_{x_i \in v_0} \varphi(x_i) \right|^2 \\
&= \alpha_{0}|\varphi(v_0)|^2- \frac{1}{\alpha'_{0}}\left| d_{v_0} \varphi(v_0) \right|^2 \\
&= |\varphi(v_0)|^2 \left(\alpha_{0}  - \frac{d_{v_0}^2}{\alpha'_{0}}\right) 
\end{align*}
Now, if $\alpha_{0}  - \frac{d_{v_0}^2}{\alpha'_{0}} \geq 0$, then $h^{\delta} \geq h^{\delta'}$.\\
Moreover, if  $\alpha_{0}  = \frac{d_{v_0}^2}{\alpha'_{0}}$, then $h^{\delta}=h^{\delta'}$ on $D(h^{\delta}) \subset_1 D(h^{\delta'})$, and by rank one perturbation the following interlacing inequalities holds 
$$\lambda_k(\Gamma^{\delta'}) \leq \lambda_k(\Gamma^{\delta}) \leq \lambda_{k+1}(\Gamma^{\delta'}). $$

For our later purpose, we provide the explicit eigenvalues of an interval $I$ and the path graph $P$.
\begin{example}
Let $I$ be a metric graph consist of single edge of length $\ell$, the eigenvalues for the operator $-\frac{d}{dx^2}$ on an interval $[0,\ell]$ are given by, 
\begin{align*}
    \lambda_k(I,v_1,v_2)&=\frac{k^2\pi^2}{\ell^2}, \quad k \geq 1, \quad  \text{Dirichlet Conditions}\\
    \lambda_{k+1}(I^s)&=\frac{k^2\pi^2}{\ell^2}, \quad k \geq 0, \quad  \text{Standard Conditions} \\
    \lambda_{k+1}(I^s)&=\frac{k^2\pi^2}{\ell^2}, \quad k \geq 0, \quad  \text{Neumann Conditions} \\
     \lambda_k(I^a)&=\frac{k^2\pi^2}{\ell^2}, \quad k \geq 1, \quad  \text{Anti-standard Conditions} 
\end{align*}

\end{example}

\begin{example}
Let $P$ be the path graph with standard vertex condition prescribed at each vertex. Due to standard vertex condition the eigenvalue problem for Laplacian $-\frac{d^2}{dx_e^2}$ on the path $P$ can be identified with eigenvalue problem on an interval of length $L(P)$ for the operator $-\frac{d^2}{dx^2}$ subject to Neumann boundary condition prescribed at both endpoints. Thus the eigenvalues of path graph $P$ are given by 
$$\lambda_{k+1}(P^s)=\frac{k^2 \pi^2}{L(P)^2}, \quad  k \geq 0.$$ 
\end{example}
\begin{example}
The eigenvalues for $-\frac{d^2}{dx_e^2}$ on the path graph $P$ of length $L(P)$ with anti standard vertex conditions imposed at each vertex coincides with the eigenvalues of $-\frac{d^2}{dx^2}$ on an interval of length $L(P)$ with Dirichlet conditions at both endpoints. Thus the eigenvalues of $P$ are given by 
$$\lambda_k(P^a)=\frac{k^2\pi^2}{L(P)^2}, \quad k \geq 1.$$

\end{example}
\section{Surgical Transformations}
In this section, we present the tools that make some fundamental changes in the geometry of the underlying metric graph, such as lengthening (or shortening) an edge, gluing together vertices; these tools will alter the given metric graph into some other graph for which the spectrum is known.  We discuss the effects of surgery principles on the spectrum of a Laplacian on a finite compact quantum graph with arbitrary self-adjoint vertex conditions ($\delta$-type, $\delta'$-type). For the metric graph equipped with $\delta$-conditions, these surgical transformations and their effect on the spectrum were studied in \cite{BKKM}. Here, we would like to investigate how some surgical operations affect the spectrum of a given metric graph $\Gamma$ and whether we can use a set of surgical operations to transform the graph into a simpler one by precisely predicting the corresponding change in the spectrum.\\
\subsection{Operation changing vertex conditions}
In this part of a section, we study the eigenvalue's dependence on the vertex conditions. The operation changing the vertex conditions does not affect the total length of a metric graph. However, it changes its connectivity in case of gluing of vertices. First, we study the dependence of eigenvalues of Laplacian on the $\delta'$-coupling parameter $\alpha'_m$. This operation of changing the $\delta$-potential at a vertex $v$ is well understood. Let $\Gamma^\delta_{\tilde{\alpha}}$ be obtained from $\Gamma^\delta_{\alpha}$ by changing the interaction strength at a vertex $v$ from $\alpha$ to $\tilde{\alpha}$. If $\infty < \alpha \leq \tilde{\alpha} \leq \infty$, then the following interlacing inequalities were established in \cite[\textcolor{blue}{ Theorem 3.4}]{BKKM} and \cite[\textcolor{blue}{Theorem 3.1.8}]{BK1}.
\begin{equation} \label{interlacing1} 
    \lambda_k(\Gamma^\delta_\alpha) \leq \lambda_k(\Gamma^\delta_{\tilde{ \alpha}}) \leq \lambda_k(\Gamma^\delta;v) \leq \lambda_{k+1}(\Gamma^\delta_\alpha)
\end{equation}
The repeated application of \eqref{interlacing1} leads to the following inequalities.
\begin{equation} \label{interlacing2} 
    \lambda_k(\Gamma^\delta_\alpha) \leq  \lambda_k(\Gamma^d) \leq \lambda_{k+|V|}(\Gamma^\delta_\alpha)
\end{equation}
Let $\Gamma$ and $\tilde{ \Gamma}$ be two quantum graphs with the same underlying metric graph but with different strengths $\alpha'_m$ in the vertex conditions. Then the domains of the quadratic forms for $\Gamma$ and $\tilde{\Gamma}$ are the same. That is, $D(h)=D(\tilde{h})$. Thus,  any $\varphi \in D(h)$ which minimizes the Rayleigh quotient for $\lambda_k(\Gamma)$ will also be in $D(\tilde{h})$ and will also minimize the Rayleigh quotient for $\lambda_k(\tilde{\Gamma})$.
The following theorem shows the spectrum's dependence on the vertex parameter $\alpha'_m$. 

\begin{theorem} \label{strength}
Let $\Gamma$ be finite, compact and connected metric graph with local, self-adjoint vertex condition at all vertices of $\Gamma$ except $v_0$, where $\delta'$-condition is imposed with strength $\alpha_0'$. Let $\tilde{\Gamma}$ be a graph obtained from $\Gamma$ by changing the strength  from $\alpha_0'$ to $\tilde{\alpha}'$, and let $\Gamma_0$ be obtained from $\Gamma$ by imposing anti-standard vertex conditions at $v_0$.
\begin{enumerate}
    \item  If   $0 < \alpha_0' < \tilde{\alpha}_0'$, or $ \alpha'_0 < \tilde{\alpha}_0' < 0$, then 
\begin{equation} \label{interlacing3}
\lambda_k(\tilde{ \Gamma}) \leq \lambda_k(\Gamma) \leq \lambda_k(\Gamma_0) \leq \lambda_{k+1}(\tilde{ \Gamma}).    
\end{equation}
\item If $\alpha_0' < 0$ and $\tilde{\alpha}_0' > 0$,  then
\begin{equation} \label{interlacing4}
\lambda_k( \Gamma) \leq \lambda_k(\tilde{\Gamma}) \leq \lambda_k(\Gamma_0) \leq \lambda_{k+1}( \Gamma).    
\end{equation}
\end{enumerate}

\end{theorem}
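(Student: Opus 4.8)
The plan is to realise the quadratic forms of all three graphs on a common footing and invoke Theorem~\ref{berk1} three times. First I would record the structural facts. The form domains of $\Gamma$ and $\tilde\Gamma$ coincide (both are $H^1(\Gamma)$ subject to the unchanged self-adjoint conditions at the vertices other than $v_0$), and the forms differ only in the $v_0$-term: $h_\Gamma[\varphi]-h_{\tilde\Gamma}[\varphi]=\bigl(\tfrac{1}{\alpha_0'}-\tfrac{1}{\tilde\alpha_0'}\bigr)\bigl|\sum_{x_i\in v_0}\varphi(x_i)\bigr|^2$. Moreover the form $h_{\Gamma_0}$ of the anti-standard graph is precisely the restriction of either $h_\Gamma$ or $h_{\tilde\Gamma}$ to the subspace $Y:=\{\varphi\in H^1(\Gamma):\sum_{x_i\in v_0}\varphi(x_i)=0\}$, on which the extra $v_0$-term vanishes. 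Since the balance condition at $v_0$ is a single nontrivial linear constraint, $Y\subset_1 D(h_\Gamma)=D(h_{\tilde\Gamma})$ and $D(h_{\Gamma_0})=Y$, with $h_{\Gamma_0}=h_\Gamma=h_{\tilde\Gamma}$ on $Y$. Hence $h_{\Gamma_0}$ is a positive rank-$1$ perturbation of $h_\Gamma$ and of $h_{\tilde\Gamma}$, and Theorem~\ref{berk1} yields, for all $k\ge1$,
$$\lambda_k(\Gamma)\le\lambda_k(\Gamma_0)\le\lambda_{k+1}(\Gamma),\qquad \lambda_k(\tilde\Gamma)\le\lambda_k(\Gamma_0)\le\lambda_{k+1}(\tilde\Gamma).$$

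Next I would compare $h_\Gamma$ and $h_{\tilde\Gamma}$ directly. As these forms share the domain $H^1(\Gamma)$ and agree on $Y$, the comparison reduces entirely to the sign of $\tfrac{1}{\alpha_0'}-\tfrac{1}{\tilde\alpha_0'}$. Because $t\mapsto1/t$ is strictly decreasing on $(0,\infty)$ and on $(-\infty,0)$ but jumps from $-\infty$ to $+\infty$ across $0$, in case~(1)---whether $0<\alpha_0'<\tilde\alpha_0'$ or $\alpha_0'<\tilde\alpha_0'<0$---one has $\tfrac{1}{\alpha_0'}\ge\tfrac{1}{\tilde\alpha_0'}$, so $h_\Gamma\ge h_{\tilde\Gamma}$ and $h_\Gamma$ is a positive rank-$1$ perturbation of $h_{\tilde\Gamma}$; by Theorem~\ref{berk1}, $\lambda_k(\tilde\Gamma)\le\lambda_k(\Gamma)\le\lambda_{k+1}(\tilde\Gamma)\le\lambda_{k+1}(\Gamma)$. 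In case~(2), $\alpha_0'<0<\tilde\alpha_0'$ forces $\tfrac{1}{\alpha_0'}<0<\tfrac{1}{\tilde\alpha_0'}$, so $h_{\tilde\Gamma}\ge h_\Gamma$ and $h_{\tilde\Gamma}$ is a positive rank-$1$ perturbation of $h_\Gamma$; by Theorem~\ref{berk1}, $\lambda_k(\Gamma)\le\lambda_k(\tilde\Gamma)\le\lambda_{k+1}(\Gamma)\le\lambda_{k+1}(\tilde\Gamma)$.

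Finally I would assemble the three chains. In case~(1): $\lambda_k(\tilde\Gamma)\le\lambda_k(\Gamma)$ from the $\Gamma$--$\tilde\Gamma$ comparison, $\lambda_k(\Gamma)\le\lambda_k(\Gamma_0)$ from the $\Gamma_0$--$\Gamma$ comparison, and $\lambda_k(\Gamma_0)\le\lambda_{k+1}(\tilde\Gamma)$ from the $\Gamma_0$--$\tilde\Gamma$ comparison; chaining gives \eqref{interlacing3}. In case~(2): $\lambda_k(\Gamma)\le\lambda_k(\tilde\Gamma)$, then $\lambda_k(\tilde\Gamma)\le\lambda_k(\Gamma_0)$, then $\lambda_k(\Gamma_0)\le\lambda_{k+1}(\Gamma)$; chaining gives \eqref{interlacing4}. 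There is no deep obstacle here: the proof is essentially bookkeeping with Theorem~\ref{berk1}, and the only point demanding care is correctly tracking the direction of the form inequality through the non-monotone map $t\mapsto1/t$, hence which of $\Gamma$ and $\tilde\Gamma$ plays the role of the perturbed operator in each invocation.
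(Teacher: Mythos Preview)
Your proof is correct and follows essentially the same approach as the paper: compute $h_\Gamma-h_{\tilde\Gamma}$ as a multiple of $\bigl|\sum_{x_i\in v_0}\varphi(x_i)\bigr|^2$, determine its sign from $\tfrac{1}{\alpha_0'}-\tfrac{1}{\tilde\alpha_0'}$, and use that $D(h_{\Gamma_0})$ is a codimension-one subspace of $D(h_\Gamma)=D(h_{\tilde\Gamma})$ on which all three forms agree. The only difference is presentational: you invoke Theorem~\ref{berk1} uniformly for all three comparisons, whereas the paper appeals directly to the min-max principle for the first inequality in each chain and reserves the rank-one perturbation argument for the interlacing step $\lambda_k(\Gamma_0)\le\lambda_{k+1}(\tilde\Gamma)$ (resp.\ $\le\lambda_{k+1}(\Gamma)$).
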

\begin{proof}
Let $h, \tilde{h} \ \text{and} \ h_0$ be the quadratic forms of corresponding Laplacian associated to graphs $\Gamma,\tilde{\Gamma} \ \text{and} \ \Gamma_0 $ respectively, then we have
\begin{equation*}
h[\varphi]=h_0[\varphi]+\frac{1}{\alpha_0'} \left|\sum \limits_{x_i \in v_0}\varphi(x_i)\right|^2 \text{and} \ \  \tilde{h}[\varphi]=h_0[\varphi]+\frac{1}{\tilde{\alpha}_0'} \left|\sum \limits_{x_i \in v_0}\varphi(x_i)\right|^2, 
\end{equation*}
on the appropriate sub-spaces of $H^1(\Gamma)=\bigoplus\limits_{e \in E} H^1(e)$. Since the underlying metric graph is the same and the domain of the quadratic form does not depend on strength, therefore
$$D(h)=D(\tilde{h})\ \text{and} \ D(h_0)=\left \{\varphi \in D(h): \sum \limits_{x_i \in v_0}\varphi(x_i)=0 \right \},$$
and 
\begin{equation*}
h[\varphi]-\tilde{h}[\varphi]=\left|\sum\limits_{e_{i} \in E_v} \varphi_i(v)\right|^2\left(\frac{1}{\alpha_0'}-\frac{1}{\tilde{\alpha}_0'}\right).    
\end{equation*}
Since $0 < \alpha_0' < \tilde{\alpha}_0'$, or $ \alpha'_0 < \tilde{\alpha}_0' < 0 \implies  \frac{1}{\alpha_0'}-\frac{1}{\tilde{\alpha}_0'} > 0$. Thus  $h \geq \tilde{h}$  for all $\varphi$, and the inequality follows from the min-max description of eigenvalues. Furthermore, we have
$$h[\varphi] = h_0[\varphi] \ \text{and} \ \tilde{h}[\varphi] = h_0[\varphi] \quad \text{for all} \quad \varphi \in D(h_0).$$ 
The domain $D(h)$ is larger than the domain $D(h_0)$ and the corresponding quadratic forms $h$ and $h_0$ agree on $D(h_0)$. Therefore, minimizing the Rayleigh quotient over smaller space results in larger eigenvalues. The last inequality follows from the fact that $D(h_0)$ is a subspace of $D(h)$ of co-dimension one.\\
If $\alpha_0' < 0$ and $\tilde{\alpha}_0' > 0$, then $\frac{1}{\alpha_0'} <0 $ and $\frac{1}{\tilde{ \alpha}_0'} >0  \implies \left(\frac{1}{\alpha_0'}-\frac{1}{\tilde{\alpha}_0'}\right)<0 $. Thus, $h \leq \tilde{h}$ and hence by min-max principle, $\lambda_k(\Gamma) \leq \lambda_k(\tilde{\Gamma}).$ 
\end{proof}
Equality between an eigenvalue of $\Gamma$ and $\Gamma_0$ is only possible if the eigenspace of $\lambda_k(\Gamma)$ contains an eigenfunction $\varphi $ such that at vertex $v_0$ the sum of components of $\varphi$ living on the incident edges to $v_0$ is zero. 

\begin{definition} {\textbf{(Gluing).}}
The gluing of vertices is a surgical transformation in which a graph $ \tilde{\Gamma}$ is obtained by identifying some vertices $v_1,v_2,\cdots,v_n$ of a graph $\Gamma$. If the strengths $\alpha_m$ for $m=1,2,\cdots,n$  were associated to the vertices $v_1,v_2,\cdots,v_n$, then the glued vertex $v_0$ will be specified with strength $\alpha_0=\sum\limits_{m=1}^{n} \alpha_m$.
\end{definition}
The reverse operation to gluing of vertices is named as splitting, in which a vertex $v_0$ of $\tilde{\Gamma}$ is cut into $n$ descendent vertices $v_1,v_2,\cdots,v_n$ to obtain a new graph $\Gamma$. The obtained graph $\Gamma$, using this surgery transformation, is not unique in general, as the incident edges to $v_0$ may be assigned to descendent vertices in several different ways. Furthermore, if $\varphi$ is a function on $\tilde{\Gamma}$ satisfying $\delta$-type conditions at vertex $v_0$, one can lift this function on the graph $\Gamma$ by assigning the strengths $$\alpha_m=-\dfrac{\sum\limits_{e \sim v_m}\partial \varphi_e(v_0)}{\varphi(v_0)},$$  to the descendent vertices $v_m$ for $m=1,2,\cdots, n$ in $\Gamma$. If the function $\varphi$ vanishes at $v_0$, then the descendent vertices will be endowed with Dirichlet conditions \cite{BKKM}.  
Similarly, if the graph $\tilde{\Gamma}$ is equipped with $\delta'$-type conditions then any function on $\tilde{\Gamma}$ can be lifted to $\Gamma$ by specifying the  strengths $\alpha'_m$
$$\alpha'_m=-\dfrac{\sum\limits_{e \sim v_m} \varphi_e(v_0)}{\partial \varphi(v_0)},$$  to the descendent vertices $v_m$ for $m=1,2,\cdots, n$ in $\Gamma$.\\
Let $\tilde{\Gamma}^{\delta}$ be the graph obtained by gluing two vertices of $\Gamma^{\delta}$, then the following interlacing inequalities can be easily proved using variational principles. (c.f., e.g.,  \cite[\textcolor{blue}{Theorem  3.4}]{BKKM}, \cite[ \textcolor{blue}{Theorem 3.1.10}]{BK1}).
\begin{equation}\label{dinterlac}
    \lambda_k(\Gamma^\delta) \leq \lambda_k(\tilde{ \Gamma}^\delta)  \leq \lambda_{k+1}(\Gamma^\delta) \leq \lambda_{k+1}(\tilde{ \Gamma}^\delta).
\end{equation}
For each metric graph there exist a unique graph obtained by identifying all the vertices to a single vertex, such a graph is known as flower graph. Note that, there is a one-to-one correspondence between a unique flower graph and the class of graphs with fix total number of edges, total length of metric graph and length of each edge, $(|E|,\mathcal{L},\ell_e)$. Let $\Gamma_f^{\delta}$ be the flower graph corresponding to metric graph $\Gamma^{\delta}$, then the repeated applications of \eqref{dinterlac} leads to the following inequalities.
\begin{equation}\label{dflower}
    \lambda_{k-|V|+1}(\Gamma_{f}^\delta) \leq \lambda_k(\Gamma^\delta) \leq \lambda_k( \Gamma_{f}^\delta)  \leq \lambda_{k+|V|-1}(\Gamma^\delta).
\end{equation}
The gluing operation does not depend on the sign of the strengths $\alpha_m$ for the $\delta$-graph. However, for the $\delta'$-graph we have the following theorem, which shows that the effect of gluing on the spectrum depends on the sign of the strengths $\alpha'_m$.
\begin{theorem}\cite[\textcolor{blue}{Theorem 4.2}]{RS} \label{roh}.
Let $\Gamma^{\delta'}$ be the metric graph equipped with local, arbitrary self-adjoint vertex conditions at each vertex, except $v_1$ and $v_2$, where $\delta'$-type conditions are imposed with strengths $\alpha'_1$ and $\alpha'_2$. Let $\tilde{\Gamma}^{\delta'}$ be graph obtained by gluing vertices $v_1$ and $v_2$ of $\Gamma^{\delta'}$, producing new vertex $v_0$ with strength $\alpha'_0=\alpha'_1+\alpha'_2$. Then for all $k \geq 1$ the following assertions hold.
\begin{enumerate}
    \item If $\alpha'_1, \alpha'_2 >0$ then $\lambda_k(\tilde{\Gamma}^{\delta'}) \leq \lambda_k(\Gamma^{\delta'}).$
    \item If $\alpha'_1, \alpha'_2 < 0$ then $\lambda_k(\Gamma^{\delta'}) \leq \lambda_k(\tilde{\Gamma}^{\delta'}).$
    \item If $\alpha'_1 \cdot \alpha'_2 < 0$ and $\alpha_0 >0 $ then $\lambda_k(\Gamma^{\delta'}) \leq \lambda_k(\tilde{\Gamma}^{\delta'}).$
    \item If $\alpha'_1 \cdot \alpha'_2 <0 $ and $\alpha'_0 <0 $ then $\lambda_k(\tilde{\Gamma}^{\delta'}) \leq \lambda_k(\Gamma^{\delta'}).$
    \item  If $\alpha'_1 \cdot \alpha'_2 <0 $ and $\alpha'_0 =0 $ then $\lambda_k(\Gamma^{\delta'}) \leq \lambda_k(\tilde{\Gamma}^{\delta'}).$
    \item If $\alpha'_1 \cdot \alpha'_2=0$ then $\lambda_k(\tilde{\Gamma}^{\delta'}) \leq \lambda_k(\Gamma^{\delta'}).$
\end{enumerate}
\end{theorem}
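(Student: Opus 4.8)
The plan is to reduce each of the six cases to an application of the rank-one (or, via iteration, rank-two) perturbation machinery of Theorem~\ref{berk1}, exactly in the spirit of the interlacing computations carried out in Section~2 for the pair $\Gamma^n$/$\Gamma^{\delta'}$. The central observation is that gluing $v_1$ and $v_2$ into $v_0$ does not change the form expression $\int_\Gamma|\varphi'|^2\,dx$, nor the Neumann-type part of the $\delta'$ form on the other vertices; it only alters the penalty term attached to $v_1,v_2$ versus the one attached to $v_0$, and it shrinks the form domain by imposing the extra continuity of the normal derivative across the new vertex. Concretely, writing $h_0$ for the common ``anti-standard at $v_0$'' form (and noting that $D(h^{\delta'}_{\tilde\Gamma})\subset_1 D(h^{\delta'}_\Gamma)$ when $\alpha_0'\neq 0$, because gluing adds precisely one derivative-continuity constraint), one has on the glued graph
\[
\tilde h[\varphi]=h_0[\varphi]+\frac{1}{\alpha_1'+\alpha_2'}\Bigl|\sum_{x_i\in v_0}\varphi(x_i)\Bigr|^2,
\]
while on the ungluued graph the relevant vertex contribution is $\frac{1}{\alpha_1'}|\sum_{x_i\in v_1}\varphi(x_i)|^2+\frac{1}{\alpha_2'}|\sum_{x_i\in v_2}\varphi(x_i)|^2$, which on the subspace where $\varphi$ is already continuous at $v_0$ (so the two partial sums combine) becomes a single quadratic form in one linear functional. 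Thus on that common subspace the two forms differ by a rank-one term, and the sign of its coefficient is governed by comparing $\frac{1}{\alpha_1'+\alpha_2'}$ with the ``effective'' reciprocal coming from $\alpha_1',\alpha_2'$.

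First I would dispose of case (6): if $\alpha_1'\cdot\alpha_2'=0$, say $\alpha_2'=0$, then $v_2$ already carries the balance (anti-standard) condition, so gluing simply merges it into $v_0$ and the glued form has domain a subspace of the original one on which the forms agree up to the single penalty term at $v_0$; a direct min--max / rank-one comparison gives $\lambda_k(\tilde\Gamma^{\delta'})\le\lambda_k(\Gamma^{\delta'})$. For the remaining cases with $\alpha_1',\alpha_2'\neq 0$ I would use the interpolation through the anti-standard graph $\Gamma_0$ (anti-standard at both $v_1,v_2$, equivalently at $v_0$), already available from Theorem~\ref{strength}: passing from $\Gamma^{\delta'}$ to $\Gamma_0$ is a monotone rank-two move whose direction is dictated by the signs of $1/\alpha_1'$ and $1/\alpha_2'$, and passing from $\Gamma_0$ to $\tilde\Gamma^{\delta'}$ is a single rank-one move whose direction is dictated by the sign of $1/(\alpha_1'+\alpha_2')$. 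Chaining these and using $D(h_0)\subset_1 D(\tilde h)$ to insert the extra derivative-continuity constraint yields the claimed one-sided inequalities: when $\alpha_1',\alpha_2'$ have the same sign, $1/\alpha_1'+1/\alpha_2'$ and $1/(\alpha_1'+\alpha_2')$ have that sign too and the comparison with $\Gamma_0$ goes the same way on both legs (giving (1) for positive, (2) for negative); when the signs differ, the key algebraic fact $\frac{1}{\alpha_1'}+\frac{1}{\alpha_2'}-\frac{1}{\alpha_1'+\alpha_2'}=\frac{(\alpha_1'+\alpha_2')^2-\alpha_1'\alpha_2'}{\alpha_1'\alpha_2'(\alpha_1'+\alpha_2')}$ combined with $\alpha_1'\alpha_2'<0$ pins down the sign of the net rank-one coefficient in terms of the sign of $\alpha_0'=\alpha_1'+\alpha_2'$, producing (3), (4), (5).

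The step I expect to be the main obstacle is the bookkeeping in the mixed-sign cases (3)--(5): here the two ``legs'' of the comparison (ungluued $\delta'$ $\to$ anti-standard, and anti-standard $\to$ glued $\delta'$) push the eigenvalues in opposite directions, so one cannot simply compose two monotonicities. One must instead compare $\Gamma^{\delta'}$ and $\tilde\Gamma^{\delta'}$ \emph{directly}: on the codimension-one subspace $Y=\{\varphi\in D(h^{\delta'}_\Gamma):\ \varphi \text{ continuous and derivative-continuous at }v_0\}$ both forms restrict to $h_0$ plus the \emph{same} penalty built from the single functional $\varphi\mapsto\sum_{x_i\in v_0}\varphi(x_i)$ only when the coefficients match, which they do not in general; the honest route is to realise $\tilde h$ as a rank-one perturbation of the restriction of $h^{\delta'}_\Gamma$ to $Y$ and separately track the codimension count, then invoke Theorem~\ref{berk1} once more. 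Verifying that the relevant form inequality ($\tilde h\ge$ or $\le$ the ungluued form on the appropriate common domain) has the sign asserted, and that the perturbation is genuinely rank one (not rank two) after the derivative-continuity constraint is imposed, is the delicate part; the equality/multiplicity statement of Theorem~\ref{berk1} is not needed here, only the interlacing half, so once the sign of the rank-one coefficient is correctly identified in each regime the inequalities follow immediately from min--max.
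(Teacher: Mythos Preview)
The paper does not give its own proof of this statement; it is quoted verbatim from \cite{RS} and used as a tool elsewhere. So there is no ``paper's proof'' to compare against.

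That said, your proposal contains a genuine error. You write that ``$D(h^{\delta'}_{\tilde\Gamma})\subset_1 D(h^{\delta'}_\Gamma)$ when $\alpha_0'\neq 0$, because gluing adds precisely one derivative-continuity constraint.'' This is false: derivative continuity at a $\delta'$ vertex is an \emph{operator}-domain condition, not a \emph{form}-domain condition. For $\delta'$ interactions with nonzero strength the form domain is all of $\bigoplus_{e}H^1(e)$ (see \eqref{qdelta'} and the surrounding text), so in cases (1)--(4) the two form domains are \emph{equal}, not of codimension one. Your whole architecture of ``insert the extra derivative-continuity constraint, then use Theorem~\ref{berk1}'' is therefore built on a wrong domain picture, and this is precisely why you find the mixed-sign cases (3)--(5) intractable by your route.

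The actual argument is simpler than what you attempt. With $a=\sum_{x_i\in v_1}\varphi(x_i)$ and $b=\sum_{x_i\in v_2}\varphi(x_i)$, one has on the common domain
\[
\tilde h[\varphi]-h[\varphi]=\frac{1}{\alpha'_0}|a+b|^2-\frac{1}{\alpha'_1}|a|^2-\frac{1}{\alpha'_2}|b|^2,
\]
which is a quadratic form in $(a,b)$ whose $2\times 2$ matrix has determinant \emph{identically zero} (compute it), hence is rank one and semidefinite. Its sign is read off from a diagonal entry, e.g.\ $\frac{1}{\alpha'_0}-\frac{1}{\alpha'_1}=\frac{-\alpha'_2}{\alpha'_0\alpha'_1}$, and the case split (1)--(4) falls out immediately by min--max; no interpolation through the anti-standard graph and no delicate ``two legs pushing in opposite directions'' are needed. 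In case (1) the inequality $\tilde h\le h$ is just Cauchy--Schwarz in the form $|a+b|^2\le(\alpha'_1+\alpha'_2)\bigl(\tfrac{|a|^2}{\alpha'_1}+\tfrac{|b|^2}{\alpha'_2}\bigr)$. Cases (5) and (6) are the only ones where a domain inclusion appears (anti-standard at $v_0$ or at $v_2$ imposes a codimension-one balance condition on the \emph{function values}, not derivatives), and there the forms agree on the smaller domain, so Theorem~\ref{berk1} applies directly.
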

\begin{remark} \label{d'flower}
\noindent Let $\Gamma^{\delta'}$ be the metric graph and $\Gamma^{\delta'}_f$ be its corresponding flower graph. Then the following statements hold.
\begin{enumerate}
\item If $\alpha'_m > 0 $ for all $m=1,2,\cdots, |V|$ in $\Gamma^{\delta'}$, then
$$\lambda_k(\Gamma^{\delta'}_f)\leq \lambda_k(\Gamma^{\delta'}).$$ 
\item If $\alpha'_m  < 0$ for all $m=1,2,\cdots, |V|$ in $\Gamma^{\delta'}$, then
$$\lambda_k(\Gamma^{\delta'})\leq \lambda_k(\Gamma^{\delta'}_f).$$ 
\end{enumerate} 
\end{remark}

\subsection{Operations Increasing the total length}
One of the most important quantities of a metric graph is its total length, which contains information about the eigenvalues. Naturally, any change in the length of a metric graph will shift the spectrum of the respective metric graph in some direction. A more profound study on how the change in the length of the metric graph effect the spectrum can be found in \cite{EJ, RS,BKKM}. For the standard vertex condition, the scaling of a metric graph by a factor of $t>1$ reduce the eigenvalues by $t^2$. If $\mathcal{L}(\tilde{\Gamma}^s)=t \mathcal{L}(\Gamma^s)$, then 
$$\lambda_k(\tilde{ \Gamma}^s)=\frac{\lambda_k(\Gamma^s)}{t^2}.$$
Moreover, for general $\delta$-conditions, if the metric graph $\tilde{\Gamma}^{\delta}$ is obtained by scaling each edge of the graph $\Gamma^{\delta}$ by a factor of $t >0$ and coupling parameters $\alpha_m$ at each vertex are scaled with the factor $\frac{1}{t}$, then 
$$\lambda_k(\tilde{\Gamma}^{\delta})= \frac{\lambda_k(\Gamma^{\delta})}{t^2}.$$
Since the quadratic form and its domain depend on the graph's length, the following proposition show that increasing any edge's length of $\Gamma^{\delta'}$ lowers all eigenvalues.
\begin{proposition} \label{length1}
Let $\tilde{\Gamma}^{\delta'} $ be the metric graph obtained from $\Gamma^{\delta'}$ by scaling one of the edge say $e_0$ of $\Gamma^{\delta'}$ by a factor $t > 1$, then 
\begin{equation}
\lambda_k(\tilde{\Gamma}^{\delta'}) \leq \lambda_k(\Gamma^{\delta'}).    
\end{equation}
\end{proposition}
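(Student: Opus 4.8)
The natural strategy is to build a length-preserving injection of the Hilbert space of the longer graph into that of the shorter one, while controlling the Rayleigh quotient, and then invoke the min-max principle \eqref{minmax}. First I would set up coordinates: write $e_0=[0,\ell_{e_0}]$ in $\Gamma^{\delta'}$ and the scaled edge $\tilde e_0=[0,t\ell_{e_0}]$ in $\tilde\Gamma^{\delta'}$, with all other edges and all vertex conditions (including the $\delta'$-strengths $\alpha'_m$) unchanged. Given a test function $\tilde\varphi\in D(\tilde h)$, I would define $\varphi$ on $\Gamma^{\delta'}$ by leaving it unchanged on every edge $e\neq e_0$ and, on $e_0$, setting $\varphi_{e_0}(x)=\tilde\varphi_{\tilde e_0}(tx)$ — i.e. rescaling the longer edge down to length $\ell_{e_0}$. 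The crucial observation is that this map does not change the boundary values $\varphi(x_i)$ at the endpoints of $e_0$ (composition with an affine reparametrisation fixes the endpoint values), so $\varphi$ still lies in $D(h)$: the continuity of derivatives and the balance-type relations in \eqref{delta'} are stated in terms of vertex values and one-sided derivatives, and the vertex values are untouched, while the derivative-continuity condition $\partial\varphi(x_i)=\partial\varphi(x_j)$ is a homogeneous linear relation that is preserved up to the common scaling factor, so it still holds. Hence $\varphi\in D(h)$ whenever $\tilde\varphi\in D(\tilde h)$, and the map is linear and injective.

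Next I would compare the quadratic forms and the norms. On every edge $e\neq e_0$ nothing changes, so I only track the contribution of $e_0$ versus $\tilde e_0$. By the substitution $y=tx$ one gets
\begin{equation*}
\int_0^{\ell_{e_0}}|\varphi_{e_0}'(x)|^2\,dx=t\int_0^{t\ell_{e_0}}\frac{1}{t^2}|\tilde\varphi_{\tilde e_0}'(y)|^2\,dy=\frac{1}{t}\int_0^{t\ell_{e_0}}|\tilde\varphi_{\tilde e_0}'(y)|^2\,dy,
\end{equation*}
and similarly
\begin{equation*}
\int_0^{\ell_{e_0}}|\varphi_{e_0}(x)|^2\,dx=\frac{1}{t}\int_0^{t\ell_{e_0}}|\tilde\varphi_{\tilde e_0}(y)|^2\,dy.
\end{equation*}
Since $t>1$, the Dirichlet-energy contribution of $e_0$ to $h[\varphi]$ is \emph{at most} its contribution to $\tilde h[\tilde\varphi]$ (it is the $\tilde e_0$ integral times $1/t<1$), while the vertex terms $\frac{1}{\alpha'_m}|\sum_{x_i\in v_m}\varphi(x_i)|^2$ are literally unchanged because vertex values are preserved; therefore $h[\varphi]\le \tilde h[\tilde\varphi]$. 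At the same time the $L^2$-norm on $e_0$ strictly decreases by the same factor $1/t$, so $\|\varphi\|^2\le\|\tilde\varphi\|^2$. Combining, $R(\varphi)=h[\varphi]/\|\varphi\|^2$ — here a little care is needed, as shrinking the denominator pushes the quotient up. The clean way is to note that scaling the denominator down does not ruin the bound provided one argues correctly: in fact the correct monotone quantity is obtained by observing $h[\varphi]\le\tilde h[\tilde\varphi]$ and $\|\varphi\|^2\le\|\tilde\varphi\|^2$ are not simultaneously enough, so instead I would not shrink $e_0$ at all but rather \emph{add} a segment.

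Let me therefore adopt the cleaner route: realise $\tilde\Gamma^{\delta'}$ from $\Gamma^{\delta'}$ by inserting a path of length $(t-1)\ell_{e_0}$ in the middle of $e_0$ with standard (degree-two) vertex conditions at the two new junction points, which exactly lengthens $e_0$ to $t\ell_{e_0}$. Then given a minimising subspace $X\subset D(h)$ for $\lambda_k(\Gamma^{\delta'})$, I extend each $\varphi\in X$ to $\tilde\varphi$ on $\tilde\Gamma^{\delta'}$ by setting it constant (equal to the common value at the insertion point, which is an interior point of $e_0$) along the inserted segment. Constancy along the new segment adds zero to $h[\cdot]$ and adds a nonnegative amount to $\|\cdot\|^2$; more importantly, one checks $\tilde h[\tilde\varphi]=h[\varphi]$ and $\|\tilde\varphi\|^2\ge\|\varphi\|^2$, so $R_{\tilde\Gamma}(\tilde\varphi)\le R_\Gamma(\varphi)$. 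The map $\varphi\mapsto\tilde\varphi$ is linear and injective, giving a $k$-dimensional trial subspace $\tilde X\subset D(\tilde h)$ with $\max_{\tilde X}R_{\tilde\Gamma}\le\max_X R_\Gamma=\lambda_k(\Gamma^{\delta'})$; then \eqref{minmax} yields $\lambda_k(\tilde\Gamma^{\delta'})\le\lambda_k(\Gamma^{\delta'})$. The main obstacle is precisely the one flagged above: a naive rescaling of the edge shrinks the $L^2$-norm and the energy by the \emph{same} factor, which does not give a one-sided bound on the quotient; the insertion-of-a-flat-segment trick (legitimate because the added degree-two standard vertices are spectrally invisible, as noted in the preliminaries) is what makes both numerator comparison ($=$) and denominator comparison ($\ge$) point the same way, and one must also verify that the $\delta'$ vertex conditions at the original vertices of $e_0$ are untouched — which they are, since only an interior point of $e_0$ is subdivided.
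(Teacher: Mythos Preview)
Your final argument --- extend each trial function on $\Gamma^{\delta'}$ by a constant along an inserted segment of length $(t-1)\ell_{e_0}$, placed at an interior point of $e_0$ with removable degree-two standard vertices --- is correct and yields the result, but it is not the route the paper takes. The paper instead \emph{stretches} the trial function: given $\psi$ on $\Gamma^{\delta'}$ it sets $\varphi(x)=\psi(x/t)$ on the enlarged edge $e_0'=[0,t\ell_{e_0}]$ and $\varphi=\psi$ on all other edges, then computes $\int_{e_0'}|\varphi'|^2=\tfrac{1}{t}\int_{e_0}|\psi'|^2$ and $\int_{e_0'}|\varphi|^2=t\int_{e_0}|\psi|^2$, with vertex values (hence the $\delta'$ vertex terms in \eqref{qdelta'}) unchanged. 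Both arguments map a minimising $k$-dimensional subspace for $\lambda_k(\Gamma^{\delta'})$ into $D(\tilde h)$ and bound the Rayleigh quotient from above; the difference is that the paper's stretch shrinks the numerator and enlarges the denominator simultaneously, whereas your constant extension leaves the numerator \emph{exactly} equal and only enlarges the denominator. Your version is a bit more transparent and avoids the $1/t$ bookkeeping; the paper's avoids introducing auxiliary vertices.

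Two remarks on the details. First, in your abandoned first attempt the chain rule is miscomputed: if $\varphi_{e_0}(x)=\tilde\varphi_{\tilde e_0}(tx)$ then $\varphi_{e_0}'(x)=t\,\tilde\varphi_{\tilde e_0}'(tx)$, so $\int_{e_0}|\varphi'|^2=t\int_{\tilde e_0}|\tilde\varphi'|^2$, not $\tfrac{1}{t}$ times it; more importantly that map goes the wrong way for min-max (to bound $\lambda_k(\tilde\Gamma^{\delta'})$ from above you need trial functions \emph{on} $\tilde\Gamma^{\delta'}$), so abandoning it was right. Second, your inference $R_{\tilde\Gamma}(\tilde\varphi)\le R_\Gamma(\varphi)$ from $\tilde h[\tilde\varphi]=h[\varphi]$ and $\|\tilde\varphi\|^2\ge\|\varphi\|^2$ tacitly assumes $h[\varphi]\ge 0$; when $h[\varphi]<0$ (which can occur if some $\alpha_m'<0$) enlarging the denominator \emph{raises} a negative quotient. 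The paper's stretching argument carries exactly the same hidden assumption, so you are on equal footing, but be aware of it.
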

\begin{proof}
We parametrize each edge $e_j$ by $[0,\ell_j]$,
let the function $\psi(x)$ minimizes the expression \eqref{qdelta'} for the $\Gamma^{\delta'}$. Let $\varphi(x) \in D(\tilde{h}^{\delta'})$ be defined by 
\begin{equation}
 \varphi(x)=\begin{cases}
       \psi \left(\frac{x}{t}\right) \quad \text{for} \ x \in e'_0=[0,t \ell_0] \\
      \psi(x)  \quad \text{ for } \ x \in \{e\in E:e \neq e_0 \}  
    \end{cases}
\end{equation}
\begin{align*}
  \int_{e'_0}|\varphi(x)|^2dx&=\int^{tl_0}_{0}|\varphi(x)|^2dx=\int^{tl_0}_{0}\left|\psi \left(\frac{x}{t}\right)\right|^2dx=t\int_{e_0}|\psi(x)|^2dx,
  \\ &\implies \int_{e'_0}|\varphi(x)|^2dx > \int_{e_0}|\psi(x)|^2dx,\\
  \int_{e'_0}|\varphi'(x)|^2dx&=\int^{tl_0}_{0}|\varphi'(x)|^2dx=\frac{1}{t^2}\int^{tl_0}_{0} \left|\psi'\left(\frac{x}{t}\right)\right|^2dx=\frac{1}{t}\int_{e_0}|\psi'(x)|^2dx,\\
  & \implies \int_{e'_0}|\varphi'(x)|^2dx < \int_{e_0}|\psi'(x)|^2dx.
\end{align*}
We have constructed $\varphi(x)$ in such a way that the values at the vertices are unchanged $\psi(x_i)=\varphi(x_i)$, and hence 
$$\sum\limits_{m=1}^{|V|}\frac{1}{\alpha'_m} \left|\sum\limits_{x_i \in v_m} \psi(x_i)\right|^2=\sum\limits_{m=1}^{|V|}\frac{1}{\alpha'_m} \left|\sum\limits_{x_i \in v_m} \varphi(x_i)\right|^2$$
\begin{align*}
R(\psi)&=\frac{\sum\limits_{e \in E}\int_{e}|\psi'(x)|^2dx+\sum\limits_{m=1}^{|V|}\frac{1}{\alpha'_m} \left|\sum\limits_{x_i \in v_m} \psi(x_i)\right|^2}{\sum\limits_{e \in E}\int_{e}|\psi(x)|^2dx} \\ 
&=\frac{\sum\limits_{e \in E:e \neq e_0}\int_{e}|\psi'(x)|^2dx+\int_{e_0}|\psi'(x)|^2dx+\sum\limits_{m=1}^{|V|}\frac{1}{\alpha'_m} \left|\sum\limits_{x_i \in v_m} \psi(x_i)\right|^2}{\sum\limits_{e \in E:e \neq e_0}\int_{e}|\psi(x)|^2dx+\int_{e_0}|\psi(x)|^2dx} \\
&\geq \frac{\sum\limits_{e \in E:e \neq e_0}\int_{e}|\psi'(x)|^2dx+\frac{1}{t}\int_{e_0}|\psi'(x)|^2dx+\sum\limits_{m=1}^{|V|}\frac{1}{\alpha'_m} \left|\sum\limits_{x_i \in v_m} \psi(x_i)\right|^2}{\sum\limits_{e \in E:e \neq e_0}\int_{e}|\psi(x)|^2dx+t \int_{e_0}|\psi(x)|^2dx} \\
&= \frac{\sum\limits_{e \in E:e \neq e_0}\int_{e}|\varphi'(x)|^2dx+\int_{e'_0}|\varphi'(x)|^2dx+\sum\limits_{m=1}^{|V|}\frac{1}{\alpha'_m} \left|\sum\limits_{x_i \in v_m} \varphi(x_i)\right|^2}{\sum\limits_{e \in E:e \neq e_0}\int_{e}|\varphi(x)|^2dx+ \int_{e'_0}|\varphi(x)|^2dx}=R(\varphi)
\end{align*}
Since $R(\psi) \geq R(\varphi)$, so by max-min description of eigenvalues we obtain the required result.
Note that $\varphi(x) \not\in D(L^{\delta'}),$ but $\varphi(x)\in D(h^{\delta'})$.
\end{proof}
The result of the above theorem still holds when each edge $e$ of the graph $\Gamma^{\delta'}$ is scaled by a factor $t_e > 1$. This show that the spectrum of a quantum graph depends on its total length. Increasing the graph's total length also decreases the eigenvalues, so one can shift the eigenvalues in some direction by scaling the graph's total length. When $0<t<1$, the above result is reversed. The following theorem describes the relationship between the Rayleigh quotients of the graphs with total lengths $\mathcal{L}$ and $t\mathcal{L}$.   
\begin{theorem} \label{length2}
Let $\Gamma^{\delta'} $ be the metric graph with total length $\mathcal{L}(\Gamma^{\delta'})$. Obtain a graph  $ \tilde{\Gamma}^{\delta'}$ with length $t \mathcal{L}(\Gamma^{\delta'})$ by scaling each edge of $\Gamma^{\delta'}$ by $t > 0$ and scaling each strength $\alpha'_m$ by same factor, then 
\begin{equation}
   \lambda_k(\tilde{\Gamma}^{\delta'})=\frac{1}{t^2} \lambda_k(\Gamma^{\delta'}).
\end{equation}
\end{theorem}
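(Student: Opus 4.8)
The plan is to exhibit an explicit unitary that conjugates the Laplacian on $\tilde\Gamma^{\delta'}$ to $\frac{1}{t^2}$ times the Laplacian on $\Gamma^{\delta'}$, and equivalently to track the Rayleigh quotient under the natural rescaling map, just as in the standard-condition case recalled above \eqref{qneumann}. First I would fix parametrizations $e_j\simeq[0,\ell_j]$ of the edges of $\Gamma^{\delta'}$ so that the corresponding edges of $\tilde\Gamma^{\delta'}$ are $[0,t\ell_j]$, and define the pullback $U:L^2(\tilde\Gamma^{\delta'})\to L^2(\Gamma^{\delta'})$ edgewise by $(U\varphi)(x)=\sqrt{t}\,\varphi(tx)$. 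A one-line change of variables shows $U$ is unitary. The key computational step is that for $\varphi\in D(\tilde h^{\delta'})$ and $\psi=U\varphi$ one has $\int_{e_j}|\psi(x)|^2\,dx=\int_{[0,t\ell_j]}|\varphi(y)|^2\,dy$ (so $\|\psi\|^2=\|\varphi\|^2$), while $\int_{e_j}|\psi'(x)|^2\,dx=t^2\int_{[0,t\ell_j]}|\varphi'(y)|^2\,dy$, i.e. the Dirichlet part of the form picks up a factor $t^2$.

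The second key step is to check that $U$ maps $D(\tilde h^{\delta'})$ onto $D(h^{\delta'})$ and that the vertex part of the form also scales by $t^2$ once the strengths are rescaled. Since the map $x\mapsto tx$ sends an endpoint of $\tilde\Gamma^{\delta'}$ to the corresponding endpoint of $\Gamma^{\delta'}$, continuity of derivatives and the balance/$\delta'$ relations are preserved at the level of the form domains $H^1$, so $U(D(\tilde h^{\delta'}))=D(h^{\delta'})$. At a vertex $v_m$, the boundary values satisfy $\psi(x_i)=\sqrt t\,\varphi(\tilde x_i)$, hence $\bigl|\sum_{x_i\in v_m}\psi(x_i)\bigr|^2=t\,\bigl|\sum_{x_i\in v_m}\varphi(x_i)\bigr|^2$; if the strength on $\tilde\Gamma^{\delta'}$ is $\tilde\alpha'_m=t\,\alpha'_m$, then $\frac{1}{\alpha'_m}\bigl|\sum_{x_i\in v_m}\psi(x_i)\bigr|^2=\frac{t^2}{\tilde\alpha'_m}\bigl|\sum_{x_i\in v_m}\varphi(x_i)\bigr|^2$. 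Combining with the Dirichlet computation gives $h^{\delta'}[\psi]=t^2\,\tilde h^{\delta'}[\varphi]$ for all $\varphi\in D(\tilde h^{\delta'})$, while $\|\psi\|^2=\|\varphi\|^2$.

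From here the conclusion is immediate from the min-max principle \eqref{minmax}: $U$ is a unitary bijection of form domains with $h^{\delta'}[U\varphi]=t^2\,\tilde h^{\delta'}[\varphi]$ and $\|U\varphi\|=\|\varphi\|$, so it sets up a bijection between $k$-dimensional subspaces of $D(\tilde h^{\delta'})$ and of $D(h^{\delta'})$ multiplying every Rayleigh quotient by $t^2$; therefore $\lambda_k(\Gamma^{\delta'})=t^2\lambda_k(\tilde\Gamma^{\delta'})$, which is the claimed identity. (One may also phrase this operator-theoretically: $U L^{\delta'}_{\tilde\Gamma}U^{-1}=\tfrac1{t^2}L^{\delta'}_\Gamma$ on the operator domains, since the scaling preserves the $\delta'$ vertex conditions with the rescaled strengths.)

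The calculations here are entirely routine; the only point requiring a little care — and the place where the $\alpha'_m\neq 0$ hypothesis and the precise power of $t$ in the strength rescaling matter — is the bookkeeping at the vertices, namely verifying that the factor $t$ produced by $\sum\psi(x_i)=\sqrt t\sum\varphi(x_i)$ combines with the $1/\alpha'_m$ prefactor and the choice $\tilde\alpha'_m=t\alpha'_m$ to yield exactly the same $t^2$ as the Dirichlet term, so that the whole form scales homogeneously. I would also note in passing that, unlike $\varphi$ in Proposition \ref{length1}, the function $U\varphi$ does lie in the operator domain $D(L^{\delta'})$, since the scaling is a genuine change of variables rather than a cut-and-reparametrize construction.
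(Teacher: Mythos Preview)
Your proof is correct and follows essentially the same route as the paper: both arguments use the edgewise dilation $\psi\leftrightarrow\varphi$ with $\varphi(\tilde x)=t^{-1/2}\psi(\tilde x/t)$, compute how the Dirichlet integral and the vertex sums in \eqref{qdelta'} transform under this change of variables, and then invoke the min-max principle \eqref{minmax}. Your packaging of the map as a unitary $U$ and your explicit observation that $U$ induces a bijection of $k$-dimensional subspaces of the form domains is a cleaner way to read off the \emph{equality} (rather than merely an inequality) of eigenvalues, but the underlying computation is the same as the paper's.
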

\begin{proof}
Let $\psi(x)$ with $||\psi||_{L^2(\Gamma^{\delta'})}=1$ minimizes the Rayleigh quotient for graph $\Gamma^{\delta'}$. Then the function defined by $\varphi(x)=\frac{1}{\sqrt{t}}\psi(\frac{x}{t})$ with $||\varphi||_{L^2(\tilde{\Gamma}^{\delta'})}=1$ minimizes the Rayleigh quotient for the graph $\tilde{\Gamma}^{\delta'}$.  
Let $e_j=[0,\ell_j]$ and $\tilde{e_j}=[0,t\ell_j]$ be the parametrization of edges in  $\Gamma^{\delta'}$ and $\tilde{\Gamma}^{\delta'}$, respectively.
\begin{align*}
\tilde{R}(\varphi(\tilde{x}))&=\int_{\tilde{\Gamma}^{\delta'}}|\varphi'(\tilde{x})|^2d\tilde{x}+\sum\limits_{m=1}^{|V|}\frac{1}{\tilde{\alpha}'_m} \left|\sum\limits_{\tilde{x}_i \in \tilde{v}_m} \varphi(\tilde{x}_i)\right|^2 \\
&=\sum\limits_{j=1}^{|\tilde{E}|}\int_{\tilde{e_j}}|\varphi'(\tilde{ x})|^2d\tilde{x}+\sum\limits_{m=1}^{|V|}\frac{1}{\tilde{\alpha}'_m} \left|\sum\limits_{\tilde{x}_i \in \tilde{v}_m} \varphi(\tilde{x}_i)\right|^2  \\
&=\sum\limits_{j=1}^{|\tilde{E}|}\int^{t \ell_j}_{0}\left|\frac{1}{\sqrt{t}}\psi'\left(\frac{\tilde{x}}{t}\right)\frac{1}{t}\right|^2d \tilde{x}+\sum\limits_{m=1}^{|V|}\frac{1}{\tilde{\alpha}'_m} \left|\sum\limits_{\tilde{x}_i \in \tilde{v}_m} \frac{1}{\sqrt t}\psi \left(\frac{\tilde{x}_i}{t} \right)\right|^2   \\
\end{align*}
Let $x=\frac{\tilde{x}}{t} \implies  t dx=d \tilde{x}$, and the limits of integration become $x=0 , x=\ell_e.$
\begin{align*}
&=\frac{1}{t^2}\left[\sum\limits_{j=1}^{|E|} \int_{e_j}|\psi'(x)|^2dx+\sum\limits_{m=1}^{|V|}\frac{1}{ {\alpha}'_m} \left|\sum\limits_{ x_i \in { v}_m}\psi(x_i)\right|^2\right] \\
\end{align*}
The expression inside the bracket is the Rayleigh quotient for the graph $\Gamma^{\delta'}$. Therefore, by min-max description of eigenvalues.  
\begin{align*}
\lambda_k(\tilde{ \Gamma}^{\delta'})&=\min_{\underset{\dim(X) =k}{X \subset D\left(\tilde{h}\right) }}{\max_{0 \neq \varphi \in  X}} \tilde{R}(\varphi) \\
&=\min_{\underset{\dim(X) =k }{X \subset D\left(\tilde{h}\right) }}{\max_{\varphi \in  X: ||\varphi||=1}}\frac{1}{t^2}\left[\sum\limits_{j=1}^{|E|} \int_{e_j}|\psi'(x)|^2dx+\sum\limits_{m=1}^{|V|}\frac{1}{ {\alpha}'_m} \left|\sum\limits_{ x_i \in { v}_m}\psi(x_i)\right|^2\right] \\
&=\frac{1}{t^2}\left(\min_{\underset{\dim(X) =k }{X \subset D\left(h\right) }}{\max_{\psi \in  X: ||\psi||=1}}\left[\sum\limits_{j=1}^{|E|} \int_{e_j}|\psi'(x)|^2dx+\sum\limits_{m=1}^{|V|}\frac{1}{{\alpha}'_m} \left|\sum\limits_{ x_i \in { v}_m}\psi(x_i)\right|^2\right]\right)\\
&=\frac{1}{t^2} \lambda_k(\Gamma^{\delta'})
\end{align*}
If we let $t$ approaches infinity, the values of a Rayleigh quotient, and thus the eigenvalues are also pushed towards zero. And letting $t$ go to zero will force the Rayleigh quotient approaches to $\pm \infty$.  Furthermore, the result is still valid, if both the graphs are equipped anti-standard vertex conditions. 

\end{proof}

We now consider the surgical transformation that will increase the volume of a given metric graph by attaching a new subgraph to it.
\begin{definition}
{\textbf{ (Attaching a pendant graph).}} 
Given two metric graphs $\Gamma$ and $\hat{ \Gamma}$ and let $v_1 \in \Gamma$ and $w_1 \in \hat{\Gamma}$. If the metric graph $\tilde{\Gamma}$ is obtained by gluing vertices $v_1$ and $w_1$, then we speak of attaching a pendant graph $\hat{\Gamma}$ to $\Gamma$. 
\end{definition}
Let $\tilde{\Gamma}^{\delta}$ be obtained by attaching a pendant graph $\hat{\Gamma}^{\delta}$ to $\Gamma^{\delta}$, then it was proved in \cite[\textcolor{blue}{Theorem 3.10}]{BKKM} that if for some $r$ and $k$, $\lambda_r(\hat{\Gamma}^{\delta}) \leq \lambda_k(\Gamma^{\delta})$ then $$\lambda_{k+r-1}(\tilde{\Gamma}^{\delta}) \leq \lambda_k(\Gamma^{\delta}).$$
For the graphs equipped with $\delta'$-type condition it was established in \cite[\textcolor{blue}{Theorem 3.5}]{RS} that attaching a pendant edge to some vertex of $\Gamma^{\delta'}$ to obtained a graph $\tilde{\Gamma}^{\delta'}$ leads to the following inequality, 
$$\lambda_{k}(\tilde{\Gamma}^{\delta}) \leq \lambda_k(\Gamma^{\delta}).$$
This inequality also hold when the graphs are equipped with anti-standard vertex conditions. Moreover, the repeated application of this result also shows that the same result is true when the graph $\tilde{\Gamma}^{\delta'}$ is formed by attaching a graph $\hat{\Gamma}^{\delta}$ to $\Gamma^{\delta'}$.
\begin{theorem} \label{pendant}
Let $\Gamma$ be a finite, compact and connected metric graph, let $v_1$ be a vertex of $\Gamma$ and let $L=-\frac{d^2}{dx^2}$ be the Laplacian in $L^2(\Gamma)$ subject to arbitrary local, self-adjoint vertex conditions at each vertex $v_m \in V \setminus\{v_1\}$, with $\delta'$ vertex condition at $v_1$ , with strength $\alpha'_1$. Let $\hat{\Gamma}$ be a finite, compact and connected metric graph, with arbitrary local, self-adjoint vertex conditions at each vertex $w_m \in \hat{V} \setminus \{w_1\}$, with $\delta'$ vertex condition at $w_1$, with strength $\hat{ \alpha}'_2$. Let $\tilde{\Gamma}$ be graph formed by attaching the pendant graph $\hat{\Gamma}$ to $\Gamma$, if for some $r,k_0$, we have, $$\lambda_r(\hat{\Gamma})\leq \lambda_{k_0}(\Gamma),$$ then the following assertions hold.
\begin{enumerate}
    \item  If $\alpha'_1, \hat{\alpha}'_2 > 0$, then
    $${\lambda_{k+r}(\tilde{\Gamma}) \leq \lambda_k(\Gamma), \quad k \geq k_0}.$$
    
    \item If $\alpha'_1 \cdot \hat{\alpha}'_2 = 0$ then, $${\lambda_{k+r}(\tilde{\Gamma}) \leq \lambda_k(\Gamma), \quad k \geq k_0}.$$
    
    \item If $\alpha'_1 \cdot \hat{\alpha}'_2 < 0 \ \text{and} \  \alpha'_1+ \hat{\alpha}'_2 < 0$ then,
    $${\lambda_{k+r}(\tilde{\Gamma}) \leq \lambda_k(\Gamma), \quad k \geq k_0}.$$
    
    \item If $\alpha'_1, \hat{\alpha}'_2 < 0$, then
    $${\lambda_{k+r}(\tilde{\Gamma}) \geq \lambda_k(\Gamma), \quad k \geq k_0}.$$
    
    \item If $\alpha'_1 \cdot \hat{\alpha}'_2 < 0 \ \text{and} \ \alpha'_1+ \hat{\alpha}'_2 > 0$ then
    $${\lambda_{k+r}(\tilde{\Gamma}) \geq \lambda_k(\Gamma), \quad k \geq k_0}.$$
    
    \item  If $\alpha'_1 \cdot \hat{\alpha}'_2 < 0 \ \text{and}\ \alpha_1+ \hat{\alpha}'_2 = 0$ then
    $${\lambda_{k+r}(\tilde{\Gamma}) \geq \lambda_k(\Gamma), \quad k \geq k_0}.$$

\end{enumerate}
\end{theorem}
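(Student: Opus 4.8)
The idea is to combine the $\delta'$ gluing result (Theorem \ref{roh}) with the variational estimate for attaching a pendant graph at a \emph{single} vertex. The assembled graph $\tilde\Gamma$ is obtained from the disjoint union $\Gamma \sqcup \hat\Gamma$ by gluing $v_1$ and $w_1$ into a new vertex $v_0$; before gluing, the disjoint union carries $\delta'$ conditions of strengths $\alpha_1'$ at $v_1$ and $\hat\alpha_2'$ at $w_1$, and after gluing the new vertex $v_0$ has strength $\alpha_0' = \alpha_1' + \hat\alpha_2'$ by the gluing rule for $\delta'$ couplings. I would first record that the eigenvalues of $\Gamma \sqcup \hat\Gamma$ are the merged, reordered union of the spectra of $\Gamma$ and $\hat\Gamma$, so the hypothesis $\lambda_r(\hat\Gamma)\le\lambda_{k_0}(\Gamma)$ gives, by the standard counting argument (exactly as in \cite[Theorem 3.10]{BKKM}), the interlacing
\begin{equation*}
\lambda_{k+r}(\Gamma\sqcup\hat\Gamma)\le\lambda_k(\Gamma),\qquad k\ge k_0 .
\end{equation*}
This is purely bookkeeping on disjoint spectra and does not depend on the sign of the strengths.

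The second ingredient is comparing $\tilde\Gamma$ with $\Gamma\sqcup\hat\Gamma$. These two graphs have the same underlying set of edges and the same quadratic form expression $\int|\varphi'|^2$ plus the vertex terms; the only difference is at the glued vertex. On the disjoint union the vertex term at $v_1,w_1$ is
$\frac{1}{\alpha_1'}\bigl|\sum_{x_i\in v_1}\varphi(x_i)\bigr|^2+\frac{1}{\hat\alpha_2'}\bigl|\sum_{x_i\in w_1}\varphi(x_i)\bigr|^2$, with $\varphi$ ranging over $H^1$ with no relation between the two groups of endpoint values; on $\tilde\Gamma$ we impose continuity of the derivative across $v_0$ (which restricts the domain) and the vertex term becomes $\frac{1}{\alpha_0'}\bigl|\sum_{x_i\in v_0}\varphi(x_i)\bigr|^2$. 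I would split into cases exactly mirroring Theorem \ref{roh}: writing $S_1=\sum_{x_i\in v_1}\varphi(x_i)$ and $S_2=\sum_{x_i\in w_1}\varphi(x_i)$, one checks whether $\tilde h$ is a positive rank-one (more generally finite-rank) perturbation of $h_{\Gamma\sqcup\hat\Gamma}$ or the reverse, using the same algebraic identity that drives the proof of Theorem \ref{roh}. When $\alpha_1',\hat\alpha_2'>0$, or one of them is zero, or $\alpha_1'\hat\alpha_2'<0$ with $\alpha_0'<0$, the gluing lowers the eigenvalues, i.e. $\lambda_k(\tilde\Gamma)\le\lambda_k(\Gamma\sqcup\hat\Gamma)$; combining with the counting inequality above yields $\lambda_{k+r}(\tilde\Gamma)\le\lambda_k(\Gamma)$, which is cases (1)--(3). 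In the complementary sign regimes — $\alpha_1',\hat\alpha_2'<0$, or $\alpha_1'\hat\alpha_2'<0$ with $\alpha_0'\ge 0$ — Theorem \ref{roh} gives $\lambda_k(\Gamma\sqcup\hat\Gamma)\le\lambda_k(\tilde\Gamma)$; but now I need a \emph{lower} bound on $\lambda_k(\tilde\Gamma)$ in terms of $\lambda_k(\Gamma)$, and the naive chain goes the wrong way.

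For cases (4)--(6) the argument must be turned around: instead of building $\tilde\Gamma$ up from the disjoint union, I would realize $\Gamma$ (together with a decoupled copy of $\hat\Gamma$) as obtained from $\tilde\Gamma$ by a \emph{splitting} operation at $v_0$, and apply Theorem \ref{roh} in the direction where cutting a negative-coupling $\delta'$ vertex raises (resp. does not lower) the spectrum; equivalently, apply the interlacing $\lambda_k(\Gamma^{\delta'}_{\text{glued}})\le\lambda_k(\Gamma^{\delta'})$ of Theorem \ref{roh} with the roles of the two graphs exchanged so that $\tilde\Gamma$ plays the role of the glued graph. Concretely: in case (4) both pieces have negative strength, so gluing gives $\lambda_k(\Gamma\sqcup\hat\Gamma)\le\lambda_k(\tilde\Gamma)$; since $\lambda_k(\Gamma)\le\lambda_{k+r}(\Gamma\sqcup\hat\Gamma)$ is \emph{not} automatic, I instead use that $\hat\Gamma$ has the global minimum eigenvalue at most that of $\Gamma$ only at index $r$, and the counting lemma still delivers $\lambda_{k+r}(\Gamma\sqcup\hat\Gamma)\ge$ something comparable to $\lambda_k(\Gamma)$ — more carefully, one uses the trivial half of the counting argument $\lambda_{k+r}(\Gamma\sqcup\hat\Gamma)\ge\lambda_k(\Gamma)$ which holds \emph{because} $\Gamma\sqcup\hat\Gamma$ has more eigenvalues below a given level. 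Chaining, $\lambda_{k+r}(\tilde\Gamma)\ge\lambda_{k+r}(\Gamma\sqcup\hat\Gamma)\ge\lambda_k(\Gamma)$, giving (4); cases (5) and (6) are identical after substituting the corresponding clause of Theorem \ref{roh}. The main obstacle is precisely getting the right inequality $\lambda_{k+r}(\Gamma\sqcup\hat\Gamma)\ge\lambda_k(\Gamma)$ or $\le\lambda_k(\Gamma)$ out of the disjoint-union spectrum and threading it through the sign-dependent direction of the gluing inequality so that the two comparisons compose in a consistent direction in each of the six cases; the rest is the routine algebra of the rank-one perturbation identity $\frac{1}{\alpha_0'}|S_1+S_2|^2$ versus $\frac{1}{\alpha_1'}|S_1|^2+\frac{1}{\hat\alpha_2'}|S_2|^2$, already carried out in \cite{RS}.
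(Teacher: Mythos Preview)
Your overall architecture---form the disjoint union $\Gamma\sqcup\hat\Gamma$, use spectrum-merging to locate $\lambda_k(\Gamma)$ in the combined list, then apply the $\delta'$ gluing Theorem~\ref{roh} in the sign-appropriate direction---is exactly the paper's approach, and for cases (1)--(3) your argument is correct and matches the paper's.

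For cases (4)--(6), however, the step you call ``the trivial half of the counting argument'' is false. You claim $\lambda_{k+r}(\Gamma\sqcup\hat\Gamma)\ge\lambda_k(\Gamma)$ ``because $\Gamma\sqcup\hat\Gamma$ has more eigenvalues below a given level'', but that reasoning points the wrong way: having \emph{more} eigenvalues below a threshold forces the $(k+r)$-th eigenvalue of the union to be \emph{smaller}, not larger. Concretely, the hypothesis $\lambda_r(\hat\Gamma)\le\lambda_{k_0}(\Gamma)\le\lambda_k(\Gamma)$ tells you only that \emph{at least} $r$ eigenvalues of $\hat\Gamma$ lie at or below $\lambda_k(\Gamma)$, which yields $\lambda_{k+r}(\Gamma\sqcup\hat\Gamma)\le\lambda_k(\Gamma)$---the inequality you used for (1)--(3), not its reverse. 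For the chain $\lambda_{k+r}(\tilde\Gamma)\ge\lambda_{k+r}(\Gamma\sqcup\hat\Gamma)\ge\lambda_k(\Gamma)$ to close you would need \emph{at most} $r$ eigenvalues of $\hat\Gamma$ below $\lambda_k(\Gamma)$, and nothing in the hypotheses supplies that bound (take $\hat\Gamma$ with many low eigenvalues to see the second inequality fail). The paper's own proof of case~(4) glosses over the identical step---it writes ``combining this inequality with the estimate $m\ge r+k_0$'' where in fact $m\le r+k_0$ would be needed---so your proposal faithfully reproduces the paper's argument, gap included.

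A minor side remark: the derivative-continuity you mention at $v_0$ is an operator-domain condition, not a form-domain restriction; the $\delta'$ form domain is all of $\bigoplus_e H^1(e)$, so the comparison of $\tilde h$ with $h_{\Gamma\sqcup\hat\Gamma}$ is purely a comparison of the vertex terms on the same domain, exactly as in Theorem~\ref{roh}.
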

\begin{proof}
\begin{enumerate}
    \item 
 Let $\Gamma'$ be the disconnected graph whose components are $\Gamma$ and $\hat{\Gamma}$, that is, $\Gamma'=\Gamma \cup \hat{\Gamma}$. The assumption $\lambda_r(\hat{\Gamma}) \leq \lambda_{k_0}(\Gamma)$, and the fact that the spectrum of $\Gamma'$ is equal to the union of the spectra of $\Gamma$ and $\hat{\Gamma}$ implies that every $k$-th eigenvalue of $\Gamma$ is some $m$-th eigenvalue of $\Gamma'$, 
$$\lambda_{k_0}(\Gamma)=\lambda_m(\Gamma')$$ for some $m=m(r,k_0) \geq r+k_0$. Now attach the vertex $w_1$ of $\hat{\Gamma}$ with a vertex $v_1$ of $\Gamma$ to form $\tilde{\Gamma}$, then by \textcolor{blue}{Theorem} \eqref{roh}, when $\alpha'_1,\hat{\alpha}'_2 > 0$,  we obtain   $$\lambda_m(\tilde{\Gamma}) \leq \lambda_m(\Gamma '),$$
combining this inequality with the estimate $m \geq r+k_0$, we obtain 
$$\lambda_{r+k}(\tilde{\Gamma}) \leq \lambda_k(\Gamma), \quad  k \geq k_0.$$
    (4) Let $\Gamma'$ be the disconnected graph whose components are $\Gamma$ and $\hat{\Gamma}$, that is, $\Gamma'=\Gamma \cup \hat{\Gamma}$. The assumption $\lambda_r(\hat{\Gamma}) \leq \lambda_{k_0}(\Gamma)$, and the fact that the spectrum of $\Gamma'$ is equal to the union of the spectra of $\Gamma$ and $\hat{\Gamma}$ implies that 
$$\lambda_{k_0}(\Gamma)=\lambda_m(\Gamma')$$ for some $m=m(r,k_0) \geq r+k_0$. Now attach the vertex $w_1$ of $\hat{\Gamma}$ with a vertex $v_1$ of $\Gamma$ to form $\tilde{\Gamma}$, then by \textcolor{blue}{Theorem} \eqref{roh}, when $\alpha'_1, \hat{\alpha}'_2 < 0$,  we obtain   $$\lambda_m(\tilde{\Gamma}) \geq \lambda_m(\Gamma '),$$
combining this inequality with the estimate $m \geq r+k_0$, we obtain 
$$\lambda_{r+k}(\tilde{\Gamma}) \geq \lambda_k(\Gamma), \quad  k \geq k_0.$$
   
\end{enumerate}
The other parts of the theorem can be proved correspondingly.
\end{proof}

\begin{definition}
The inserting a graph at vertex is a surgical operation in which a metric graph  $\tilde{\Gamma}$ is obtained by removing a vertex $v_0$ of $\Gamma$ and attaching its incident edges to some vertices of $\hat{\Gamma}$.  
Let $e_1,e_2,\cdots,e_n$ be the incident edges to a vertex $v_0$ of a metric graph $\Gamma$ and $w_1,w_2,\cdots,w_m, m \leq n$ be the vertices in $\hat{\Gamma}$ to which an edge has been so attached. If $\alpha_0$ and $\hat{\alpha}_1,\hat{\alpha}_2, \cdots, \hat{\alpha}_m$ are the coupling parameters associated to the vertices $v_0$ and $w_1,w_2,\cdots,w_m$, respectively. Then the coupling parameters should be placed at the vertices of $\tilde{w}_1,\tilde{w}_2,\cdots,\tilde{w}_m,$ of $\tilde{\Gamma}$ in such a way that their sum is equal to $\alpha_0 +\hat{\alpha}_1+\hat{\alpha}_2+ \cdots+ \hat{\alpha}_m$. 
\end{definition}

A similar result to the following theorem was proved in \cite[\textcolor{blue}{Theorem $3.10$}]{BKKM}, , in which it was assumed that if the vertices of the graphs $\Gamma$ and $\hat{\Gamma}$ are equipped with $\delta$-type and standard vertex conditions, respectively. Then, for all $k$ such that $\lambda_k(\Gamma) \geq 0$, $$\lambda_k(\tilde{\Gamma}) \leq \lambda_k(\Gamma).$$       
The reason for imposing standard conditions on $\hat{\Gamma}$ is that we want that the sum of the strengths assigned to the new vertices of $ \tilde{\Gamma}$ must equal to the strength assigned at $v_0$ of $\Gamma$. The following result omits this restriction. Therefore, the eigenvalues of $\tilde{\Gamma}$ are bounded above by the eigenvalues of the graphs with different strengths at $v_0$ each strength equal to the sum of the strengths of vertices in $\tilde{\Gamma}$ to which the incident edges of $v_0$ are attached.  Moreover, we believe that the following theorem gives an improved result when $r \geq m$. 
\begin{theorem}
 Let $\tilde{\Gamma}^{\delta}$ be obtained by inserting vertices   $w_1,w_2,\cdots,w_m$ of $\hat{\Gamma}^{\delta}$ at vertex $v_0$ of $\Gamma^{\delta}$. If for some $r\geq m$ and $k$, we have, $ \lambda_r(\hat{\Gamma}^{\delta}) \leq \lambda_{k}(\Gamma^{\delta}).$ Then, 
\begin{equation}
\lambda_{k+r-m}(\tilde{\Gamma}^{\delta})\leq \lambda_k(\Gamma^{\delta}).     
\end{equation}
\end{theorem}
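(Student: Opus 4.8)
The plan is to reduce this theorem to the pendant-graph result, Theorem~\ref{pendant}, by realizing the insertion of $\hat\Gamma^{\delta}$ at $v_0$ as an intermediate construction between an ``attach as pendant'' graph and the actual inserted graph, and then to exploit a dimension-counting (codimension) argument to gain the improvement by $m$ in the index. First I would form the disconnected graph $\Gamma' = \Gamma^{\delta} \cup \hat\Gamma^{\delta}$; as in the proof of Theorem~\ref{pendant}, the hypothesis $\lambda_r(\hat\Gamma^{\delta}) \le \lambda_k(\Gamma^{\delta})$ together with the fact that $\spec(\Gamma') = \spec(\Gamma^{\delta}) \cup \spec(\hat\Gamma^{\delta})$ (counted with multiplicity) gives $\lambda_k(\Gamma^{\delta}) = \lambda_{m'}(\Gamma')$ for some $m' = m'(r,k) \ge r + k$. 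So it suffices to prove $\lambda_{k+r-m}(\tilde\Gamma^{\delta}) \le \lambda_{m'}(\Gamma')$ whenever $m' \ge r+k$, i.e. to show that passing from the disjoint union $\Gamma'$ to the inserted graph $\tilde\Gamma^{\delta}$ shifts the $j$-th eigenvalue down by at least $m$ places: $\lambda_{j-m}(\tilde\Gamma^{\delta}) \le \lambda_j(\Gamma')$ for $j \ge m+1$.

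Next I would set this up variationally. On $\Gamma'$ the vertex $v_0$ carries the $\delta$-condition with strength $\alpha_0$ and the vertices $w_1,\dots,w_m$ of $\hat\Gamma$ carry their strengths $\hat\alpha_1,\dots,\hat\alpha_m$. The inserted graph $\tilde\Gamma^{\delta}$ is obtained by deleting $v_0$ and reattaching its $n$ incident edges to $w_1,\dots,w_m$, and by the definition preceding the theorem the new strengths $\tilde\alpha_1,\dots,\tilde\alpha_m$ at $\tilde w_1,\dots,\tilde w_m$ are chosen so that $\sum_i \tilde\alpha_i = \alpha_0 + \sum_i \hat\alpha_i$. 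The key point is that there is a natural linear injection from a large subspace of $D(h_{\tilde\Gamma}^{\delta})$ into $D(h_{\Gamma'}^{\delta})$: given $\varphi$ on $\tilde\Gamma^{\delta}$, one reads off the values of $\varphi$ along the reattached edges $e_1,\dots,e_n$ at the ``$v_0$-side'' endpoints; if all these coincide, say with the common value $c$, then $\varphi$ lifts to a function on $\Gamma'$ by setting $\varphi(v_0) = c$ on the $\Gamma$-component and keeping $\varphi$ on the $\hat\Gamma$-component, and this lift preserves the $L^2$ norm and the Dirichlet part of the form. The constraint ``the $n$ endpoint-values of $\varphi$ at the cut agree'' is the intersection of at most $n-1$ linear conditions; combining these across the $m$ reattachment vertices one gets a subspace $Y \subset D(h_{\tilde\Gamma}^{\delta})$ of finite codimension on which $\varphi$ lifts. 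A careful bookkeeping of the vertex terms, using $\sum_i \tilde\alpha_i = \alpha_0 + \sum_i \hat\alpha_i$ and the sign hypotheses implicit in the ``$\delta$-type'' setting (as in the $\delta$-gluing argument~\eqref{dinterlac}), shows $h_{\Gamma'}^{\delta}[\tilde\varphi] \le h_{\tilde\Gamma}^{\delta}[\varphi]$ (or the reverse, depending on how one orients the inequality) on $Y$, so that $\tilde h^{\delta}$ restricted to $Y$ compares with $h_{\Gamma'}^{\delta}$. Then the min-max principle~\eqref{minmax} applied to a minimising subspace for $\lambda_{m'}(\Gamma')$, intersected with $Y$, yields the desired index shift.

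The main obstacle I expect is pinning down exactly how many places the index drops, i.e. verifying that the relevant codimension is precisely $m$ and not merely ``finite.'' Naively the cut at each $w_i$ reattaching $d_i$ edges of $v_0$ imposes $d_i - 1$ continuity conditions, and $v_0$ itself disappears (removing one vertex), so one must check that the net effect of going from $\Gamma'$ (with $1 + $ (ambient) free vertex parameters at the cut locus) to $\tilde\Gamma^{\delta}$ (with $m$ vertices $\tilde w_i$) is an index shift of exactly $m$ — this is the same kind of accounting that produces the ``$+r-m$'' (rather than ``$+r-1$'' or ``$+r$'') in the statement, and it is the delicate point. A clean way to control it is to perform the insertion in two stages: first attach $\hat\Gamma$ to $\Gamma$ as a genuine pendant at the single vertex $v_0 \sim w_1$ (Theorem~\ref{pendant} with $m=1$ effectively, giving a shift of $r-1$), and then perform $m-1$ further ``edge-rerouting'' moves, each of which is a rank-one form perturbation in the sense of Theorem~\ref{berk1} and each of which shifts the spectrum by one more place, for a total of $r-1 - (m-1) = r-m$; assembling these gives $\lambda_{k+r-m}(\tilde\Gamma^{\delta}) \le \lambda_k(\Gamma^{\delta})$, and I would also note that the hypothesis $r \ge m$ is exactly what keeps the index $k+r-m$ meaningful. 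The remaining sign-tracking for the $\delta$-vertex terms is routine given the identity $\sum_i \tilde\alpha_i = \alpha_0 + \sum_i \hat\alpha_i$ and does not depend on the signs of the individual strengths, mirroring the $\delta$-gluing inequality~\eqref{dinterlac}.
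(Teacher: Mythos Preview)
Your plan has a genuine gap in the second (``clean way'') stage, and the first (direct variational) stage remains too vague to count as a proof.

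In your two--stage reduction you first pendant--attach $\hat\Gamma^\delta$ to $\Gamma^\delta$ at $v_0\sim w_1$ and then propose $m-1$ ``edge--rerouting'' moves, each allegedly a rank--one perturbation shifting the index by one in the favourable direction. But moving an edge endpoint from the merged vertex $u=v_0\cup w_1$ to some $w_j$ is a \emph{split followed by a glue}. For $\delta$--conditions splitting lowers eigenvalues while gluing raises them (this is precisely \eqref{dinterlac}), so the composite is not monotone: from $\lambda_j(G_{\text{split}})\le\lambda_j(G_1)$ and $\lambda_j(G_{\text{split}})\le\lambda_j(G_2)\le\lambda_{j+1}(G_{\text{split}})$ you cannot conclude any inequality between $\lambda_j(G_1)$ and $\lambda_j(G_2)$, let alone the index drop you need. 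The identity $\sum_i\tilde\alpha_i=\alpha_0+\sum_i\hat\alpha_i$ does not rescue this, because the vertex term in the form is quadratic in $\varphi(v)$, not linear in the strength times a fixed quantity.

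The paper sidesteps this entirely with a different order of operations: \emph{first} glue $w_1,\dots,w_m$ inside $\hat\Gamma^\delta$ to a single vertex $w^*$, obtaining $\check\Gamma^\delta$; by $m-1$ applications of \eqref{dinterlac} this gives $\lambda_{r-m+1}(\check\Gamma^\delta)\le\lambda_r(\hat\Gamma^\delta)\le\lambda_k(\Gamma^\delta)$. \emph{Then} pendant--attach $\check\Gamma^\delta$ at $w^*\sim v_0$ and invoke the $\delta$--pendant result of \cite{BKKM} with index $r-m+1$ in place of $r$, yielding $\lambda_{k+r-m}(\check{\check\Gamma}^\delta)\le\lambda_k(\Gamma^\delta)$. \emph{Finally} split the big vertex back into $\tilde w_1,\dots,\tilde w_m$; for $\delta$--conditions splitting only decreases eigenvalues, so $\lambda_{k+r-m}(\tilde\Gamma^\delta)\le\lambda_{k+r-m}(\check{\check\Gamma}^\delta)$ with no further index loss. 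The whole index bookkeeping happens in the first step, where the $m-1$ gluings inside $\hat\Gamma$ are each monotone in the \emph{right} direction. This pre--gluing trick is the missing idea in your plan.
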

\begin{proof}
Let $\check{\Gamma}^{\delta}$ be a graph obtained by gluing the vertices $w_1,w_2,\cdots,w_m$ of $\hat{\Gamma}^{\delta}$ to form a single vertex $w^*$, then
$$\lambda_{r-m+1}(\check{\Gamma}^{\delta}) \leq \lambda_r(\hat{\Gamma}^{\delta}). $$
The assumption $ \lambda_r(\hat{\Gamma}^{\delta}) \leq \lambda_{k}(\Gamma^{\delta}),$ implies that $\lambda_{r-m+1}(\check{\Gamma}^{\delta}) \leq \lambda_{k}(\Gamma^{\delta})$.
Obtain the graph $\check{\check{\Gamma}}^{\delta}$ by attaching the vertex $w^*$ of pendant graph $\check{\Gamma}^{\delta}$ at vertex $v_0$ of $\Gamma^{\delta}$, this implies,  
$$\lambda_{k+r-m}(\check{\check{\Gamma}}^{\delta}) \leq \lambda_k(\Gamma^{\delta}).$$
Now, cut through the vertex $v_0$ of $\check{\check{\Gamma}}^{\delta}$ to restore the vertices $\tilde{w}_1,\tilde{w}_2,\cdots,\tilde{w}_m$ to form a graph $\tilde{\Gamma}^{\delta}$, thus
 $$\lambda_{k+r-m}(\tilde{\Gamma}^{\delta})\leq \lambda_{k+r-m}(\check{\check{\Gamma}}^{\delta}).$$ 
\end{proof}
\begin{corollary}
Let $\Gamma$ be a finite, compact and connected metric graph with arbitrary self adjoint vertex conditions at each vertex. Let $\tilde \Gamma$ be a metric graph obtained by lengthening an edge of $\Gamma$.
If for some $k_0$, $\lambda_{k_0}(\Gamma)\geq 0$, then 
\begin{equation}
    \lambda_k(\tilde \Gamma) \leq \lambda_k(\Gamma), \quad k \geq k_0.
\end{equation}
\end{corollary}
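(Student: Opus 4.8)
The plan is to adapt the Rayleigh-quotient computation behind Proposition~\ref{length1} to arbitrary self-adjoint vertex conditions, but to work with $k$-dimensional subspaces in the min--max principle~\eqref{minmax} rather than with a single trial function, and to use the hypothesis $\lambda_{k_0}(\Gamma)\ge 0$ to control one sign. First I would present the lengthening as a scaling of a single edge: writing $e_0$ for the edge whose length is increased from $\ell_0$ to $\ell_0+s$, this is the same as scaling $e_0$ by the factor $t:=(\ell_0+s)/\ell_0>1$ and leaving every other edge and every vertex condition unchanged. Parametrising $e_0$ in $\Gamma$ by $[0,\ell_0]$ and the corresponding edge $e_0'$ in $\tilde\Gamma$ by $[0,t\ell_0]$, I would introduce the linear stretching map $T$ given by $(T\psi)(x)=\psi(x/t)$ for $x\in e_0'$ and $(T\psi)(x)=\psi(x)$ on all other edges, and record the two elementary identities, valid for all $\psi\in H^1(\Gamma)$,
\begin{equation*}
\|T\psi\|_{L^2(\tilde\Gamma)}^2=\|\psi\|_{L^2(\Gamma)}^2+(t-1)\int_{e_0}|\psi|^2\,dx,\qquad \int_{\tilde\Gamma}|(T\psi)'|^2\,dx=\int_{\Gamma}|\psi'|^2\,dx-\Bigl(1-\tfrac1t\Bigr)\int_{e_0}|\psi'|^2\,dx,
\end{equation*}
which come from the substitution $u=x/t$ on $e_0'$.

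The crux is the observation that the stretching leaves the value of a function at every vertex unchanged — it only relabels the interior of $e_0$ — so $T\psi$ and $\psi$ carry exactly the same vertex data. Since, for an arbitrary self-adjoint realisation of $-d^2/dx^2$, both the linear constraints cutting out the form domain and the vertex part of the quadratic form depend only on these vertex values, $T$ maps $D(h)$ injectively into $D(\tilde h)$ and, by the two identities above, for every $\psi\in D(h)$ one has $\tilde h[T\psi]=h[\psi]-(1-\tfrac1t)\int_{e_0}|\psi'|^2\,dx\le h[\psi]$ and $\|T\psi\|^2\ge\|\psi\|^2$. I expect this structural point — that stretching an edge changes nothing about the quadratic form except the $L^2$-mass and the Dirichlet energy carried by $e_0$ — to be the only place in the argument that needs care, since it is where the generality of the vertex conditions enters; everything else is a one-line estimate.

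To conclude I would fix $k\ge k_0$, take $\psi_1,\dots,\psi_k$ an orthonormal system of eigenfunctions of $\Gamma$ for $\lambda_1(\Gamma)\le\cdots\le\lambda_k(\Gamma)$, set $X=\operatorname{span}\{\psi_1,\dots,\psi_k\}\subset D(h)$ and $\tilde X=TX\subset D(\tilde h)$, which is $k$-dimensional because $T$ is injective. For $0\ne\psi\in X$, writing $N=h[\psi]$, $D=\|\psi\|^2>0$, $a=(1-\tfrac1t)\int_{e_0}|\psi'|^2\ge0$ and $b=(t-1)\int_{e_0}|\psi|^2\ge0$, the bound $N\le\lambda_k(\Gamma)D$ (which holds on $X$) gives
\begin{equation*}
\frac{\tilde h[T\psi]}{\|T\psi\|^2}=\frac{N-a}{D+b}\le\frac{\lambda_k(\Gamma)D-a}{D+b}\le\lambda_k(\Gamma),
\end{equation*}
the last step being equivalent to $\lambda_k(\Gamma)b+a\ge0$, which holds because $a,b\ge0$ and $\lambda_k(\Gamma)\ge\lambda_{k_0}(\Gamma)\ge0$. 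Maximising over $0\ne\psi\in X$ and invoking~\eqref{minmax} then yields $\lambda_k(\tilde\Gamma)\le\lambda_k(\Gamma)$ for all $k\ge k_0$.

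Finally I would remark that the computation is insensitive to whether $e_0$ is a loop, a pendant edge or an ordinary edge, that it extends verbatim to the simultaneous lengthening of several edges (each by its own factor $t_e>1$), and that the sign hypothesis $\lambda_{k_0}(\Gamma)\ge0$ is genuinely needed: when $\lambda_k<0$ and $\psi$ has mass on $e_0$, the quotient $\tfrac{N-a}{D+b}$ need not be dominated by $\lambda_k$, so the monotonicity can fail for the negative part of the spectrum.
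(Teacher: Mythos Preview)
Your argument is correct and takes a genuinely different route from the paper's.

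The paper proves the corollary by surgery: it views an interior point $x^*$ of the edge to be lengthened as a degree-two standard vertex, attaches there a pendant loop $\mathcal{C}^s$ (of the desired extra length, with standard conditions), invokes the pendant-attachment result of \cite{BKKM} with $r=1$ (which is where $\lambda_1(\mathcal{C}^s)=0\le\lambda_{k_0}(\Gamma)$ enters), and then splits the resulting degree-four standard vertex back into two degree-two standard vertices to insert the loop's length into the edge; splitting a standard vertex is the reverse of gluing and hence lowers all eigenvalues. By contrast, you give a direct variational proof in the spirit of Proposition~\ref{length1}: you transport a minimising $k$-dimensional subspace to $\tilde\Gamma$ via the edge-stretching map, use that the stretching preserves all vertex values (and hence the form-domain constraints and the vertex contribution of the quadratic form for arbitrary self-adjoint conditions), and close with the elementary inequality $\tfrac{N-a}{D+b}\le\lambda_k(\Gamma)$, which is exactly where you consume the sign hypothesis $\lambda_k(\Gamma)\ge0$.

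What each buys: the paper's argument is modular---it assembles the result from already-established surgery moves, and fits naturally into the toolbox being developed---while yours is self-contained and makes the role of the hypothesis $\lambda_{k_0}(\Gamma)\ge0$ completely transparent. Your approach also explains in one stroke why the result is insensitive to which self-adjoint vertex conditions are present (the stretching does not touch vertex data), whereas in the paper's argument this is implicit in the fact that only the newly created standard vertex is manipulated. Your closing remark, that the monotonicity can fail for negative eigenvalues because the quotient $\tfrac{N-a}{D+b}$ need not be dominated by $\lambda_k$ when $\lambda_k<0$, is a nice addition not discussed in the paper.
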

\begin{proof}
Let $\Gamma^s$ be an edge of finite length with endpoints equipped with standard vertex conditions,  and $\mathcal{C}^s$ be a loop obtained by joining the boundary points of $\Gamma^s$. Consider any interior point $x^*$ on the edge of $\Gamma$ as a vertex of degree two equipped with standard condition. 
Since 
\begin{equation*}
    0=\lambda_1(\mathcal{C}^s) \leq \lambda_{k_0}(\Gamma)
\end{equation*}
Obtain a graph $\hat{\Gamma}$ by attaching $\mathcal{C}^s$ as a pendant graph to $\Gamma$ at an interior point $x^*$ then.
\begin{equation*}
    \lambda_k (\hat{\Gamma}) \leq \lambda_k(\Gamma), \quad k \geq k_0.
\end{equation*}
Now restore the endpoints of $\Gamma^s$ in accordance with the definition of splitting of the vertex to obtained a graph $\tilde \Gamma$, having two interior points as a vertex of degree two each equipped with standard conditions. Which can easily be removed and considered as interior points of an edge of $\Gamma$. Thus,
\begin{equation*}
    \lambda_k (\tilde{\Gamma} ) \leq \lambda_k(\hat{\Gamma}), \quad k \geq k_0.
\end{equation*}
\end{proof}
\begin{proposition} \label{insert}
Suppose $\tilde{\Gamma}$ is formed by inserting a graph $\Gamma^a$ at a vertex $v_0$ of $\Gamma$ in a way that incident edges of $v_0$ are attached to only two vertices , say $w_1$ and $w_2$, of $\Gamma^a$. Assume that arbitrary local self-adjoint vertex conditions are imposed on each vertex $v_m$ of $\Gamma$ except $v_0$, where $\delta'$-condition is imposed with strength $\alpha'_0$, and anti-standard conditions are prescribed at each vertex of $\Gamma^a$ prior to insertion. If for some $k_0$, $\lambda_{k_0}(\Gamma)\geq 0$, then
\begin{equation*}
    \lambda_{k+1}(\tilde{\Gamma}) \leq \lambda_k(\Gamma), \quad k\geq k_0.
\end{equation*}
\end{proposition}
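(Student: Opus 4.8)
The plan is to realise $\tilde\Gamma$ from $\Gamma$ by the three-move surgery chain that already underlies the $\delta$-insertion theorem above and the preceding Corollary: first glue the two attachment vertices $w_1,w_2$ of $\Gamma^a$ into a single vertex $w^\ast$, obtaining $\check\Gamma^a$; then attach $\check\Gamma^a$ as a pendant graph to $\Gamma$ at $v_0$, obtaining an intermediate graph $\Gamma_1$; and finally cut the merged vertex of $\Gamma_1$ open into $\tilde w_1,\tilde w_2$ to recover $\tilde\Gamma$. The key structural remark is that anti-standard conditions are the $\delta'$-conditions with coupling parameter $0$: hence $w^\ast$ is again anti-standard, and after the pendant step the coupling parameter at the ``$v_0$-slot'' of $\Gamma_1$ is $\alpha'_0+0=\alpha'_0$, i.e.\ it is unchanged, while the remaining $\Gamma^a$-vertices keep their anti-standard conditions throughout. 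Thus only the conditions at $w_1,w_2,v_0$ are ever modified, as required by the comparison theorems.

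For the step $\Gamma\to\Gamma_1$ I would use a zero-extension test function: extending any $\varphi\in D(h_\Gamma)$ by $0$ on every edge of $\Gamma^a$ gives a function of $D(h_{\Gamma_1})$ with the same Rayleigh quotient, because the anti-standard balance constraints at the $\Gamma^a$-vertices are met trivially and the $\delta'$-term at the merged vertex equals that at $v_0$ (the adjoined edges contribute $0$ to the vertex sum); by min--max, $\lambda_k(\Gamma_1)\le\lambda_k(\Gamma)$. The hypothesis $\lambda_{k_0}(\Gamma)\ge 0$ is then used exactly as in the preceding Corollary: attaching at an interior point of an edge of $\Gamma^a$ the loop $\mathcal{C}^s$ (which has $\lambda_1(\mathcal{C}^s)=0\le\lambda_{k_0}(\Gamma)$) and invoking Theorem~\eqref{pendant}---in the relevant sign case, forced by the anti-standard attachment vertex---sharpens this into an index-shifted inequality of the shape $\lambda_{k+1}(\Gamma_1)\le\lambda_k(\Gamma)$ for $k\ge k_0$. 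For the cut $\Gamma_1\to\tilde\Gamma$, split the merged $\delta'$-vertex $u_0$ of strength $\alpha'_0$ into $\tilde w_1$ of strength $\alpha'_0$ and an anti-standard $\tilde w_2$ (strengths summing to $\alpha'_0$, as prescribed); then $D(h_{\tilde\Gamma})$ is the codimension-one subspace of $D(h_{\Gamma_1})$ given by $\sum_{x_i\in\tilde w_2}\varphi(x_i)=0$, on which $h_{\tilde\Gamma}=h_{\Gamma_1}$. Hence $h_{\tilde\Gamma}$ is a positive rank-one perturbation of $h_{\Gamma_1}$ and Theorem~\eqref{berk1} gives $\lambda_k(\Gamma_1)\le\lambda_k(\tilde\Gamma)\le\lambda_{k+1}(\Gamma_1)$; combining the upper half with the pendant-step estimate yields $\lambda_{k+1}(\tilde\Gamma)\le\lambda_k(\Gamma)$ for $k\ge k_0$, and the equality clauses of Theorems~\eqref{berk1} and~\eqref{pendant} record when equality holds.

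The main obstacle is the simultaneous bookkeeping of the two index shifts together with the signs of the $\delta'$-parameters. One must check that gluing $w_1,w_2$ first and attaching second produces literally the same metric graph as attaching $\Gamma^a$ and then cutting $\tilde w_1,\tilde w_2$; that the coupling parameters can be distributed with sum $\alpha'_0$ so that every comparison keeps the correct direction (unlike the $\delta$-case, a careless split reverses the cut inequality because the $\delta'$-parameter has a sign, and the sign of $\alpha'_0$ also dictates which case of Theorem~\eqref{pendant} is available); and---since $\Gamma^a$ itself has no zero eigenvalue---that the net gain in the conclusion is exactly one index, obtained by balancing the rank-one perturbation of the cut against the auxiliary-loop input that activates the hypothesis, rather than the larger loss one gets from the naive chain. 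Getting this balance right against the correct minimising subspaces in the min--max principle is the delicate point.
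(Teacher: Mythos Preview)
Your three–move chain (glue $w_1,w_2$ in $\Gamma^a$; attach the result as a pendant to $\Gamma$ at $v_0$; then split) is exactly the paper's strategy, but your execution of two of the three moves goes astray, and as a consequence the final index count is off by one.

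For the pendant step, the paper does not pass through any auxiliary loop. It simply observes that the glued graph $\hat\Gamma^a$ has all anti-standard vertices, so its lowest eigenvalue satisfies $\lambda_1(\hat\Gamma^a)\le\lambda_{k_0}(\Gamma)$ by the hypothesis $\lambda_{k_0}(\Gamma)\ge 0$, and then applies Theorem~\ref{pendant}(2) with $r=1$ (the product $\alpha'_0\cdot 0=0$ lands us in case~(2)). This yields $\lambda_{k+1}(\check\Gamma)\le\lambda_k(\Gamma)$ for $k\ge k_0$ in one stroke. Your detour---zero extension followed by ``attaching $\mathcal C^s$ at an interior point of an edge of $\Gamma^a$''---does not produce an inequality about $\Gamma_1$: attaching a loop creates a \emph{different} graph, so Theorem~\ref{pendant} applied to it says nothing about $\lambda_{k+1}(\Gamma_1)$.

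More seriously, your cut step points the wrong way. With your choice of strengths $(\alpha'_0,0)$ at $\tilde w_1,\tilde w_2$, you correctly identify $h_{\tilde\Gamma}$ as a positive rank-one perturbation of $h_{\Gamma_1}$, and Theorem~\ref{berk1} gives $\lambda_k(\Gamma_1)\le\lambda_k(\tilde\Gamma)\le\lambda_{k+1}(\Gamma_1)$. But this means splitting \emph{raises} the eigenvalues, so the best you can extract is $\lambda_k(\tilde\Gamma)\le\lambda_{k+1}(\Gamma_1)$. Combined with a pendant shift $\lambda_{k+1}(\Gamma_1)\le\lambda_k(\Gamma)$ you obtain only $\lambda_k(\tilde\Gamma)\le\lambda_k(\Gamma)$, not the claimed $\lambda_{k+1}(\tilde\Gamma)\le\lambda_k(\Gamma)$; the shift you gained in the pendant step is spent undoing the loss from the cut. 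The paper avoids this by choosing the strengths at $\tilde w_1,\tilde w_2$ so that, via the appropriate case of Theorem~\ref{roh}, splitting \emph{lowers} the eigenvalues: $\lambda_{k+1}(\tilde\Gamma)\le\lambda_{k+1}(\check\Gamma)$ with no index cost. Chaining with the pendant step then gives the desired $\lambda_{k+1}(\tilde\Gamma)\le\lambda_k(\Gamma)$.
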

\begin{proof}
Let $\hat{\Gamma}^a$ be a graph obtained by gluing the vertices $w_1$ and $w_2$ of $\Gamma^a$ to form a single vertex $w^*$ with total strength zero.  The assumption $ \lambda_{k}(\Gamma) \geq 0$, for all  $k \geq k_0$, along with the fact that the spectrum of $\hat{\Gamma}^a$ is non-negative,  implies that $\lambda_1(\hat{\Gamma}^a)=0 \leq \lambda_k(\Gamma)$, for all $k \geq k_0$.
Let $\check{\Gamma}$ be obtained by  attaching the vertex $w^*$ of pendant graph $\hat{\Gamma}^a$ at vertex $v_0$ of $\Gamma$, then by \textcolor{blue}{Theorem \eqref{pendant}(2)}, 
$$\lambda_{k+1}(\check{\Gamma}) \leq \lambda_k(\Gamma), \quad k \geq k_0.$$
Now, cut through the vertex $v_0$ of $\check{\Gamma}$ to restore the vertices $w_1$ and $w_2$ to obtain the graph $\tilde{\Gamma}$, the restoration of vertices is performed in such a way that the sum of strengths $\tilde{\alpha}'_1$, $\tilde{\alpha}'_2$, assigned to $\tilde{w}_1$ and $\tilde{w}_2$, is $\alpha'_0$. Moreover, the strengths are assigned such that the the operation of splitting the vertex $v_0$ lowers all eigenvalues of $\check{\Gamma}$. Thus,
 $$\lambda_{k+1}(\tilde{\Gamma})\leq  \lambda_{k+1}(\check{\Gamma}).$$ 
\end{proof}

\section{Eigenvalue Estimates}
This section provides upper and lower estimates for Laplacian on a metric graph equipped with  self-adjoint vertex conditions ($\delta$ and $\delta'$-type conditions).  This section is divided into two subsections.

\subsection{Estimates on the eigenvalues of \texorpdfstring{$\delta$}{Lg}-graph} In this part of the section we provide some upper and lower estimates on the eigenvalues of the $\delta$-graph. 
\begin{theorem} \label{ariturk1}
Let $\Gamma^s$ be finite, compact connected metric graph with standard conditions, then 
\begin{equation}\label{est1}
    \lambda_k(\Gamma^s) \leq \left( \frac{(k-1+\beta+|E|)\pi}{\mathcal{L}(\Gamma^s)}\right)^2.
\end{equation}
\end{theorem}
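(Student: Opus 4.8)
The plan is to bound $\lambda_k(\Gamma^s)$ from above by comparing $\Gamma^s$ with a graph whose spectrum we know explicitly, namely a flower graph, and then further with a disjoint union of intervals carrying Neumann conditions. First I would apply the flower-graph interlacing \eqref{dflower} (specialised to the standard case, i.e.\ all $\alpha_m=0$), which gives $\lambda_k(\Gamma^s)\le \lambda_k(\Gamma_f^s)$, where $\Gamma_f^s$ is the flower graph obtained by gluing all $|V|$ vertices of $\Gamma^s$ to a single vertex. This replaces the arbitrary topology by a wedge of $|E|$ loops of total length $\mathcal{L}(\Gamma^s)$, at the cost of shifting the index by at most $|V|-1 = |E|-\beta$; that is exactly where the combination $|E|+\beta$ will come from once we account for the $|E|$ edges as well.

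Next I would estimate the eigenvalues of the flower graph $\Gamma_f^s$. Cutting the central vertex of the flower into degree-one vertices with Neumann (equivalently standard, at degree one) conditions turns each loop into an interval; more carefully, imposing Neumann (standard) conditions at two points of each loop — or cutting each loop open at its vertex — decouples $\Gamma_f^s$ into $|E|$ disjoint intervals of lengths $\ell_{e}$, and by the interlacing for rank-one perturbations (Theorem~\ref{berk1}), each such cut raises the index by one. Since there are $|E|$ loops, the total index shift from the cuts is at most $|E|$, so $\lambda_k(\Gamma_f^s)\le \lambda_{k+|E|}\big(\bigsqcup_e I_e^{n}\big)$, where the right-hand side is the $(k+|E|)$-th eigenvalue of the disjoint union of Neumann intervals.

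Finally I would bound the eigenvalues of $\bigsqcup_e I_e^{n}$. Its spectrum is the union of the sets $\{k^2\pi^2/\ell_e^2 : k\ge 0\}$ over $e\in E$. A crude but sufficient count: below the level $\big(N\pi/\mathcal{L}(\Gamma^s)\big)^2$ the interval $I_e^n$ contributes at most $\lfloor N\ell_e/\mathcal{L}(\Gamma^s)\rfloor + 1$ eigenvalues, and summing over $e$ gives at most $N + |E|$ eigenvalues (using $\sum_e \ell_e = \mathcal{L}(\Gamma^s)$). Hence the $j$-th eigenvalue of the disjoint union is at most $\big((j-|E|)\pi/\mathcal{L}(\Gamma^s)\big)^2$ for $j\ge |E|$; applying this with $j = k+|E|$...

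— wait, that is not quite the right bookkeeping, so let me organise it cleanly. Combining the two index shifts above gives $\lambda_k(\Gamma^s)\le \lambda_{k + (|V|-1) + |E|}\big(\bigsqcup_e I_e^{n}\big)$ once one threads the flower interlacing and the loop cuts together (the $|V|-1$ from gluing to the flower, the $|E|$ from opening the loops). Since $|V|-1 = |E|-\beta$, the total shift is $2|E|-\beta$, and the eigenvalue-counting estimate for the disjoint Neumann intervals, $\lambda_j\big(\bigsqcup_e I_e^n\big) \le \big((j-|E|)\pi/\mathcal{L}\big)^2$, evaluated at $j = k + 2|E|-\beta$ yields
\begin{equation*}
\lambda_k(\Gamma^s) \le \left(\frac{(k-1 + \beta + |E|)\pi}{\mathcal{L}(\Gamma^s)}\right)^2,
\end{equation*}
after replacing $k+2|E|-\beta-|E| = k+|E|-\beta$ by the slightly weaker $k-1+\beta+|E|$ when $\beta$ is small, or — more likely the intended route — by directly choosing an explicit $(k-1+\beta+|E|)$-dimensional trial space of eigenfunctions of the decoupled Neumann intervals and feeding it into the min-max principle \eqref{minmax}. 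The main obstacle is the combinatorial bookkeeping: getting the index shifts from the flower interlacing and from opening the $|E|$ loops to add up to exactly $\beta+|E|-1$ rather than something larger, and matching the off-by-one in the eigenvalue count of the Neumann intervals. I expect the cleanest writeup is the trial-function one: exhibit, for each edge $e$, the first $n_e$ Neumann modes $\cos(j\pi x_e/\ell_e)$ extended by zero, choose the $n_e$ so that $\sum_e n_e = k-1+\beta+|E|$ and every chosen mode has eigenvalue $\le \big((k-1+\beta+|E|)\pi/\mathcal{L}\big)^2$ (possible since $\mathcal{L}=\sum_e\ell_e$), check these functions lie in the form domain $H^1$ with continuity at vertices — which forces using them only where they vanish at endpoints, hence the need to also invoke the interlacing rather than pure trial functions — and conclude by min-max.
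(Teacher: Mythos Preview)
Your bookkeeping contains two concrete errors, and once they are fixed the flower route no longer reaches the stated bound. First, gluing $\Gamma^s$ to its flower $\Gamma_f^s$ gives $\lambda_k(\Gamma^s)\le\lambda_k(\Gamma_f^s)$ with \emph{no} index shift: for standard conditions gluing only raises eigenvalues, so the ``$|V|-1$ from gluing to the flower'' does not belong. Second, fully decoupling the flower into $|E|$ Neumann intervals means splitting the single vertex of degree $2|E|$ into $2|E|$ degree-one vertices, which is $2|E|-1$ rank-one splits, not $|E|$. The corrected chain is
\[
\lambda_k(\Gamma^s)\le\lambda_k(\Gamma_f^s)\le\lambda_{k+2|E|-1}\Bigl(\bigsqcup_e I_e^n\Bigr)\le\Bigl(\tfrac{(k+2|E|-2)\pi}{\mathcal L}\Bigr)^2,
\]
and this is \emph{weaker} than the target $\bigl((k-1+|E|+\beta)\pi/\mathcal L\bigr)^2$ whenever $|V|\ge 3$ (equivalently $\beta\le|E|-2$). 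Your proposed ``replacement'' of $k+|E|-\beta$ by $k-1+\beta+|E|$ also goes the wrong way for trees ($\beta=0$), so the final step does not close.

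The paper avoids the flower entirely. It starts from an interval $I^s$ of length $\mathcal L(\Gamma^s)$ (spectrum known), splits it into $|E|$ Neumann pieces $\hat\Gamma^s$ of the prescribed edge lengths (splitting lowers, so $\lambda_j(I^s)\ge\lambda_j(\hat\Gamma^s)$), re-glues these into a \emph{tree} $T^s$ with the same edge set as $\Gamma^s$ ($|E|-1$ gluings, hence $\lambda_j(\hat\Gamma^s)\ge\lambda_{j-|E|+1}(T^s)$), and finally glues $\beta$ further pairs of vertices of $T^s$ to recover $\Gamma^s$ (hence $\lambda_{j-|E|+1}(T^s)\ge\lambda_{j-|E|+1-\beta}(\Gamma^s)$). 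The total shift is exactly $|E|-1+\beta$, which yields the theorem. The key idea your approach misses is that routing through a tree with the same edges ties the shift to $\beta$, whereas routing through the flower forces the shift to scale with $|E|$.

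As a side remark: had you cut $\Gamma^s$ \emph{directly} into its $|E|$ Neumann intervals (skipping the flower), the number of splits would be $\sum_v(d_v-1)=2|E|-|V|=|E|+\beta-1$, and combined with $\lambda_j\bigl(\bigsqcup_e I_e^n\bigr)\le((j-1)\pi/\mathcal L)^2$ this does give the theorem (in fact with constant $k+|E|+\beta-2$). But that is a different argument from the one you sketched.
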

\begin{proof}
Consider an interval $I^s$ of length $\mathcal{L}(\Gamma^s)$ equipped with standard vertex conditions. Obtain a path graph $P^s$ by creating vertices of degree  two at interior points of $I^s$ such that the number of edges and their lengths are same as in $\Gamma^s$. Cut through the degree two vertices of $P^s$ to form a new graph $\hat{\Gamma}^s$ which consists of $|E|$ disconnected edges. Create a metric tree $T^s$ by gluing the vertices of $\hat{\Gamma}^s$ in such a way that the graph $\Gamma^s$ can be obtained from $\Gamma^s$ by pairwise gluing of $\beta$ pair of vertices of $T^s$. Thus,
\begin{equation} 
    \left(\frac{k \pi}{\mathcal{L}(\Gamma^s)}\right)^2=\lambda_k(I^s)=\lambda_k(P^s) \geq \lambda_k(\hat{\Gamma}^s) \geq \lambda_{k-|E|+1}(T^s) \geq \lambda_{k-|E|+1-\beta}(\Gamma^s).
\end{equation}
Equivalently, this estimates can be written as, 
$$\lambda_k(\Gamma^s) \leq \left( \frac{(k-2+2\beta+|V|)\pi}{\mathcal{L}(\Gamma^s)}\right)^2.$$
\end{proof}
We can also obtain a lower bound on $\lambda_k(\Gamma^s)$ as follows. However, this bound is not a better estimate as it only makes sense when $k\geq |E|-1$.
$$\left(\frac{(k-|E|+1) \pi}{\mathcal{L}(\Gamma^s)} \right)^2=\lambda_{k-|E|+1}(I)=\lambda_{k-|E|+1}(P) \leq \lambda_k(\hat{\Gamma}^s) \leq \lambda_k(T^s) \leq  \lambda_k(\Gamma^s).$$
We use this estimate to obtain a few bounds on the eigenvalues of $\Gamma^{\delta}$. Let $\alpha=\sum\limits_{m=1}^{|V|} \alpha_m$ denote the total strength of $\Gamma^{\delta}$ and $\Gamma^{\delta}_f$ be the corresponding flower graph. Let $\Gamma^s_f$ be the same underlying flower graph equipped with standard conditions. If $\alpha \leq 0$, then the above result implies that $$\lambda_k(\Gamma^{\delta}) \leq \lambda_k(\Gamma^{\delta}_f) \leq  \lambda_k(\Gamma^s_f) \leq  \left( \frac{(k-1+2|E|)\pi}{\mathcal{L}(\Gamma^{\delta})}\right)^2.$$
Similarly, if $\alpha > 0$, then 
$$\lambda_k(\Gamma^{\delta}) \leq \lambda_k(\Gamma^{\delta}_f) \leq  \lambda_{k+1}(\Gamma^s_f) \leq  \left( \frac{(k+2|E|)\pi}{\mathcal{L}(\Gamma^{\delta})}\right)^2.$$
\begin{theorem} \label{delta1}
Let $\Gamma^{\delta}$ be a metric graph with $|E|$ number of edges and $|V|$ number of vertices. Then the following assertions hold. 
\begin{enumerate}
    \item If $\alpha=0$, then 
    \begin{equation*}
        \left( \frac{(k-|V|+1)\pi}{\mathcal{L}(\Gamma^{\delta})}\right)^2  \leq    \lambda_k(\Gamma^{\delta}) \leq \left( \frac{(k+|E|)\pi}{\mathcal{L}(\Gamma^{\delta})}\right)^2.
    \end{equation*}
    \item If $\alpha > 0 $, then 
    \begin{equation}
   \left( \frac{(k-|V|+1)\pi}{\mathcal{L}(\Gamma^{\delta})}\right)^2  \leq    \lambda_k(\Gamma^{\delta}) \leq \left( \frac{(k+|E|+1)\pi}{\mathcal{L}(\Gamma^{\delta})}\right)^2.
    \end{equation}
    \item If $\alpha < 0$, then 
    \begin{equation}
        \left( \frac{(k-|V|)\pi}{\mathcal{L}(\Gamma^{\delta})}\right)^2  \leq    \lambda_k(\Gamma^{\delta}) \leq \left( \frac{(k+|E|)\pi}{\mathcal{L}(\Gamma^{\delta})}\right)^2.
    \end{equation}
\end{enumerate}
\end{theorem}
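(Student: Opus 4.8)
The plan is to bound $\lambda_k(\Gamma^{\delta})$ from both sides by eigenvalues of two graphs whose spectra are explicit, and then to read off the six inequalities from the rank-one interlacing of Theorem~\ref{berk1}, the inequalities \eqref{interlacing1}--\eqref{interlacing2}, and the min-max principle \eqref{minmax}. Write $\mathcal{L}:=\mathcal{L}(\Gamma^{\delta})$. The two reference objects are the Dirichlet graph $\Gamma^{d}$ --- equivalently, since a Dirichlet vertex decouples the incident edges, the disjoint union of $|E|$ intervals of lengths $\ell_e$ with Dirichlet endpoints --- and the single interval $I^{d}$ of length $\mathcal{L}$ with Dirichlet conditions at both endpoints, for which $\lambda_n(I^{d})=(n\pi/\mathcal{L})^2$. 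The reusable lemma I would prove first compares these: lining up the $|E|$ Dirichlet intervals recovers $[0,\mathcal{L}]$, now carrying Dirichlet conditions at its two ends together with $|E|-1$ extra Dirichlet conditions at the intermediate joints; replacing the Dirichlet condition at one such degree-two joint by the standard condition leaves the form expression $\int_{\Gamma}|\varphi'|^2$ unchanged while enlarging the form domain by exactly one dimension, so it is a positive rank-one perturbation in the sense preceding Theorem~\ref{berk1}. Iterating that theorem $|E|-1$ times, together with plain min-max monotonicity, yields $(m\pi/\mathcal{L})^2=\lambda_m(I^{d})\le\lambda_m(\Gamma^{d})\le\lambda_{m+|E|-1}(I^{d})=((m+|E|-1)\pi/\mathcal{L})^2$ for all $m\ge1$.

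For the upper bounds: raising every coupling $\alpha_m$ to $+\infty$ turns $\Gamma^{\delta}$ into $\Gamma^{d}$, so $\lambda_k(\Gamma^{\delta})\le\lambda_k(\Gamma^{d})$ (the first half of \eqref{interlacing2}, valid for all signs of the couplings since it only requires increasing each $\alpha_m$). Combining with the right half of the lemma gives $\lambda_k(\Gamma^{\delta})\le((k+|E|-1)\pi/\mathcal{L})^2$, which implies all three stated upper bounds. The same estimate can alternatively be routed through \eqref{dflower} and the standard flower graph $\Gamma^{s}_{f}$, which is the route genuinely sensitive to $\sgn\alpha$: when $\alpha>0$, bounding $\Gamma^{\delta}_{f}$ above by $\Gamma^{s}_{f}$ via Theorem~\ref{berk1} costs one index, which is the source of the extra $+1$ in part~(2).

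For the lower bounds, when $\alpha<0$ I would use only the second half of \eqref{interlacing2}, $\lambda_{k-|V|}(\Gamma^{d})\le\lambda_k(\Gamma^{\delta})$ (raise all $|V|$ couplings to $+\infty$ and iterate the last inequality of \eqref{interlacing1}), and then the left half of the lemma, obtaining $\lambda_k(\Gamma^{\delta})\ge((k-|V|)\pi/\mathcal{L})^2$, which is part~(3). When $\alpha=\sum_m\alpha_m\ge0$, at least one $\alpha_{v_1}$ is non-negative; raising only the remaining $|V|-1$ couplings to $+\infty$ and iterating \eqref{interlacing1} gives $\lambda_{k-|V|+1}(\Gamma')\le\lambda_k(\Gamma^{\delta})$, where $\Gamma'$ is Dirichlet at every vertex except $v_1$, which still carries the $\delta$-coupling $\alpha_{v_1}\ge0$. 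On $\Gamma'$ one has $h'[\varphi]=\int_{\Gamma}|\varphi'|^2+\alpha_{v_1}|\varphi(v_1)|^2\ge\int_{\Gamma}|\varphi'|^2$, so $\lambda_m(\Gamma')$ dominates the $m$-th eigenvalue of the graph $\Gamma'_0$ obtained by placing the standard condition at $v_1$; a final comparison of $\Gamma'_0$ --- a disjoint union of Dirichlet intervals together with one Kirchhoff star at $v_1$, of total length $\mathcal{L}$ --- with $I^{d}$ is meant to yield $\lambda_{k-|V|+1}(\Gamma'_0)\ge((k-|V|+1)\pi/\mathcal{L})^2$, which gives parts~(1) and~(2); the case $\alpha=0$ may also be obtained from $\Gamma^{\delta}_{f}=\Gamma^{s}_{f}$ and \eqref{dflower}.

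The mechanical part --- tracking the index shifts through the chains of rank-one interlacings --- is routine once one is careful to apply Theorem~\ref{berk1} in the correct direction at each step. The hard part will be that last comparison in the $\alpha\ge0$ case: bounding the eigenvalues of the partially-Dirichlet graph $\Gamma'_0$ from below by $(n\pi/\mathcal{L})^2$ is not automatic, because the one retained non-Dirichlet vertex can support a low-frequency mode (a quarter-wave on a pendant edge at $v_1$, or a constant mode if $v_1$ carries a loop), so a naive argument only delivers $\lambda_1(\Gamma'_0)\ge\pi^2/(4\mathcal{L}^2)$. Sharpening this --- for instance by taking $v_1$ to be an interior vertex of degree $\ge2$ whenever the graph permits, or by charging the length of the pendant edge against $\mathcal{L}$ more carefully --- is where I expect most of the real work, and it is presumably why the statement separates the sign of $\alpha$ and settles for slightly sub-optimal constants.
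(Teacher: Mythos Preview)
Your route through the full Dirichlet graph $\Gamma^{d}$ and the Dirichlet interval $I^{d}$ is a legitimate alternative to the paper's argument and handles the upper bounds and the $\alpha<0$ lower bound cleanly (indeed your upper bound $((k+|E|-1)\pi/\mathcal{L})^{2}$ is uniformly at least as good as all three stated upper bounds). The paper instead passes everything through the \emph{standard flower} $P^{s}_{f}$ obtained from the path $P^{s}$ of length $\mathcal{L}$ by gluing its $|E|+1$ vertices, using \eqref{dinterlac} to sandwich $\lambda_{k}(P^{s}_{f})$ between explicit interval eigenvalues; the two routes are equally elementary.

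The gap is exactly where you locate it, but it is a dead end, not a hard step. The inequality $\lambda_{m}(\Gamma'_{0})\ge (m\pi/\mathcal{L})^{2}$ that your $\alpha\ge0$ lower-bound chain needs is simply false: your own quarter-wave example on a pendant edge at $v_{1}$ is a counterexample to it, not a sharpness issue, and no choice of $v_{1}$ or length-accounting repairs this, since keeping even one non-Dirichlet vertex of the original graph can drop the bottom eigenvalue below $(\pi/\mathcal{L})^{2}$. The paper avoids the problem by committing fully to the flower route you only mention in passing. After gluing \emph{all} $|V|$ vertices of $\Gamma^{\delta}$ into one (which by \eqref{dflower} costs $|V|-1$ indices from below), the resulting flower $\Gamma^{\delta}_{f}$ differs from the standard flower $P^{s}_{f}$ at a \emph{single} vertex, and the strength there is precisely the total $\alpha=\sum_{m}\alpha_{m}$. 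Hence for $\alpha\ge0$ one has $\lambda_{j}(\Gamma^{\delta}_{f})\ge\lambda_{j}(P^{s}_{f})$ with \emph{no} index loss from \eqref{interlacing1}, and then $\lambda_{j}(P^{s}_{f})\ge\lambda_{j}(P^{s})$ by gluing monotonicity \eqref{dinterlac}. Chaining gives $\lambda_{k}(\Gamma^{\delta})\ge\lambda_{k-|V|+1}(P^{s}_{f})\ge\lambda_{k-|V|+1}(P^{s})$, which is the bound in parts~(1) and~(2). The point you are missing is that concentrating all the coupling into one vertex makes the sign of the \emph{total} $\alpha$ the only relevant datum, whereas your $\Gamma'_{0}$ retains the geometry near one original vertex and is therefore uncontrollable from below.
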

\begin{proof}
Let $I^s$ be an interval of length $\mathcal{L}(\Gamma^{\delta})$ equipped with standard vertex conditions and let $P^s$ be a path graph obtained by creating vertices of degree two at interior points of $I^s$ such that the number of edges and their lengths are same as in $\Gamma^{\delta}$. Then, 
$$\lambda_k(I^s)=\lambda_k(P^s).$$
Obtain a flower graph $P^s_f$ from $P^s$, then \eqref{dinterlac} implies that 
$$\lambda_{k-|E|}(P^s_f) \leq \lambda_{k}(P^s) \leq \lambda_k(P^s_f) \leq \lambda_{k+|E|}(P^s).$$
Thus, 
\begin{equation*}
    \left( \frac{k\pi}{\mathcal{L}(\Gamma^{\delta})}\right)^2=\lambda_k(I^s)=\lambda_k(P^s)\leq \lambda_k(P^s_f) \leq \lambda_{k+|E|}(P^s)=\lambda_{k+|E|}(I^s)=    \left(\frac{(k+|E|)\pi}{\mathcal{L}(\Gamma^{\delta})}\right)^2.
\end{equation*}
\begin{enumerate}
    \item If $\alpha=0$, then 
    $$  \left( \frac{(k-|V|+1)\pi}{\mathcal{L}(\Gamma^{\delta})}\right)^2  \leq \lambda_{k-|V|+1}(P^s_f) \leq \lambda_k(\Gamma^{\delta})\leq \lambda_k(P^s_f) \leq \left( \frac{(k+|E|)\pi}{\mathcal{L}(\Gamma^{\delta})}\right)^2.$$
    \item If $\alpha > 0$, then 
    $$  \left( \frac{(k-|V|+1)\pi}{\mathcal{L}(\Gamma^{\delta})}\right)^2  \leq \lambda_{k-|V|+1}(P^s_f)     \leq \lambda_k(\Gamma^{\delta})\leq \lambda_{k+1}(P^s_f) \leq \left( \frac{(k+|E|+1)\pi}{\mathcal{L}(\Gamma^{\delta})}\right)^2.$$
    \item If $\alpha < 0$, then 
    $$  \left( \frac{(k-|V|)\pi}{\mathcal{L}(\Gamma^{\delta})}\right)^2  \leq \lambda_{k-|V|}(P^s_f)     \leq \lambda_k(\Gamma^{\delta})\leq \lambda_k(P^s_f) \leq \left( \frac{(k+|E|)\pi}{\mathcal{L}(\Gamma^{\delta})}\right)^2.$$
\end{enumerate}
\end{proof}
\begin{theorem} \label{ariturk2}
Let $\Gamma$ be finite compact connected metric graph with pendant vertices equipped with Dirichlet conditions and all other internal vertices the standard conditions are imposed. Let $|D|$ and $|S|$ denote the number of Dirichlet and number of standard vertices in $\Gamma$. Then \begin{equation}
    \lambda_k(\Gamma) \leq \left( \frac{(k-2+2\beta+2|D|+|S|)\pi}{\mathcal{L}(\Gamma)}\right)^2.
\end{equation}
\end{theorem}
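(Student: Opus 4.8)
The plan is to reduce Theorem \ref{ariturk2} to the all-standard estimate of Theorem \ref{ariturk1} and to absorb the cost of the Dirichlet conditions into an index shift produced by the interlacing \eqref{interlacing1}.

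First I would introduce the auxiliary graph $\Gamma^s$ having exactly the same edges and edge lengths as $\Gamma$ but carrying standard conditions at \emph{every} vertex; at the $|D|$ pendant vertices, which have degree one, ``standard'' coincides with the Neumann condition, i.e.\ the $\delta$-condition of strength $0$. Since $\Gamma$ and $\Gamma^s$ share the same underlying metric graph they have the same total length $\mathcal{L}(\Gamma)$, the same edge number $|E|$, the same Betti number $\beta$, and $|V|=|D|+|S|$ (and $\Gamma^s$ is again finite, compact and connected); hence Theorem \ref{ariturk1} gives at once
$$\lambda_k(\Gamma^s)\leq\left(\frac{(k-2+2\beta+|V|)\pi}{\mathcal{L}(\Gamma)}\right)^2=\left(\frac{(k-2+2\beta+|D|+|S|)\pi}{\mathcal{L}(\Gamma)}\right)^2 .$$

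Next I would turn $\Gamma^s$ into $\Gamma$ one pendant vertex at a time, using that at a vertex of degree one the Dirichlet condition is the $\delta$-condition of strength $+\infty$, so that changing a pendant vertex from standard to Dirichlet is a $\delta$-strength change from $0$ to $\infty$. By \eqref{interlacing1} --- which, as noted in the introduction, is sensitive only to the vertex being changed and permits arbitrary self-adjoint conditions elsewhere --- such a change obeys the rightmost inequality $\lambda_k(\Gamma^\delta;v)\leq\lambda_{k+1}(\Gamma^\delta_0)$. Concretely, enumerate the Dirichlet vertices of $\Gamma$ as $v_1,\dots,v_{|D|}$, let $\Gamma_j$ be the graph with Dirichlet at $v_1,\dots,v_j$ and standard elsewhere (so $\Gamma_0=\Gamma^s$ and $\Gamma_{|D|}=\Gamma$), and apply this inequality to the change at $v_j$ to get $\lambda_k(\Gamma_j)\leq\lambda_{k+1}(\Gamma_{j-1})$. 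Iterating $|D|$ times yields $\lambda_k(\Gamma)\leq\lambda_{k+|D|}(\Gamma^s)$.

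Finally I would combine the two bounds: substituting $k+|D|$ for $k$ in the estimate for $\Gamma^s$ and using $|E|=|V|+\beta-1=|D|+|S|+\beta-1$ to rewrite $(k+|D|)-2+2\beta+|V|=k-2+2\beta+2|D|+|S|$ reproduces exactly the asserted inequality. I do not foresee a genuine obstacle here: the only points requiring care are the bookkeeping of the cumulative shift $+|D|$ in the iteration of \eqref{interlacing1} and the routine check that \eqref{interlacing1} applies at each step because a pendant vertex has degree one. A more self-contained alternative would run the surgery chain from the proof of Theorem \ref{ariturk1} (interval $\to$ path graph $\to$ disjoint edges $\to$ spanning tree $\to\Gamma^s$) and then append the same $|D|$-fold Dirichletization; this yields the identical constant but is longer.
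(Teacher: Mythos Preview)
Your proposal is correct and follows essentially the same route as the paper: pass to the all-standard graph $\Gamma^s$, apply Theorem \ref{ariturk1}, then use the interlacing \eqref{interlacing1} $|D|$ times to absorb the Dirichlet vertices as an index shift $\lambda_k(\Gamma)\leq\lambda_{k+|D|}(\Gamma^s)$. The only difference is presentational---you spell out the $|D|$-fold iteration explicitly, whereas the paper simply records the resulting inequality $\lambda_k(\Gamma)\leq\lambda_{k+|D|}(\Gamma^s)$ in one line.
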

\begin{proof}
Let $\Gamma^s$ be same underlying metric graph as $\Gamma$, but equipped with standard vertex conditions. Then by \eqref{est1}
$$\lambda_k(\Gamma^s) \leq \left( \frac{(k-1+\beta+|E|)\pi}{\mathcal{L}(\Gamma^s)}\right)^2.$$
Obtain the graph $\Gamma$ by imposing Dirichlet conditions at required vertices of $\Gamma^s$. Then by \eqref{interlacing1}
$$\lambda_{k-|D|}(\Gamma) \leq \lambda_k(\Gamma^s) \leq \lambda_k(\Gamma)\leq \lambda_{k+|D|}(\Gamma^s).$$
$$\lambda_k(\Gamma) \leq \left( \frac{(k-2+2\beta+|D|+|V|)\pi}{\mathcal{L}(\Gamma^s)}\right)^2. $$
Equivalently, this estimates can be written as, 
$$\lambda_k(\Gamma) \leq \left( \frac{(k-2+2\beta+2|D|+|S|)\pi}{\mathcal{L}(\Gamma)}\right)^2.$$
\end{proof}
For the metric graphs in \textcolor{blue}{Theorem} \eqref{ariturk1} and \textcolor{blue}{Theorem} \eqref{ariturk2},  similar but improved estimates on the eigenvalues can be found in \cite[\textcolor{blue}{Lemma 1.5}]{A}. However, for any flower graph equipped with standard conditions the estimate \eqref{est1} coincides with the estimate presented in \cite[\textcolor{blue}{Lemma 1.5}]{A}.
\subsection{Estimates on the eigenvalues of \texorpdfstring{$\delta'$}{Lg}-graph}
We establish a few estimates on the lowest eigenvalue of the $\delta'$-graph using some trials functions from the domain of the quadratic form and also on general eigenvalues using some established inequalities.  \\
Since the domain of the quadratic form of the $\delta'$-graph $\Gamma^{\delta'}$ contain a constant function, in particular it contains $\hat \varphi(x)=\frac{1}{\sqrt{\mathcal{L}(\Gamma^{\delta'})}}$. Therefore, 
\begin{align*}
 \lambda_1(\Gamma^{\delta'})&= \underset{\varphi \in D(h^{\delta'})}{\min} h^{\delta'}[\varphi] \\
 &=\underset{\varphi \in D(h^{\delta'})}{\min} \left( \int_{\Gamma^{\delta'}}|\varphi'(x)|^2 dx +\sum\limits_{m=1}^{|V|} \frac{1}{\alpha'_m} \left|\sum \limits_{x_i \in v_m}\varphi(x_i)\right|^2 \right)  \\
 &\leq  \int_{\Gamma^{\delta'}}|\hat \varphi'(x)|^2 dx +\sum\limits_{m=1}^{|V|} \frac{1}{\alpha'_m} \left|\sum \limits_{x_i \in v_m}\hat \varphi(x_i)\right|^2 \\
 &= \sum\limits_{m=1}^{|V|} \frac{1}{\alpha'_m} \left|\sum \limits_{x_i \in v_m} \frac{1}{\sqrt{\mathcal{L}(\Gamma^{\delta'})}} \right|^2 \\
 &=\frac{1}{\mathcal{L}(\Gamma^{\delta'})} \sum\limits_{m=1}^{|V|} \frac{1}{\alpha'_m} \left|\sum \limits_{x_i \in v_m} 1 \right|^2 \\
 &=\frac{1}{\mathcal{L}(\Gamma^{\delta'})} \sum\limits_{m=1}^{|V|} \frac{1}{\alpha'_m} |d_{v_m}|^2 \\
 &=\frac{1}{\mathcal{L}(\Gamma^{\delta'})} \sum\limits_{m=1}^{|V|} \frac{1}{\alpha'_m} |d_{v_m}|^2
 \end{align*}
\begin{remark}
Even if there are some vertices where $\alpha'_m=0$, we can still use this upper bound and in this case the summation is taken over the vertices where strengths $\alpha'_m$ are non zero.
\begin{equation*}
    \lambda_1(\Gamma^{\delta'}) \leq \frac{1}{\mathcal{L}(\Gamma^{\delta'})} \left( \sum\limits_{\{v \in V| \alpha'_v \neq 0\}} \frac{d^2_{v}}{\alpha'_v} \right)
\end{equation*}
\end{remark}

\begin{enumerate}
   \item If the degree of each vertex is $n$ then, $$\lambda_1(\Gamma^{\delta'}) \leq \frac{n^2} { \mathcal{L}(\Gamma^{\delta'})} \sum\limits^{|V|}_{m=1} \frac{1}{\alpha'_m}.$$
    
    \item If the strength at each vertex is same, say $\alpha'$, then 
    $$\lambda_1(\Gamma^{\delta'}) \leq \frac{1}{\alpha' \mathcal{L}(\Gamma^{\delta'})} \sum\limits^{|V|}_{m=1} |d_{v_m}|^2.$$
    
    \item In addition, if the total number of vertices are $|V|$ and degree of each vertex is $n$ then, $$\lambda_1(\Gamma^{\delta'}) \leq \frac{n^2 \ |V|}{\alpha' \mathcal{L}(\Gamma^{\delta'})}.$$
    \item If the strengths are evenly distributed, then
    $$\lambda_1(\Gamma^{\delta'}) \leq \frac{n^2 \ |V|^2 }{\alpha' \mathcal{L}(\Gamma^{\delta'})}.$$
    \end{enumerate}
    The question now is how close one can push the lowest eigenvalue towards this upper bound. The quadratic form does not converge very close to lowest eigenvalue when the strengths are evenly distributed. So, evenly distribution does not increase the eigenvalue. How can we get better estimation?   \\
Total degree of a graph is defined as the sum of degree of each vertex. Handshaking theorem states that the degree of a graph $\Gamma$ equals twice the number of edges in $\Gamma$ and the total degree of a graph is even.
\begin{equation*}
    degree \ of \  a \  graph = \sum_{v \in V} deg(v)=2 \ (number \ of \  edges \ in \ \Gamma)
\end{equation*}

\begin{remark}
We express the upper bound on the lowest eigenvalue in terms of total degree of a graph and total number of edges. Although, the following is not a good approximation as compare to earlier one.
Let $|V|$ and $|E|$ denote the total number of vertices and total number of edges in a graph $\Gamma$, respectively. Since
\begin{align*}
    |d_{v_1}|^2+|d_{v_2}|^2+\cdots+|d_{v_{|V|}}|^2 &\leq (|d_{v_1}|+|d_{v_2}|+\cdots+|d_{v_{|V|}}|)^2 \\
    &=[deg(\Gamma)]^2 \\
    &=4 \ |E|^2.
\end{align*}
If the strengths are evenly distributed, then 
\begin{equation*}
    \lambda_1(\Gamma^{\delta'}) \leq \frac{|V|}{\beta \mathcal{L}(\Gamma^{\delta'})} \ [deg(\Gamma^{\delta'})]^2,
\end{equation*}
and in terms of number of edges,
\begin{equation*}
    \lambda_1(\Gamma^{\delta'}) \leq \frac{4 \ |V| \ |E|^2}{\beta \mathcal{L}(\Gamma^{\delta'})}. 
\end{equation*}
If we let $\alpha'_m=\infty$ at each vertex then, $\lambda_1(\Gamma)=0$. Maybe we can increase the eigenvalue and push it towards the upper bound by adding the maximum number of edges with a large length and minimum vertices. 
\end{remark}

Since there is no restriction on the domain of the quadratic form, so we could have used other functions that belongs to $H^1(\Gamma^{\delta'})$ as a test functions i.e. $\cos(x), \sin(x)$, but the bounds that we get using these functions are not so explicitly nice. Let us use the functions $\varphi(x)=\sin(\frac{\pi x}{\ell_j}), \quad x \in e_j=[0,\ell_j]$ and $\varphi(x)=\cos(\frac{\pi x}{\ell_j}), \quad x \in e_j=[\frac{-\ell_j}{2},\frac{\ell_j}{2}]$. Then, 
\begin{equation*}
    \lambda_1(\Gamma^{\delta'}) \leq \frac{\pi^2}{L(\Gamma^{\delta'})} \sum\limits_{j=1}^{|E|}\frac{1}{\ell_j}.
\end{equation*}
If the length of each edge is equal, say, $\ell$, then 
\begin{equation*}
    \lambda_1(\Gamma^{\delta'}) \leq \left(\frac{\pi}{\ell} \right)^2.
\end{equation*}
We can also use the function $\varphi(x)=\sin(\frac{2 \pi x}{\ell_j}), \quad x \in e_i=[0,\ell_j]$, as a test function and we get an upper bound
\begin{equation*}
    \lambda_1(\Gamma^{\delta'} ) \leq \frac{4 \pi^2}{L(\Gamma^{\delta'})} \sum\limits_{j=1}^{|E|} \frac{1}{\ell_j}.
\end{equation*}
Let us consider the function $\varphi(x)=\cos(\frac{2 \pi x}{\ell_j}), \quad x \in e_j=[0,\ell_j]$. Clearly, this function is in the domain of the quadratic form and can be used as a test function to get an upper bound on first eigenvalue and $\varphi(0)=\varphi(\ell_j)=1$ implies that $\sum \limits_{x_i \in v_m}\varphi(x_i)=d_{v_m}$, for each $m=1,2,\cdots,|V|.$

\begin{align*}
    h^{\delta'}[\varphi]&=\int_{\Gamma^{\delta'}}|\varphi'(x)|^2 dx+\sum\limits_{m=1}^{|V|}\frac{1}{\alpha'_m} \left|\sum\limits_{x_i \in v_m} \varphi(x_i)\right|^2,\\
    &= 2\pi^2 \sum\limits_{j=1}^{|E|} \frac{1}{\ell_j}+\sum\limits_{m=1}^{|V|}\frac{d^2_{v_m}}{\alpha'_m}.
\end{align*}
\begin{equation*}
    <\varphi,\varphi>=\int_{\Gamma^{\delta'}}|\varphi(x)|^2 dx=\sum\limits_{j=1}^{|E|} \frac{\ell_j}{2}=\frac{L(\Gamma^{\delta'})}{2}.
\end{equation*}
\begin{equation*}
    \lambda_1(\Gamma^{\delta'}) \leq \frac{2}{L(\Gamma^{\delta'})} \left( 2\pi^2 \sum\limits_{j=1}^{|E|} \frac{1}{\ell_j}+\sum\limits_{m=1}^{|V|}\frac{d^2_{v_m}}{\alpha'_m}  \right).
\end{equation*}
For any fix real number $\lambda$, the eigenvalue counting function $N_{\Gamma}(\lambda)$ is defined as the number of eigenvalues of a graph $\Gamma$ smaller than $\lambda$. Since the quantum graph $\Gamma$ is finite compact, and the operator is self-adjoint with a discrete spectrum bounded from below. Therefore the value of function $N_{\Gamma}(\lambda)$ is finite. 
$$ N_{\Gamma}(\lambda)= \# \{\lambda_i \in \sigma(\Gamma) : \lambda_i \leq \lambda\}. $$
\begin{theorem} \label{ec}
Let $\Gamma^a$ be finite compact graph of total length $\mathcal{L}(\Gamma^a)$ with $|V|$ the number of vertices and $|E|$ the number of edges. Let each vertex of $\Gamma^a$ be equipped with anti-standard vertex conditions. Then 
\begin{equation} \label{ec1}
    \left(\frac{\pi}{\mathcal{L}(\Gamma^a)} \right)^2 (k-|V|)^2 \leq \lambda_k(\Gamma^a) \leq \left(\frac{\pi}{\mathcal{L}(\Gamma^a)} \right)^2 (k+|E|-1)^2.
\end{equation}
\end{theorem}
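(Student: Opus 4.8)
The plan is to sandwich $\lambda_k(\Gamma^a)$ between eigenvalues of two reference quantum graphs of the same total length $\mathcal{L}:=\mathcal{L}(\Gamma^a)$ whose spectra are explicit. Let $P^a$ be the path graph built from the $|E|$ edges of $\Gamma$ (keeping their lengths), with anti-standard conditions at all of its $|E|+1$ vertices; by the path-graph example of Section~2, $\lambda_m(P^a)=m^2\pi^2/\mathcal{L}^2$ for $m\geq 1$. Let $\hat{\Gamma}^a$ be the disjoint union of those $|E|$ edges with anti-standard (equivalently, since every vertex has degree one, Dirichlet) conditions at all $2|E|$ endpoints; its spectrum is the union of the single-interval Dirichlet spectra, and a comparison with the single interval $I^a$ of length $\mathcal{L}$ (a rank-$(|E|-1)$ form restriction) gives $\lambda_m(\hat{\Gamma}^a)\geq (m\pi/\mathcal{L})^2$. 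Finally let $\Gamma^a_f$ be the flower graph of $\Gamma$, which depends only on $(|E|,\mathcal{L},\{\ell_e\})$ and is therefore the same graph whether we start from $\Gamma^a$, $P^a$ or $\hat{\Gamma}^a$. All these graphs carry the anti-standard quadratic form $\int_\Gamma|\varphi'|^2$ on their respective restricted form domains.

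The monotonicity input is that gluing two anti-standard vertices does not increase the eigenvalues: anti-standard corresponds to $\alpha'_m=0$, so $\alpha'_1\cdot\alpha'_2=0$ and this is case~(6) of Theorem~\ref{roh} (the merged vertex again has strength $\alpha'_0=0$, hence stays anti-standard). Moreover the sharp interlacing
$$\lambda_k(\Gamma^a_{\mathrm{glued}})\leq\lambda_k(\Gamma^a_{\mathrm{unglued}})\leq\lambda_{k+1}(\Gamma^a_{\mathrm{glued}})$$
holds, since the balance condition at the merged vertex is a codimension-one restriction inside the product of the two separate balance conditions and the two forms agree on that subspace, so Theorem~\ref{berk1} applies. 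I would first iterate this along chains of gluings: using the left inequality from $\hat{\Gamma}^a$ down to $\Gamma^a$ (the $2|E|-|V|$ gluings that rebuild the vertex structure of $\Gamma$) gives $\lambda_k(\Gamma^a)\leq\lambda_k(\hat{\Gamma}^a)$, and using $\lambda_k(\mathrm{unglued})\leq\lambda_{k+1}(\mathrm{glued})$ along the $|E|-1$ gluings from $\hat{\Gamma}^a$ to $P^a$ gives $\lambda_k(\hat{\Gamma}^a)\leq\lambda_{k+|E|-1}(P^a)$.

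Chaining these yields the upper bound at once: $\lambda_k(\Gamma^a)\leq\lambda_k(\hat{\Gamma}^a)\leq\lambda_{k+|E|-1}(P^a)=\big((k+|E|-1)\pi/\mathcal{L}\big)^2$. For the lower bound the natural route is through the flower: $\Gamma^a$ is obtained from $\Gamma^a_f$ by $|V|-1$ ungluings, so the left interlacing gives $\lambda_k(\Gamma^a)\geq\lambda_k(\Gamma^a_f)$; and $\Gamma^a_f$ is obtained from $P^a$ by $|E|$ gluings, so $\lambda_k(P^a)\leq\lambda_{k+|E|}(\Gamma^a_f)$, i.e. $\lambda_m(\Gamma^a_f)\geq\big((m-|E|)\pi/\mathcal{L}\big)^2$. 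Together this already gives $\lambda_k(\Gamma^a)\geq\big((k-|E|)\pi/\mathcal{L}\big)^2$. Since the anti-standard form is nonnegative the lower bound is informative only for $k\geq|V|$; for $k<|V|$ it is to be read with $(k-|V|)$ replaced by $(k-|V|)_+=0$.

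The main obstacle is upgrading the index shift in the lower bound from $|E|$ to the sharper $|V|$ that is claimed: when $\beta=|E|-|V|+1\geq 2$ the flower (or $\hat{\Gamma}^a$) bookkeeping above is genuinely too lossy. To recover the stated constant one should perform the $|V|$ rank-one form perturbations at the vertices of $\Gamma$ \emph{before} disconnecting anything, so that each vertex costs only a single unit of index — concretely, compare $\Gamma^a$ directly with the fully Neumann graph $\Gamma^n$ (the disjoint union of $|E|$ Neumann intervals), for which the anti-standard form domain has codimension $|V|$ in the Neumann form domain, giving $\lambda_k(\Gamma^n)\leq\lambda_k(\Gamma^a)\leq\lambda_{k+|V|}(\Gamma^n)$ by repeated use of Theorem~\ref{berk1}, and then lower-bound $\lambda_k(\Gamma^n)$ by Neumann bracketing against the single interval $I^n$ of length $\mathcal{L}$. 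The delicate points throughout are keeping track of exactly which of the two interlacing inequalities of Theorem~\ref{berk1} is available at each step, and that at a degree-one vertex anti-standard is Dirichlet (so such vertices contribute nothing to the perturbation count).
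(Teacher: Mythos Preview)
Your upper bound is correct, and although you phrase it via surgery while the paper argues through eigenvalue counting functions, the two are really the same computation: your $\hat\Gamma^a$ is exactly the paper's Dirichlet-decoupled graph $\Gamma^d$, and the chain $\lambda_k(\Gamma^a)\le\lambda_k(\hat\Gamma^a)\le\lambda_{k+|E|-1}(P^a)$ is the interlacing translation of the paper's estimate $N_{\Gamma^d}(\lambda)\ge[\sqrt\lambda\,\mathcal L/\pi]-|E|+1$ together with $N_{\Gamma^d}\le N_{\Gamma^a}$.

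The lower bound, however, has a genuine gap. Your proposed Neumann fix does not reach the $|V|$-shift: from $\lambda_k(\Gamma^n)\le\lambda_k(\Gamma^a)$ you would need a \emph{lower} bound on $\lambda_k(\Gamma^n)$, but ``Neumann bracketing against $I^n$'' produces the opposite inequality $\lambda_k(\Gamma^n)\le\lambda_k(I^n)$, since disconnecting with Neumann conditions lowers eigenvalues. Any comparison of the disjoint Neumann graph $\Gamma^n$ with a single interval must pay for the $|E|$ zero eigenvalues of $\Gamma^n$, so the best one obtains this way is again $\lambda_k(\Gamma^n)\ge((k-|E|)\pi/\mathcal L)^2$, i.e.\ the same $|E|$-shift you already had through the flower. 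The paper's route is different --- it compares with Dirichlet and asserts $\lambda_k(\Gamma^d)\le\lambda_{k+|V|}(\Gamma^a)$, then uses $N_{\Gamma^d}(\lambda)\le[\sqrt\lambda\,\mathcal L/\pi]$ --- but note that the interlacing step there rests on the claim that $D(h^d)$ has codimension $|V|$ in $D(h^a)$, whereas replacing anti-standard by full Dirichlet at a vertex of degree $d_v$ adds $d_v-1$ constraints, giving total codimension $\sum_v(d_v-1)=2|E|-|V|$, not $|V|$. In fact the $|V|$-shift lower bound already fails for the equilateral two-petal flower ($|V|=1$, $|E|=2$, petal length $\ell=\mathcal L/2$): one computes $\lambda_4(\Gamma^a)=(\pi/\ell)^2=4\pi^2/\mathcal L^2<9\pi^2/\mathcal L^2=((4-1)\pi/\mathcal L)^2$. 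So your inability to push the shift from $|E|$ down to $|V|$ by surgery is not a defect of the method; the $|E|$-shift you obtained is what one should expect in general.
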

\begin{proof}
Since the interaction strengths are zero at each vertex, therefore the quadratic form is non-negative, and thus the eigenvalues $\lambda \geq 0$. 
Hence the lower estimate in \eqref{ec1}
is interesting only if $k > |V|$.\\
 Let $\Gamma^d$ be the graph obtained from $\Gamma^a$ by imposing the Dirichlet Conditions at each vertex of $\Gamma^a$, and let $h^a$ and $h^d$ denotes the quadratic form of the graphs $\Gamma^a$ and $\Gamma^d$ with domains $D(h^a)$ and $D(h^d)$, respectively. Since the expression of the quadratic forms $h^a$ and $h^d$ are same; moreover, if a function $\varphi$ satisfy Dirichlet conditions at some vertex $v$, then it also satisfies the anti-standard vertex condition that vertex. Therefore, the domain $D(h^d)$ is a subspace of $D(h^a)$ and the quadratic forms $h^a$ and $h^d$ agree on $D(h^d)$, thus minimizing over a smaller domain results in large eigenvalues. Furthermore, the domain $D(h^d)$ is a co-dimension one subspace of $D(h^a)$. Thus, by the rank one nature perturbation, the following interlacing inequalities holds.
 $$\lambda_k(\Gamma^a) \leq \lambda_k(\Gamma^d) \leq \lambda_{k+|V|}(\Gamma^a)$$
 Let  for some $\lambda \in \mathbb{R}$, $N_{\Gamma^a}(\lambda)$ and $N_{\Gamma^d}(\lambda)$ denotes the eigenvalue counting functions for the graphs $\Gamma^a$ and $\Gamma^d$, respectively. Then the above interlacing inequalities implies the following relation between eigenvalue counting functions.
 \begin{equation} \label{cfrelation}
     N_{\Gamma^d}(\lambda) \leq N_{\Gamma^a}(\lambda) \leq N_{\Gamma^d}(\lambda) +|V|.
 \end{equation}
Consider the Laplace operator acting on a interval of length $\ell$ , and assume that the end points are equipped with Dirichlet Conditions. The eigenvalues are $\lambda_k=\left(  \frac{k \pi}{\ell}\right)^2$, for $\lambda \geq 0$ the value of eigenvalue counting  function is $N_{[0,\ell]}(\lambda)=\left[\frac{\sqrt{\lambda }}{\pi} \ell \right].$ Where square brackets mean to take the integer part of the argument. Since Dirichlet conditions imposed at a vertex of degree two or more does not connect the individual function living on the incident edge in any way.  The Dirichlet conditions have the effect of disconnecting the vertex of degree $d_v$ into $d_v$ vertices of degree one, so the graph $\Gamma^d$ is now decoupled into a set of intervals, and the set of eigenvalues is just the union of eigenvalues of each interval (counting multiplicities). Moreover, 

The counting function is given by 
\begin{equation} \label{c1}
    N_{\Gamma^{d}}(\lambda)= \sum\limits_{j=1}^{|E|}N_{[0,\ell_j]}(\lambda) =\left[\frac{\sqrt {\lambda }}{\pi} \ell_1 \right] +\left[\frac{\sqrt {\lambda }}{\pi} \ell_2 \right] + \cdots+ \left[\frac{\sqrt {\lambda }}{\pi} \ell_{|E|} \right] \leq \left[\frac{\sqrt {\lambda }}{\pi} \mathcal{L}(\Gamma^a) \right].
\end{equation}
 Since for any $a$ and $b$, $ [a]+[b] \leq [a+b]$, therefore taking integer part of the sum of the terms is increased at most by the number of terms minus one as compared to adding integer parts only. As the number of terms in \eqref{c1} are equal to number of edges $|E|$, thus 
 $$\left[\frac{\sqrt {\lambda }}{\pi} \mathcal{L}(\Gamma^a) \right] =\left[\frac{\sqrt {\lambda }}{\pi} (\ell_1 +\ell_2+\cdots+\ell_{|E|})\right] \leq   \left[\frac{\sqrt {\lambda }}{\pi} \ell_1 \right] +\left[\frac{\sqrt {\lambda }}{\pi} \ell_2 \right] + \cdots+ \left[\frac{\sqrt {\lambda }}{\pi} \ell_{|E|} \right] +|E|-1, $$
 and we have 
 \begin{equation} \label{c2}
     \left[\frac{\sqrt {\lambda }}{\pi} \mathcal{L}(\Gamma^a) \right] -|E|+1 \leq N_{\Gamma^{d}}(\lambda).
 \end{equation}
 \end{proof}
 Thus, from \eqref{cfrelation},\eqref{c1} and \eqref{c2}, we obtain the following inequalities.
 \begin{equation*}
     \left[\frac{\sqrt {\lambda }}{\pi} \mathcal{L}(\Gamma^a) \right] -|E|+1 \leq N_{\Gamma^{d}}(\lambda) \leq N_{\Gamma^{a}}(\lambda) \leq N_{\Gamma^d}(\lambda) +|V| \leq \left[\frac{\sqrt {\lambda }}{\pi} \mathcal{L}(\Gamma^a) \right]+|V|.
 \end{equation*}
 Setting $\lambda=\left (\dfrac{k \pi } {\mathcal{L}(\Gamma^a)} \right)^2$ we get 
 $$k-|E|+1 \leq N_{\Gamma^{a}} \left (\frac{k \pi } {\mathcal{L}(\Gamma^a)} \right)^2 \leq k+|V|,$$ so
 $$\lambda_{k - |E|+1} \leq \left ( \frac{k \pi } {\mathcal{L}(\Gamma^a)} \right)^2 \leq \lambda_{k+|V|}.$$
 This estimate implies that the multiplicity of the eigenvalues is uniformly bounded by $|E|+|V|$.
 setting $\tilde{k}=k-|E|+1$ we get $\lambda_{\tilde{k}} \leq \left(\dfrac{\pi}{\mathcal{L}(\Gamma^a)}\right)^2 (\tilde{k}+|E|-1)^2$ and similarly setting $\tilde{k}=k+|V|$ we get $\left(\dfrac{\pi}{\mathcal{L}(\Gamma^a)}\right)^2 (\tilde{k}-|V|)^2 \leq \lambda_{\tilde{k}}$. Since $\tilde{k}$ is a dummy subscript so we can replace it with $k$ and finally, we obtain,
 $$\left(\frac{\pi}{\mathcal{L}(\Gamma^a)} \right)^2 (k-|V|)^2 \leq \lambda_k(\Gamma^a) \leq \left(\frac{\pi}{\mathcal{L}(\Gamma^a)} \right)^2 (k+|E|-1)^2.$$

 Another type of vertex condition that decouples the graph into disjoints interval is the Neumann conditions. For these conditions, the derivatives of functions living on incident edges are zero at each vertex, and no conditions are assumed on the values of functions. That is, $$\varphi_e'(v)=0$$
 Let $\Gamma^n$ be the finite compact metric graph with Neumann vertex conditions, then the quadratic form is defined by, 
 $$h^n(\varphi)=\int_{\Gamma^{n}} |\varphi'|^2dx$$
 with domain $D(h^n)=H^1(\Gamma^{n} \backslash V)$. The domain $D(h^a)$ is a co-dimension one subspace of $D(h^n)$, and the quadratic forms $h^a$ and $h^n$ agree on $D(h^a)$. Thus, by rank one perturbation, the following interlacing inequalities holds.
 $$\lambda_k(\Gamma^n) \leq \lambda_k(\Gamma^a) \leq \lambda_{k+|V|}(\Gamma^n)$$
 Let  for some $\lambda \in \mathbb{R}$, $N_{\Gamma^{a}}(\lambda)$ and $N_{\Gamma^{n}}(\lambda)$ denotes the eigenvalue counting functions for the graphs $\Gamma^a$ and $\Gamma^n$, respectively. Then the above interlacing inequalities implies the following relation between eigenvalue counting functions.
 \begin{equation} 
     N_{\Gamma^a}(\lambda) \leq N_{\Gamma^n}(\lambda) \leq N_{\Gamma^a}(\lambda) +|V|.
 \end{equation}
 In the previous theorem, if we impose Neumann conditions instead of Dirichlet conditions to decouple the graph $\Gamma^a$, then one can prove the following theorem correspondingly.
 \begin{theorem} \label{anti2}
Let $\Gamma^a$ be finite compact graph of total length $\mathcal{L}(\Gamma^a)$ with $|V|$ the number of vertices and $|E|$ the number of edges. Let each vertex of $\Gamma^a$ be equipped with anti-standard vertex conditions. Then 
\begin{equation} 
    \left(\frac{k \pi}{\mathcal{L}(\Gamma^a)} \right)^2  \leq \lambda_k(\Gamma^a) \leq \left(\frac{\pi}{\mathcal{L}(\Gamma^a)} \right)^2 (k+|E|+|V|-1)^2.
\end{equation}
\end{theorem}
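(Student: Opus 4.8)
The plan is to re-run the argument of Theorem~\ref{ec}, but to decouple $\Gamma^{a}$ by imposing \emph{Neumann} conditions at every vertex rather than Dirichlet ones, and then to read the two estimates off a two-sided bound for the eigenvalue counting function $N_{\Gamma^{a}}$. First I would form $\Gamma^{n}$ from $\Gamma^{a}$ by prescribing Neumann conditions at all $|V|$ vertices. As recorded just above the statement, $D(h^{a})$ is a codimension-$|V|$ subspace of $D(h^{n})$ (one balance condition per vertex) on which the two forms agree, so $|V|$ successive applications of Theorem~\ref{berk1} yield $\lambda_{k}(\Gamma^{n})\le\lambda_{k}(\Gamma^{a})\le\lambda_{k+|V|}(\Gamma^{n})$, equivalently
\[
N_{\Gamma^{a}}(\lambda)\ \le\ N_{\Gamma^{n}}(\lambda)\ \le\ N_{\Gamma^{a}}(\lambda)+|V|,\qquad \lambda\ge 0 .
\]

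Next I would compute $N_{\Gamma^{n}}$. Neumann conditions at every vertex disconnect $\Gamma^{n}$ into the $|E|$ intervals $[0,\ell_{1}],\dots,[0,\ell_{|E|}]$, each with Neumann conditions at both endpoints and spectrum $\{(j\pi/\ell_{i})^{2}:j\ge 0\}$; hence $N_{[0,\ell_{i}]}(\lambda)=1+\left[\sqrt{\lambda}\,\ell_{i}/\pi\right]$ and $N_{\Gamma^{n}}(\lambda)=\sum_{i=1}^{|E|}\bigl(1+\left[\sqrt{\lambda}\,\ell_{i}/\pi\right]\bigr)$. Using $\sum_{i}[a_{i}]\le[\sum_{i}a_{i}]\le\sum_{i}[a_{i}]+(|E|-1)$ with $a_{i}=\sqrt{\lambda}\,\ell_{i}/\pi$ and $\sum_{i}a_{i}=\sqrt{\lambda}\,\mathcal{L}(\Gamma^{a})/\pi$ gives
\[
\left[\tfrac{\sqrt{\lambda}}{\pi}\mathcal{L}(\Gamma^{a})\right]-(|E|-1)\ \le\ N_{\Gamma^{n}}(\lambda)\ \le\ \left[\tfrac{\sqrt{\lambda}}{\pi}\mathcal{L}(\Gamma^{a})\right]+|E| ,
\]
the extra additive $|E|$ being precisely the contribution of the $|E|$ Neumann zero modes, which is what distinguishes this case from the Dirichlet decoupling in Theorem~\ref{ec}. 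Combining with the interlacing of Step~1, I would obtain, for all $\lambda\ge 0$,
\[
\left[\tfrac{\sqrt{\lambda}}{\pi}\mathcal{L}(\Gamma^{a})\right]-(|E|+|V|-1)\ \le\ N_{\Gamma^{a}}(\lambda)\ \le\ \left[\tfrac{\sqrt{\lambda}}{\pi}\mathcal{L}(\Gamma^{a})\right]+|E| .
\]

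Finally I would specialise $\lambda=\bigl(k\pi/\mathcal{L}(\Gamma^{a})\bigr)^{2}$, so that the integer part equals $k$, and read off $\lambda_{k-(|E|+|V|-1)}(\Gamma^{a})\le\bigl(k\pi/\mathcal{L}(\Gamma^{a})\bigr)^{2}$ from the left inequality and, invoking the right inequality for all $\lambda$ below $\bigl(k\pi/\mathcal{L}(\Gamma^{a})\bigr)^{2}$ (as at the end of the proof of Theorem~\ref{ec}), $\bigl(k\pi/\mathcal{L}(\Gamma^{a})\bigr)^{2}\le\lambda_{k+|E|}(\Gamma^{a})$; renaming the dummy index as there then rearranges the first into the stated upper bound $\lambda_{k}(\Gamma^{a})\le\bigl(\pi/\mathcal{L}(\Gamma^{a})\bigr)^{2}(k+|E|+|V|-1)^{2}$ and the second into a matching lower bound, which — like the lower estimate of Theorem~\ref{ec} — carries content only once $k$ is large. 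The one genuine difficulty is this last bookkeeping step: one must keep track of the $|E|$ zero eigenvalues produced by the Neumann decoupling, since they shift the additive constants relative to the Dirichlet version, and be careful about which index shift attaches to the lower versus the upper eigenvalue bound and that the counting-function inequalities are used on a whole half-line, not merely at the single value $\lambda=\bigl(k\pi/\mathcal{L}(\Gamma^{a})\bigr)^{2}$.
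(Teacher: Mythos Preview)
Your approach is exactly what the paper intends: decouple by Neumann conditions, pass to the counting function via the interlacing $N_{\Gamma^{a}}\le N_{\Gamma^{n}}\le N_{\Gamma^{a}}+|V|$, and then run the bookkeeping of Theorem~\ref{ec}. Two remarks are in order.

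First, your lower estimate on $N_{\Gamma^{n}}$ is slacker than it needs to be. Since $N_{\Gamma^{n}}(\lambda)=|E|+\sum_{i}\bigl[\sqrt{\lambda}\,\ell_{i}/\pi\bigr]$ and $\sum_{i}[a_{i}]\ge\bigl[\sum_{i}a_{i}\bigr]-(|E|-1)$, one actually gets $N_{\Gamma^{n}}(\lambda)\ge\bigl[\sqrt{\lambda}\,\mathcal L/\pi\bigr]+1$, not $\bigl[\sqrt{\lambda}\,\mathcal L/\pi\bigr]-(|E|-1)$; the $|E|$ Neumann zero modes cancel the $|E|-1$ loss. With this sharper bound the method yields the stronger upper estimate $\lambda_{k}(\Gamma^{a})\le\bigl((k+|V|-1)\pi/\mathcal L\bigr)^{2}$, of which the stated bound is of course a consequence. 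Your looser version happens to reproduce the constant $|E|+|V|-1$ exactly, so no harm is done.

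Second, and more importantly, neither your argument nor the paper's indicated method produces the stated lower bound $(k\pi/\mathcal L)^{2}\le\lambda_{k}(\Gamma^{a})$: what the Neumann counting gives is $\lambda_{k}(\Gamma^{a})\ge\bigl((k-|E|)\pi/\mathcal L\bigr)^{2}$, which is precisely the ``large $k$'' lower bound you obtain. You are right not to claim more, because the stated lower bound is in fact false. For the equilateral $3$--star with edge length $\ell$ and $\mathcal L=3\ell$, the first two anti-standard eigenvalues are $\lambda_{1}=\lambda_{2}=(\pi/2\ell)^{2}=(3\pi/2\mathcal L)^{2}$, so $\lambda_{2}=\tfrac{9}{4}(\pi/\mathcal L)^{2}<4(\pi/\mathcal L)^{2}$. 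Thus the discrepancy lies in the theorem's stated lower bound, not in your proof; the method delivers the correct (weaker) inequality, and your derivation of the upper bound is complete.
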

\begin{theorem} \label{deltap}
Let $\Gamma^{\delta'}$ be metric graph with negative strengths $\alpha'_m <0$ at each vertex. Then,
\begin{equation}
    \left(\frac{(k-2|V|)\pi}{2 \mathcal{L}(\Gamma^{\delta'})}\right)^2 \leq \lambda_k(\Gamma^{\delta'}), \quad k \geq 2|V|.
\end{equation}
\end{theorem}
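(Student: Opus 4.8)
The plan is to bound the $\delta'$-spectrum from below by the spectrum of the \emph{anti-standard} graph on the same metric graph --- a non-negative operator whose eigenvalues are explicitly estimated from below in Theorem~\ref{ec} --- paying only the price of a finite index shift that reflects the finite codimension of the anti-standard form domain inside the $\delta'$-form domain.

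\emph{Step 1 (reduction to the anti-standard graph).} Let $\Gamma^{a}$ denote the metric graph underlying $\Gamma^{\delta'}$ with the anti-standard condition \eqref{anti} imposed at \emph{every} vertex, and let $h^{\delta'},h^{a}$ be the associated quadratic forms. Since $\alpha'_{m}\neq 0$ at every vertex, the $\delta'$-form domain $D(h^{\delta'})=H^{1}(\Gamma)$ carries no vertex constraints, whereas $D(h^{a})=\{\varphi\in H^{1}(\Gamma):\sum_{x_i\in v_m}\varphi(x_i)=0,\ m=1,\dots,|V|\}$; thus $D(h^{a})$ has codimension $|V|$ in $D(h^{\delta'})$, and on $D(h^{a})$ all the vertex sums in \eqref{qdelta'} vanish, so $h^{\delta'}=h^{a}$ there. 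Changing the vertex conditions one vertex at a time exhibits the passage from $\Gamma^{\delta'}$ to $\Gamma^{a}$ as a composition of $|V|$ positive rank-one perturbations in the sense of the definition preceding Theorem~\ref{berk1} (exactly as in the concluding part of the proof of Theorem~\ref{strength}); chaining the interlacings produced by the $|V|$ applications of Theorem~\ref{berk1} gives
\begin{equation*}
\lambda_k(\Gamma^{\delta'})\le \lambda_k(\Gamma^{a})\le \lambda_{k+|V|}(\Gamma^{\delta'}),\qquad k\ge 1 .
\end{equation*}

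\emph{Step 2 (inserting the anti-standard bound).} From the right-hand inequality, $\lambda_{k}(\Gamma^{\delta'})\ge\lambda_{k-|V|}(\Gamma^{a})$ for $k>|V|$. As $\Gamma^{a}$ has the same $|V|$ vertices and the same total length as $\Gamma^{\delta'}$, the lower estimate of Theorem~\ref{ec} applied with index $k-|V|$ yields
\begin{equation*}
\lambda_{k}(\Gamma^{\delta'})\ \ge\ \lambda_{k-|V|}(\Gamma^{a})\ \ge\ \left(\frac{\pi}{\mathcal{L}(\Gamma^{\delta'})}\right)^{2}\bigl(k-2|V|\bigr)^{2}\ \ge\ \left(\frac{(k-2|V|)\pi}{2\,\mathcal{L}(\Gamma^{\delta'})}\right)^{2},
\end{equation*}
valid for $k\ge 2|V|$; at $k=2|V|$ it merely says $\lambda_{2|V|}(\Gamma^{\delta'})\ge\lambda_{|V|}(\Gamma^{a})\ge 0$, which holds because $h^{a}\ge 0$. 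This is the asserted estimate --- in fact with $\mathcal{L}$ in place of $2\mathcal{L}$.

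\textbf{Where the difficulty lies.} The arithmetic is short; the point requiring care is the index bookkeeping in Step~1, namely checking that the passage from the all-$\delta'$ graph to the all-anti-standard graph genuinely decomposes into $|V|$ \emph{rank-one} steps, with the two forms agreeing on the relevant codimension-one subdomain at each stage, so that the admissible shift is exactly $|V|$ and not something larger. It is also worth noting that the restriction $k\ge 2|V|$ cannot be dropped: under the hypothesis $\alpha'_m<0$ the form $h^{\delta'}$ is indefinite, so the first few eigenvalues of $\Gamma^{\delta'}$ are negative while the right-hand side of the claimed bound is positive once $k<2|V|$ --- it is precisely the negativity of the coupling parameters that compels one to discard the lowest $2|V|$ eigenvalues in the statement.
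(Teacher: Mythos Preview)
Your proof is correct and takes a genuinely different route from the paper's. The paper proceeds by constructing a double cover $\hat{\Gamma}^{\delta'}$ (doubling each edge, doubling the strengths), then splitting its vertices to obtain an Eulerian cycle $\mathcal{C}^{\delta'}$ of length $2\mathcal{L}$; the negativity hypothesis $\alpha'_m<0$ is used precisely here, to ensure via Theorem~\ref{roh}(2) that the splitting lowers eigenvalues. After replacing the (now $2|V|$) strengths on the cycle by zero via Theorem~\ref{strength} (costing an index shift of $2|V|$), the cycle is compared with an interval of length $2\mathcal{L}$ --- this is the origin of the $2\mathcal{L}$ in the denominator. Your argument bypasses the double cover entirely: you compare $\Gamma^{\delta'}$ directly with $\Gamma^{a}$ on the \emph{same} metric graph (cost: shift $|V|$), then invoke the Dirichlet-bracketing estimate of Theorem~\ref{ec} (cost: another $|V|$), landing on the bound with $\mathcal{L}$ rather than $2\mathcal{L}$. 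So your approach is shorter, gives a constant four times better, and --- as you implicitly observe --- does not actually use the sign hypothesis on the $\alpha'_m$, since the interlacing $\lambda_k(\Gamma^{\delta'})\le\lambda_k(\Gamma^a)\le\lambda_{k+1}(\Gamma^{\delta'})$ in Theorem~\ref{strength} holds for either sign. The only trade-off is that you are leaning on Theorem~\ref{ec}, whose proof already packages the relevant counting-function argument; the paper's route is more self-contained but pays for it in the constant.
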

\begin{proof}
Let $\hat{\Gamma}^{\delta'}$ be a double cover obtained by doubling each edge in $\Gamma^{\delta'}$ and assigning strengths $2\alpha'_m$ at each vertex. Then each eigenvalue of $\Gamma^{\delta'}$ is also an eigenvalue of $\hat{\Gamma}^{\delta'}$. However, the converse is not true. Let $\lambda(\Gamma^{\delta'})$ be an eigenvalue corresponding to the eigenfunction $\varphi$. The eigenfunction $\varphi$ can be lifted on $\hat{\Gamma}^{\delta'}$ by letting it assume the same values on the new edges as on edges of $\Gamma^{\delta'}$. Then newly constructed function $\hat{\varphi}$ is an eigenfunction for $\hat{\Gamma}^{\delta'}$ corresponding to eigenvalue $\lambda(\Gamma^{\delta'})$, because for $\lambda(\Gamma^{\delta'})$ this function satisfies the eigenvalue equation and vertex conditions on $\hat{\Gamma}^{\delta'}$. Thus, 
$$\lambda_k(\hat{\Gamma}^{\delta'}) \leq \lambda_k(\Gamma^{\delta'}).$$
Let $\mathcal{C}^{\delta'}$ be the graph obtained from $\hat{\Gamma}^{\delta'}$ by splitting each of the vertices of $\hat{\Gamma}^{\delta'}$ to form a cycle of length $2 \mathcal{L}(\mathcal{C}^{\delta'})$, since each vertex has strength $2\alpha'_m<0 $, therefore [\textcolor{blue}{Theorem} \eqref{roh}] implies,
\begin{equation*}
\lambda_k(\mathcal{C}^{\delta'}) \leq \lambda_k(\hat{\Gamma}^{\delta'}).    
\end{equation*}
Obtain a graph $\mathcal{C}^{a}$ by assigning the zero at all vertices of $\mathcal{C}^{\delta'}$. Then by \textcolor{blue}{Theorem} \eqref{strength}, 
$$\lambda_{k-2|V|}(\mathcal{C}^a)\leq \lambda_k(\mathcal{C}^{\delta'}).$$
The graph $\mathcal{C}^a$ is a bipartite metric graph, therefore \cite[\textcolor{blue}{Theorem 3.6}]{PR} implies that 
$$\lambda_{k-2|V|}(\mathcal{C}^s)=\lambda_{k-2|V|}(\mathcal{C}^a)$$
The graph $\mathcal{C}$ is a loop of length $2\mathcal{L}(\Gamma^{\delta'})$, and now
we cut the graph $\mathcal{C}^s$ to form an edge $I$ of length $2\mathcal{L}(\Gamma^{\delta'}).$ Thus,
$$\left(\frac{(k-2|V|)\pi}{2 \mathcal{L}(\Gamma^{\delta'})}\right)^2=\lambda_{k-2|V|}(I)\leq \lambda_{K-2|V|}(\mathcal{C}^s) .$$

\end{proof}
A similar lower bound on the eigenvalues of the Laplacian acting on the edges of a star graph with anti-standard vertex condition was proved in \cite[\textcolor{blue}{Theorem 1} ]{ZS}. We will use it to get a lower estimate on the metric graph $\Gamma$ with $\delta'$-conditions.
\begin{theorem} \label{deltaprime}
Let $\Gamma^{\delta'}$ be finite compact connected metric graph with $|E| \geq 2$, and $\delta'$-conditions equipped at each vertex with total strength $\alpha'=\sum\limits^{|V|}_{m=1} \alpha'_m$. Let $\ell_1 \geq \ell_2 \geq \cdots \geq\ell_{|E|}$ denotes the length of each edge $e_j$, then for $k,|E| \in \mathbb{N}$, and $1 \leq j \leq |E|$, we have

\begin{equation}
        min \left \{\frac{\left(2\left( k- \left( 1+\frac{|V|+1}{|E|}\right) \right)+3\right)^2 \pi^2}{4 \ell_1^2}, \frac{\left(2\left( k- \left( 1+\frac{|V|+1}{|E|}\right) \right)+1\right)^2 \pi^2}{4 \ell_j^2} \right \}    \leq    \lambda_{k|E|+j}(\Gamma^{\delta'}) 
\end{equation}
\end{theorem}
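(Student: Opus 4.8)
The plan is to transfer $\Gamma^{\delta'}$, in two surgery steps, onto a star graph carrying anti-standard conditions — on which the sharp lower bound \cite[\textcolor{blue}{Theorem 1}]{ZS} for stars is available — and then to reassemble the estimate while tracking the shift in the eigenvalue index produced at each step. First I would discard the $\delta'$ interactions: let $\Gamma^a$ be the same metric graph with anti-standard conditions at every vertex. Then $D(h^a)=\{\varphi\in D(h^{\delta'}):\sum_{x_i\in v_m}\varphi(x_i)=0,\ m=1,\dots,|V|\}$ is a subspace of $D(h^{\delta'})$ of codimension exactly $|V|$ — the $|V|$ balance conditions involve pairwise disjoint sets of endpoints, hence are independent — and on it every penalty term in \eqref{qdelta'} vanishes, so $h^{\delta'}$ and $h^a$ coincide there. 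Thus $h^a$ is a positive rank-$|V|$ perturbation of $h^{\delta'}$, and iterating Theorem~\ref{berk1} gives $\lambda_k(\Gamma^{\delta'})\le\lambda_k(\Gamma^a)\le\lambda_{k+|V|}(\Gamma^{\delta'})$, whence $\lambda_{m-|V|}(\Gamma^a)\le\lambda_m(\Gamma^{\delta'})$ for all admissible $m$. This comparison is insensitive to the signs of the $\alpha'_m$, which is what makes the asserted bound sign-free.

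Next I would bring $\Gamma^a$ to a star. Gluing all $|V|$ vertices of $\Gamma^a$ into one vertex produces the flower graph $\Gamma^a_f$ with $|E|$ petals of lengths $\ell_1\ge\cdots\ge\ell_{|E|}$ and anti-standard conditions (strength $0$) at the centre; since each of the $|V|-1$ identifications glues anti-standard vertices, Theorem~\ref{roh}(6) gives $\lambda_m(\Gamma^a_f)\le\lambda_m(\Gamma^a)$ with no index shift. Then I would cut the centre of $\Gamma^a_f$ so as to unfold each petal into a pendant edge, keeping anti-standard at the new centre and imposing Dirichlet — equivalently anti-standard, as these coincide at degree one — at each of the $|E|$ new leaves; this is a chain of $|E|$ rank-one restrictions of the form domain, so Theorem~\ref{berk1} applied $|E|$ times yields $\lambda_{m-|E|}(S^a)\le\lambda_m(\Gamma^a_f)$, where $S^a$ is the star on $|E|$ edges of lengths $\ell_1\ge\cdots\ge\ell_{|E|}$ with anti-standard conditions. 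Concatenating, $\lambda_{m-|V|-|E|}(S^a)\le\lambda_m(\Gamma^{\delta'})$.

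It remains to invoke the star estimate and reindex. Applied to $S^a$, \cite[\textcolor{blue}{Theorem 1}]{ZS} gives, for $1\le j\le|E|$ and each admissible block index $p\ge0$, a bound of the shape
\begin{equation*}
\lambda_{p|E|+j}(S^a)\ \ge\ \min\left\{\frac{(2p+3)^2\pi^2}{4\ell_1^2},\ \frac{(2p+1)^2\pi^2}{4\ell_j^2}\right\}.
\end{equation*}
Writing $m=k|E|+j$ and using that the eigenvalues are non-decreasing in the index, I would lower the block index to its largest admissible value not exceeding $m-|V|-|E|$; divided by the block length $|E|$ this is $k-\bigl(1+\tfrac{|V|+1}{|E|}\bigr)$ up to the integer-part correction intrinsic to the block enumeration of \cite{ZS}, and since the right-hand side above is non-decreasing in $p$ (for $p\ge 0$, i.e. for $k$ large) this correction only weakens the bound; combined with $\lambda_{m-|V|-|E|}(S^a)\le\lambda_m(\Gamma^{\delta'})$ this gives the asserted inequality for $\lambda_{k|E|+j}(\Gamma^{\delta'})$.

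The two surgery steps are routine uses of tools already established here. The part I expect to be the real obstacle is the index bookkeeping in the last step: one has to show that the loss of $|V|+|E|$ indices, once pushed through the block-of-$|E|$ structure underlying \cite[\textcolor{blue}{Theorem 1}]{ZS}, produces exactly the offset $1+\tfrac{|V|+1}{|E|}$ inside the arguments $2p+1$ and $2p+3$, checking in particular that the chosen block index is non-negative (this is where $|E|\ge 2$ and the tacit largeness of $k$ enter) and that the integer-part adjustment does not reverse the inequality. A secondary point is to make sure the unfolding step can always be arranged so that the end product is a genuine star with anti-standard conditions, so that \cite[\textcolor{blue}{Theorem 1}]{ZS} applies literally.
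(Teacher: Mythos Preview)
Your approach is correct and in fact slightly more efficient than the paper's. Both arguments reduce $\Gamma^{\delta'}$ to the anti-standard star $S^a$ and share the two steps $\Gamma^a_f\to\Gamma^a$ (splitting anti-standard vertices, no index loss, Theorem~\ref{roh}(6)) and $\Gamma^a\to\Gamma^{\delta'}$ (interlacing, loss $|V|$, Theorem~\ref{strength}). The difference is in the passage between the flower $\Gamma^a_f$ and the star $S^a$. The paper first places auxiliary \emph{negative} $\delta'$ strengths on all $|E|+1$ vertices of $S^a$ (index loss $|E|+1$ via Theorem~\ref{strength}), glues to a $\delta'$-flower via Theorem~\ref{roh}(2), and then removes the strength (no loss) to reach $\Gamma^a_f$. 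You instead observe directly that $D(h^{S^a})\subset_{|E|} D(h^{\Gamma^a_f})$ with the two forms agreeing there, so iterated Theorem~\ref{berk1} gives a loss of only $|E|$. Your route avoids the detour through artificial negative strengths and saves one index.

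One bookkeeping remark: your total shift is $|V|+|E|$, so the block offset that actually falls out is $1+|V|/|E|$, not $1+(|V|+1)/|E|$ as you wrote; the stated theorem then follows a fortiori from your stronger inequality. The extra $+1$ in the paper's offset comes precisely from the detour just described. Your worry about the integer-part correction when forcing the shifted index back into the $p|E|+j$ block format of \cite{ZS} is legitimate, and the paper treats it the same (informal) way you suggest: the final block parameter is allowed to be real, and any rounding only weakens the bound.
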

 \begin{proof}
 \begin{enumerate}
 Let $S^a$ denote the star graph constructed  from the edges of $\Gamma^{\delta'
 }$ and let each vertex of $S^a$ be equipped with anti-standard vertex condition. Then by \cite[\textcolor{blue}{Theorem 1} ]{ZS}
 \begin{equation}
        min \left \{  \frac{(2k+3)^2 \pi^2}{4 \ell_1^2}, \frac{(2k+1)^2 \pi^2}{4 \ell_j^2} \right \}    \leq    \lambda_{k|E|+j}(S^a). 
\end{equation}
Construct a new star graph $S^{\delta'}$ by assigning the negative strengths at vertices of $S^a$ . Then by \textcolor{blue}{Theorem} \eqref{strength}, 
$$\lambda_{k|E|+j}(S^a) \leq \lambda_{k|E|+j+|E|+1}(S^{\delta'}).$$
Now, first create a flower graph $\Gamma^{\delta'}_f$ from $S^{\delta'}$ in accordance to \textcolor{blue}{Theorem}\eqref{roh}(2). Thus, 
$$\lambda_{k|E|+j+|E|+1}(S^{\delta'}) \leq \lambda_{k|E|+j+|E|+1}(\Gamma^{\delta'}_f).$$
and then obtain a graph $\Gamma^a_f$ by prescribing anti-standard conditions to $\Gamma^{\delta'}_f$.\\  
We restore the vertices of $\Gamma^{\delta'}$ by cutting, step by step, the vertex of flower graph $\Gamma^a_f$, then by  \textcolor{blue}{Theorem}\eqref{roh}(6), the obtained graph is same as $\Gamma^{\delta'}$ but equipped with anti-standard conditions. Call this new graph $\Gamma^a$, Thus
$$\lambda_{k|E|+j+|E|+1}(\Gamma^{\delta'}_f) \leq \lambda_{k|E|+j+|E|+1}(\Gamma^a_f)\leq \lambda_{k|E|+j+|E|+1}(\Gamma^a).$$
Now, assign the strengths at all vertices of $\Gamma^a$ to obtain the graph $\Gamma^{\delta'}$. Thus $$\lambda_{k|E|+j+|E|+1}(\Gamma^a) \leq \lambda_{k|E|+j+|E|+|V|+1}(\Gamma^{\delta'}).$$  
\begin{align*}
k|E|+j+|E|+|V|+1&=k|E|+|E|+|V|+1+j \\
&=k|E|+\frac{(|E|+|V|+1)|E|}{|E|}+j \\
&= \left(k+ \frac{(|E|+|V|+1)}{|E|}\right)|E|+j 
\end{align*}
Let $k'=\left(k+ \frac{(|E|+|V|+1)}{|E|}\right)$ $\implies$ $k=k'-\left(1+\frac{|V|+1}{|E|} \right).$
\begin{equation}
        min \left \{  \frac{(2k+3)^2 \pi^2}{4 \ell_1^2}, \frac{(2k+1)^2 \pi^2}{4 \ell_j^2} \right \}    \leq    \lambda_{k|E|+j+|E|+|V|+1}(\Gamma^{\delta'}). 
\end{equation}
\begin{equation*}
        min \left \{\frac{\left(2\left( k'- \left( 1+\frac{|V|+1}{|E|}\right) \right)+3\right)^2 \pi^2}{4 \ell_1^2}, \frac{\left(2\left( k'- \left( 1+\frac{|V|+1}{|E|}\right) \right)+1\right)^2 \pi^2}{4 \ell_j^2} \right \}    \leq    \lambda_{k'|E|+j}(\Gamma^{\delta'}) 
\end{equation*}
 \end{enumerate}
 \end{proof}
 
 \begin{definition}
A {\bf spanning tree }$T$ of a graph $\Gamma$ is an undirected subgraph of $\Gamma$ containing maximal edges of $\Gamma$ with no cycles, or it contains all the vertices of $\Gamma$  connected by a minimal set of edges of $\Gamma$.
\end{definition}
A graph $\Gamma$ can have more than one spanning tree, each having an equal number of edges, let $|E_T|$ denotes the cardinality of the edge set of spanning tree $T$. Each spanning tree has it's own length so we can compare their lengths, we define a new term for our later use.
\begin{definition}
A {\bf maximal spanning tree} $T_m$ is a spanning tree for a graph $\Gamma$ having the largest length among all the spanning trees of $\Gamma$.
\end{definition}
For a finite, compact connected metric tree $T^s$ with standard conditions, let $P^s$ be the largest path inside $T^s$.  Since the spectrum of Laplacian on the path graph $P^s$ with standard conditions consists of discrete points $$\lambda_{k+1}(P^s)=\left(\frac{k \pi}{\mathcal{L}(P^s)}\right)^2, \quad k=0,1,2,\cdots$$
The tree graph $T^s$ can be obtained by attaching pendant edges with standard conditions to the vertices of $P^s$ in accordance to \cite[\textcolor{blue}{Theorem 3.10}]{BKKM}. Thus, 
\begin{equation}
\lambda_{k+1}(T^s) \leq \lambda_{k+1}(P^s)=\left(\frac{k \pi}{\mathcal{L}(P^s)}\right)^2, \quad k=0,1,2,\cdots    
\end{equation}

Now consider a finite compact connected metric tree $T^a$ with anti-standard conditions and
let $P^a$ be the longest path inside $T^a$.  Since the spectrum of Laplacian on the path graph $P^a$ with anti-standard conditions consists of discrete points $$\lambda_{k}(P^a)=\left(\frac{k \pi}{\mathcal{L}(P^a)}\right)^2, \quad k=1,2,\cdots$$
The tree graph $T^a$ can be obtained by attaching pendant edges with anti-standard conditions to the vertices of $P^a$ in accordance to \cite[\textcolor{blue}{Theorem 3.5}]{RS}. Thus $$\lambda_{k}(T^a) \leq \lambda_{k}(P^a)=\left(\frac{k \pi}{\mathcal{L}(P^a)}\right)^2, \quad k=1,2,\cdots$$ 
Or using  \textcolor{blue}{Theorem}\eqref{pendant}
$$\lambda_{k+1}(T^a) \leq \lambda_{k}(P^a)=\left(\frac{k \pi}{\mathcal{L}(P^a)}\right)^2, \quad k=1,2,\cdots$$ 
 One can obtain these estimates with any path graph inside the metric tree, even an interval would give an estimate. However, we choose the longest path to get better estimates. The weak version of these bounds were established in \cite[\textcolor{blue}{Theorem 3.2}]{R} for standard conditions and in \cite[\textcolor{blue}{Theorem 3}]{ZS} for anti-standard conditions, in which a particular path graph  $P$ was chosen which starts with the longest edge and ends at second largest edge. These estimates were further expressed in terms of average length of metric tree.
 For standard conditions,
 \begin{equation} \label{tree1}
     \lambda_{k+1}(T^s) \leq \lambda_{k+1}(P^s)=\left(\frac{k \pi}{\mathcal{L}(P^s)}\right)^2 \leq \left(\frac{k \pi |E|}{2 \mathcal{L}(T^s)}\right)^2, \quad k=0,1,2,\cdots
 \end{equation}
Similarly, for anti-standard conditions,
 \begin{equation} \label{tree2}
     \lambda_{k}(T^a) \leq \lambda_{k}(P^a)=\left(\frac{k \pi}{\mathcal{L}(P^a)}\right)^2 \leq \left(\frac{k \pi |E|}{2 \mathcal{L}(T^a)}\right)^2, \quad k=1,2,\cdots
 \end{equation}
The following simple but useful relation was proved in \cite[\textcolor{blue}{Corollary 3.8} ]{PR}, which connects the eigenvalues of metric tree with standard and anti-standard vertex conditions. 
\begin{equation}
    \lambda_k(T^a)=\lambda_{k+1}(T^s), \quad k\ \in \mathbb{N}.
\end{equation}
We use these estimates to obtain some bounds on the spectrum of the Laplacian acting on a finite compact connected metric graph $\Gamma$ with standard and anti-standard vertex conditions, and then we will generalize these estimates for $\delta$ and $\delta'$- type vertex conditions. Most of the following estimates are equal in magnitude but are expressed in terms of different geometrical parameters of the metric graph $\Gamma$.
\\ For the anti-standard vertex conditions, we know that either attaching a pendant graph (particularly an edge), or attaching an edge between existing vertices lowers all eigenvalues \cite[\textcolor{blue}{Theorem 3.2 , 3.5} ]{RS}. Therefore,
\begin{equation*}
    \lambda_{k}(\Gamma^a) \leq \lambda_k(T^a) \leq \lambda_k(P^a)=\left(\frac{k \pi \gamma}{\mathcal{L}(\Gamma^a)}\right)^2, \quad k=1,2,\cdots
\end{equation*}
Where $\gamma=\frac{\mathcal{L}(\Gamma^a)}{\mathcal{L}(P^a)}$.
If we let $\gamma=\mathcal{L}(\Gamma^a)-\mathcal{L}(P^a)$, then the same estimate obtained by the path graph can be rewritten as 
\begin{equation*}
    \lambda_{k}(\Gamma^a) \leq  \lambda_k(P^a)=\left(\frac{k \pi }{\mathcal{L}(\Gamma^a)-\gamma}\right)^2, \quad k=1,2,\cdots
\end{equation*}
Note that the above bounds are equal to $\lambda_{k}(P^a)=\left(\frac{k \pi}{\mathcal{L}(P^a)}\right)^2$ in magnitude, but are expressed in terms of total length of the graph.
 \\ Let $T^a$ be the tree graph inside $\Gamma^a$ with the longest length among all the tree inside $\Gamma^a$. Then we use the estimate \eqref{tree2} on $T^a$ to obtain an estimate on $\Gamma^a$.
 $$\lambda_k(\Gamma^a) \leq \lambda_k(T^a) \leq \left(\frac{k \pi |E_T|}{2 \mathcal{L}(T^a)}\right)^2, \quad k=1,2,\cdots$$
 Moreover, if the graph $\Gamma^a$ with $|E|$ number of edges is obtained by from metric tree $T^a$ by  either attaching edges or loops of lengths $\ell_1, \ell_2, \cdots, \ell_m$, then 
 $$\lambda_k(\Gamma^a) \leq \left(\frac{k \pi (|E|-m)}{2 \left(\mathcal{L}(\Gamma^a)-\sum\limits_{i=1}^m \ell_i\right)}\right)^2, \quad k=1,2,\cdots$$
 If $T^a_m$ is the maximal spanning tree inside $\Gamma^a$, then 
 $$\lambda_k(\Gamma^a) \leq \lambda_k(T^a_m) \leq \left(\frac{k \pi |E_{T_m}|}{2 \mathcal{L}(T^a_m)}\right)^2, \quad k=1,2,\cdots$$
 The number of vertices in $T^a_m$ are equal to the number of vertices in $\Gamma^a$, and the number of edges in $T^a_m$ are one less than the number of vertices in $\Gamma^a$. Let $|V|$ denotes the number of vertices in $\Gamma^a$, and the graph $\Gamma^a$ can be obtained from $T^a_m$ by attaching pendant loops or multiple edges between existing vertices. Thus 
 \begin{equation}
     \lambda_k(\Gamma^a) \leq \lambda_k(T^a_m) \leq \left(\frac{k \pi \gamma (|V|-1)}{2 \mathcal{L}(\Gamma^a)}\right)^2, \quad k=1,2,\cdots
 \end{equation}
 The above bounds can be expressed in terms of the first Betti number of the graph $\Gamma^a$. Since the first Betti number of maximal spanning tree is zero, and while constructing $\Gamma^a$ from $T^a_m$ each attachment of either loop or an edge between existing vertices increases the Betti number by one. Let $\beta$ be number of cycles in graph $\Gamma^a$ and let $\ell_1, \ell_2, \cdots, \ell_\beta$ be the number of required edges to obtain $\Gamma^a$ from $T^a_m$. Then, 
 $$\lambda_k(\Gamma^a) \leq \left(\frac{k \pi (|E|-\beta)}{2 \left(\mathcal{L}(\Gamma^a)-\sum\limits_{i=1}^m \ell_i\right)}\right)^2, \quad k=1,2,\cdots$$
 In addition, if $\ell_1= \ell_2= \cdots= \ell_\beta$, then 
 $$\lambda_k(\Gamma^a) \leq \left(\frac{k \pi (|E|-\beta)}{2 \left(\mathcal{L}(\Gamma^a)- \beta \ell \right)}\right)^2, \quad k=1,2,\cdots$$
 For different types of metric tree inside the graph $\Gamma^a$, we obtain different expression for bounds, some of these were equal in magnitude but expressed in different topological parameters. Now we chose a particular type of a metric tree inside $\Gamma^a$, and will use the estimate \eqref{tree2}.  Let $T^a$ be the tree inside $\Gamma^a$ such that the number of edges and the length of each edge in $T^a$ and $\Gamma^a$ are equal. Then $\Gamma^a$ can be obtained by gluing some vertices of $T^a$ and each gluing produces a cycle. Thus 
 \begin{equation} \label{anti1}
     \lambda_k(\Gamma^a) \leq \lambda_k(T^a) \leq  \left(\frac{k \pi |E|}{2 \mathcal{L}(\Gamma^a)}\right)^2, \quad k=1,2,\cdots
 \end{equation}
 Similarly, for the graph $\Gamma^s$ with standard vertex condition the following bounds hold.
 Let $P^s$ be the longest path inside $\Gamma^s$, then we attach the pendant edges, step by step, with standard condition to the vertices of $P^s$ unless we obtain a tree $T^s$ with number of edges equal to the number of edges $|E|$ of graph $\Gamma^s$. Now $\Gamma^s$ can be obtained by pairwise gluing of $\beta$ pair of vertices.
 \begin{equation*}
     \lambda_{k+1-\beta}(\Gamma^s) \leq \lambda_{k+1} (T^s) \leq \lambda_{k+1} (P^s)=\left(\frac{k \pi }{\mathcal{L}(P^s)}\right)^2, \quad k=1,2,\cdots,
 \end{equation*}
 $$\lambda_{k+1}(\Gamma^s) \leq \left(\frac{(k+\beta) \pi \gamma}{\mathcal{L}(\Gamma^s)}\right)^2, \quad k=1,2,\cdots$$
 where $\gamma=\frac{\mathcal{L}(\Gamma^s)}{L(P^s)}$. 
 If we consider the longest metric tree $T^s$, then we can use the estimate \eqref{tree1}, 
 $$\lambda_{k+1}(\Gamma^s) \leq \lambda_{k+1}(T^s) \leq \left(\frac{(k+\beta) \pi |E_T|}{2\mathcal{L}(T^s)}\right)^2=\left(\frac{(k+\beta) \pi |E_T|\gamma}{2\mathcal{L}(\Gamma^s)}\right)^2, \quad k=1,2,\cdots$$
 If we choose a tree $T^s$ such that each vertex of $T^s$ is equipped with standard vertex conditions and the number of edges and the length of each edge is same as of $\Gamma^s$. However, the only difference between $T^s$ and $\Gamma^s$ is that $\Gamma^s$ can be obtained from $T^s$ by pairwise gluing of $\beta$ pair of vertices, then 
 \begin{equation} \label{standard2}
     \lambda_{k+1}(\Gamma^s) \leq \left(\frac{(k+\beta) \pi |E|}{2 \mathcal{L}(\Gamma^s)}\right)^2
 \end{equation}
 If $T^s_m$ is the maximal spanning tree inside $\Gamma^s$, then
 $$\lambda_{k+1}(\Gamma^s) \leq \left(\frac{(k+\beta) (|V|-1)\pi }{2 \left(\mathcal{L}(\Gamma^s)-\sum\limits_1^{m} \ell_i\right)}\right)^2.$$
 Or 
 $$\lambda_{k+1}(\Gamma^s) \leq \left(\frac{(k+\beta) (|E|-\beta)\pi }{2 \left(\mathcal{L}(\Gamma^s)-\sum\limits_1^{m} \ell_i\right)}\right)^2.$$
 Where $|E|$, $|V|$ and $\beta$ denotes the number of edges, number of vertices and the first Betti number of the graph $\Gamma^s$.\\
 \begin{definition}
 Let $\Gamma$ be a finite compact connected metric graph. Then diameter of $\Gamma$ is defined as, 
 $$d(\Gamma):=\max \{ d(x,y) : x,y \in \Gamma \}.$$
 \end{definition}
 Let $T^s$ be a finite, connected metric tree and let $d(T^s)$ denote the diameter of $T^s$, then the following inequality was proved in \cite[\textcolor{blue}{Theorem 3.4}]{R}
 \begin{equation} \label{diameter1}
     \lambda_{k+1}(T^s) \leq  \left(\frac{k \pi}{d(T^s)}\right)^2.
 \end{equation}
 A similar inequality for $T^a$ was proved in  \cite[\textcolor{blue}{Theorem 3}]{ZS},
  \begin{equation} \label{diameter2}
     \lambda_{k}(T^a) \leq  \left(\frac{k \pi}{d(T^a)}\right)^2.
 \end{equation}
 In order to establish better estimates on the eigenvalues of metric trees $T^s$ and $T^a$, we require the following lemma which is also contained in \cite[\textcolor{blue}{Lemma 4.2}]{RL}. However, we provide a brief proof for the completeness of the presentation.
 \begin{lemma}
 Let $T$ be a metric tree of total length $\mathcal{L}(T)$ with $|E_p|\geq 2$ the pendant edges, then 
 \begin{equation}
            \frac{2 \mathcal{L}(T)}{|E_p|} \leq  d(T). 
 \end{equation}
 \end{lemma}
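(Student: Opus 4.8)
The plan is to use the classical Euler-tour (depth-first) decomposition of a tree into leaf-to-leaf geodesics. Write $p:=|E_p|$; since $|E_p|\ge 2$, the tree is not a single edge, so $p$ equals the number of pendant (degree-one) vertices of $T$. First I would fix one such vertex $u_0$, root $T$ at $u_0$, and consider the closed walk $\omega$ produced by a depth-first traversal of $T$ starting and ending at $u_0$. Since the traversal descends each edge exactly once and ascends it exactly once, $\omega$ has total length $2\mathcal{L}(T)$, and along $\omega$ the $p$ pendant vertices are encountered in a cyclic order $u_0,u_1,\dots,u_{p-1}$ (set $u_p:=u_0$), each exactly once.

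The key observation is that the portion of $\omega$ running between two consecutively visited leaves $u_i$ and $u_{i+1}$ is precisely the unique simple path of $T$ joining them: in a depth-first walk this portion first moves monotonically upward from $u_i$ to the lowest common ancestor of $u_i$ and $u_{i+1}$ and then monotonically downward to $u_{i+1}$, and in a tree such an up-then-down route is exactly the geodesic between its endpoints. Hence the length of that portion equals $\dist_T(u_i,u_{i+1})$, which is at most $d(T)$ by the definition of the diameter; equivalently, every edge of $T$ lies on exactly two of the $p$ paths $[u_i,u_{i+1}]$.

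Summing the lengths of the $p$ portions of $\omega$ then yields
\[
2\mathcal{L}(T)=\sum_{i=0}^{p-1}\dist_T(u_i,u_{i+1})\le p\, d(T)=|E_p|\,d(T),
\]
and dividing by $|E_p|$ gives exactly the asserted inequality $\dfrac{2\mathcal{L}(T)}{|E_p|}\le d(T)$ (with equality precisely when $T$ is a path, so that every leaf-to-leaf segment realises the diameter).

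The only point that needs a little care is the identification of the consecutive-leaf segments of the Euler tour with geodesics of $T$. I would either read this off the depth-first stack as indicated above, or, to avoid mentioning a traversal at all, state the decomposition combinatorially: fix a planar embedding of $T$, read off the pendant vertices in the cyclic order $u_0,\dots,u_{p-1}$ in which they appear on the outer face, and verify directly that the paths $[u_i,u_{i+1}]$ cover each edge of $T$ exactly twice (the leaves lying in the subtree cut off by a given edge form a consecutive block); the same summation then concludes. I would also note that the cruder covering of all edges by the $p-1$ paths emanating from a single fixed leaf only produces $\mathcal{L}(T)\le(|E_p|-1)\,d(T)$, which is too weak here, so covering each edge \emph{twice} is essential.
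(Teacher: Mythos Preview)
Your Euler-tour argument is correct and gives a genuinely different proof from the paper's. The paper instead argues geometrically from the \emph{centre}: it first checks that the diameter is realised between two pendant vertices, takes the midpoint $x_0$ of a diametral path, observes that every pendant vertex lies within distance $d(T)/2$ of $x_0$, and then covers $T$ by the $|E_p|$ paths from $x_0$ to the leaves to obtain $\mathcal{L}(T)\le |E_p|\cdot d(T)/2$. Your decomposition is more combinatorial: rather than a star of half-diameter radii from a centre, you cut the closed DFS walk of length $2\mathcal{L}(T)$ into $|E_p|$ leaf-to-leaf geodesics, each of length at most $d(T)$. The paper's route is shorter to state but relies on the (unproved in the paper) fact that every leaf lies within $d(T)/2$ of the midpoint; your route avoids this but needs the verification that consecutive-leaf segments of the tour are geodesics, which you handle correctly via the LCA description.

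One small correction: your parenthetical equality claim is false. Equality in $2\mathcal{L}(T)\le |E_p|\,d(T)$ does \emph{not} force $T$ to be a path. Any equilateral star with $p\ge 2$ edges of length $\ell$ has $\mathcal{L}(T)=p\ell$ and $d(T)=2\ell$, so $2\mathcal{L}(T)/|E_p|=2\ell=d(T)$. What equality actually forces is that every pair of cyclically consecutive leaves in your ordering is diametral; this is strictly weaker than being a path.
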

 \begin{proof}
 We will show that the diameter of a metric tree is the distance between two pendant points.
 Suppose that $d(\Gamma)=d(x_1,x_2)$ for some $x_1,x_2 \in T$. To prove by contradiction we assume that $x_1$ is not a pendant point of $T$, then $x_1$ is some interior point and the graph $T \backslash x_1$ has at least two connected components. Let $T_1$ denote of the component such that $x_2 \notin T_1$ and $x_1\neq y \in T_1$. Then $$d(y,x_2) > d(x_1,x_2)=d(T).$$
 This holds because $T$ is a tree and any path inside $T$ that between $x_2$ and $y$ contains the point $x_1$. This contradicts our assumption of $d(x_1,x_2)=d(T)$. 
 Let $x_0$ be a middle point of a shortest path $P$ between $x_1$ and $x_2$. Thus, 
 $$d(x_1,x_0)=\frac{d(T)}{2}=d(x_0,x_2).$$
 Since the tree $T$ has $E_p$ pendant edges, therefore there are exactly $E_p$ path of length at most $d(x_1,x_0)$, that connect the point $x_0$ and a pendant vertex of $T$. Moreover, these paths covers the metric tree $T$, thus
 $$\mathcal{L}(T) \leq \sum\limits_{v \text{ is a pendant}} d(x_0,v) \leq \sum\limits_{v \text{ is a pendant}} d(x_0,x_1)= \frac{E_p d(T)}{2}.$$
 \end{proof}
 The bounds \eqref{diameter1} and \eqref{diameter2} implies that  
 \begin{equation} \label{newtree1}
     \lambda_{k+1}(T^s) \leq \left( \frac{k \pi |E_p|}{2 \mathcal{L}(T^s)} \right)^2 
 \end{equation}
 and,
 \begin{equation} \label{newtree2}
     \lambda_{k}(T^a) \leq \left( \frac{k \pi |E_p|}{2 \mathcal{L}(T^a)} \right)^2.
     \end{equation}
 Thus, we get improved estimates as compared to the estimates \eqref{tree1} and \eqref{tree2} on the eigenvalues of a metric trees $T^s$ and $T^a$. Therefore, the above listed bounds can be improved when the estimates \eqref{tree1} and \eqref{tree2} are replaced by \eqref{newtree1} and \eqref{newtree2}. \\
 We now use previously established estimates on metric graph with standard conditions to obtain some estimates on the metric graph $\Gamma^{\delta}$. For two same underlying metric graph $\Gamma^{\delta}_\alpha$ and $\Gamma^{\delta}_{\alpha'}$ with strengths $\alpha$ and $\alpha'$ at some vertex $v_0$, respectively, the following interlacing inequality holds. 
 $$\lambda_k(\Gamma^{\delta}_\alpha) \leq \lambda_k(\Gamma^{\delta}_{\alpha'}) \leq \lambda_{k+1}(\Gamma^{\delta}_\alpha).$$
 Let $\Gamma^{\delta}$ be finite compact connected metric graph, and let $|V^p|$ and $|V^n|$ denotes the number of vertices of $\Gamma^{\delta}$ with positive and negative strengths, respectively, and let $\Gamma^s$ be the same underlying metric graph with standard conditions. If $\Gamma^p$ is a graph obtained from $\Gamma^s$ by replacing the interaction strengths from zero to positive at those vertices where positive strengths were specified in $\Gamma^{\delta}$. Now $\Gamma^{\delta}$ can be obtained from this new graph $\Gamma^p$ by assigning negative strengths at some vertices, where negative strengths were assigned in $\Gamma^{\delta}$.  Then following inequalities holds.
 $$\lambda_k(\Gamma^s) \leq \lambda_k(\Gamma^p) \leq \lambda_{k+|V^p|}(\Gamma^s),$$
 $$\lambda_k(\Gamma^{\delta}) \leq \lambda_k(\Gamma^p) \leq \lambda_{k+|V^n|}(\Gamma^{\delta})$$
 $$\lambda_k(\Gamma^{\delta}) \leq \lambda_k(\Gamma^p) \leq \lambda_{k+|V^p|}(\Gamma^s) \leq \left(\frac{(k+\beta+|V^p|-1) \pi |E|}{2 \mathcal{L}(\Gamma^{\delta})}\right)^2 $$
 Where last inequality is due to \eqref{standard2}. If we replace the zero strengths with negative strengths first and then with the positive strengths even then we obtain the same bound.
 $$ \lambda_k(\Gamma^n) \leq \lambda_k(\Gamma^s)  \leq \lambda_{k+|V^n|}(\Gamma^n)$$
 $$\lambda_k(\Gamma^n) \leq \lambda_k(\Gamma^{\delta})  \leq \lambda_{k+|V^p|}(\Gamma^n) $$
 \begin{equation}
     \lambda_k(\Gamma^{\delta}) \leq \lambda_{k+|V^p|}(\Gamma^n) \leq \lambda_{k+|V^p|}(\Gamma^s) \leq \left(\frac{(k+\beta+|V^p|-1) \pi |E|}{2 \mathcal{L}(\Gamma^{\delta})}\right)^2
 \end{equation}
 The other way to establish an upper estimate for eigenvalues of the metric graph $\Gamma^{\delta}$ is using the application of the corresponding flower graph. Let $\Gamma^{\delta}$ be the metric graph with total interaction strength $\alpha=\sum\limits_{m=1}^{|V|} \alpha_m$, and $\Gamma^s$ be same graph with standard conditions at each vertex. Let $\Gamma^s_f$ and $\Gamma^{\alpha}_f$ be the corresponding flower graph with strengths zero and $\alpha$.\\
 If $\alpha >0$, then 
 $$\lambda_k(\Gamma^s) \leq \lambda_k(\Gamma^s_f) \leq \lambda_{k+|V|-1}(\Gamma^s)$$
  $$\lambda_k(\Gamma^s_f) \leq \lambda_k(\Gamma^\alpha_f) \leq
  \lambda_{k+1}(\Gamma^s_f)$$
 Now restore the vertices of $\Gamma^{\delta}$ from $\Gamma^\alpha_f$, then 
 $$\lambda_k(\Gamma^{\delta}) \leq \lambda_k(\Gamma^\alpha_f) \leq \lambda_{k+|V|-1}(\Gamma^{\delta})$$

  $$\lambda_k(\Gamma^{\delta}) \leq \lambda_k(\Gamma^\alpha_f) \leq \lambda_{k+1}(\Gamma^s_f) \leq \lambda_{k+|V|}(\Gamma^s) \leq \left(\frac{(k-1+\beta+|V|) |E| \pi}{2 L(\Gamma^{\delta})}\right)^2. $$
  If $\alpha \leq0$,
 $$\lambda_k(\Gamma^s) \leq \lambda_k(\Gamma^s_f) \leq \lambda_{k+|V|-1}(\Gamma^s)$$
  $$\lambda_k(\Gamma^\alpha_f) \leq \lambda_k(\Gamma^s_f) \leq
  \lambda_{k+1}(\Gamma^\alpha_f)$$
 Now restore the vertices of $\Gamma^{\delta}$ from $\Gamma^\alpha_f$, then 
 $$\lambda_k(\Gamma^{\delta}) \leq \lambda_k(\Gamma^\alpha_f) \leq \lambda_{k+|V|-1}(\Gamma^{\delta})$$
 
  $$\lambda_k(\Gamma^{\delta}) \leq \lambda_k(\Gamma^\alpha_f) \leq \lambda_{k}(\Gamma^s_f) \leq \lambda_{k+|V|-1}(\Gamma^s) \leq \left(\frac{(k-2+\beta+|V|) |E| \pi}{2 L(\Gamma^{\delta})}\right)^2. $$ 
 Let $\Gamma_a^{\delta'}$ be the finite compact connected metric graph equipped with $\delta'$-type conditions and possibly anti-standard conditions at some vertices. Let $\Gamma^a$ be the same underlying metric graph as $\Gamma_a^{\delta'}$ with anti-standard vertex conditions at each vertex. Then, with the help of \textcolor{blue}{Theorem} \eqref{strength}, and the estimate \eqref{anti1}, we obtain 
 $$\lambda_k(\Gamma_a^{\delta'}) \leq \lambda_k(\Gamma^a) \leq \left(\frac{k |E|\pi}{2 L(\Gamma_a^{\delta'})}\right)^2. $$
 
 Let $B$ denote a finite compact connected bipartite metric graph, and let $B^a$ and $B^s$ be the bipartite metric graphs equipped with anti-standard and standard vertex conditions at each vertex of both graphs, respectively. The following relation was established in \cite[\textcolor{blue}{Theorem 3.6}]{PR}.
 \begin{equation}
     \lambda_{k+\beta}(B^a) = \lambda_{k+1}(B^s).
 \end{equation}
 We use this relation and the \textcolor{blue}{Theorem} \eqref{strength} to obtain a relation between the eigenvalues of a  graph with $\delta'$-condition and the same graph equipped with $\delta$-conditions.
 Let $B^{\delta'}$ be the bipartite metric graph and $B^s$ be same underlying bipartite metric graph equipped with standard vertex conditions. Then from \textcolor{blue}{Theorem} \eqref{strength}, we know that either increasing or decreasing the strength at vertex from zero, decreases the eigenvalues. Thus, 
 $$\lambda_{k+\beta}(B^{\delta'}) \leq \lambda_{k+\beta}(B^a)=\lambda_{k+1}(B^s) \leq \lambda_{k+\beta+|V|}(B^{\delta'}).$$
 Let $B^{\delta}$ be a bipartite graph with non-negative strengths at each vertex, then
 $$\lambda_{k+\beta}(B^{\delta'}) \leq \lambda_{k+1}(B^{\delta}).$$
 
Let $|V^p|$ and $|V^n|$ denote the number of vertices in $B^{\delta}$ with positive and negative strengths, respectively, and $B^s$ be the same underlying metric graph. Then the graph $B^{\delta}$ can be obtained from $B^s$ in two steps. First, obtain a graph $B^p$ from $B^s$ by assigning positive strengths to those vertices where positive strengths are assigned in $B^{\delta}$, and then obtain $B^{\delta}$ by assigning negative strengths to some vertices of $B^p$ wherever is required.
Thus 
 $$\lambda_{k+1}(B^{s}) \leq \lambda_{k+1}(B^p)\leq \lambda_{k+1+|V^p|}(B^s) $$
 $$\lambda_{k+1}(B^{\delta}) \leq \lambda_{k+1}( B^p) \leq \lambda_{k+1+|V^n|}(B^{\delta})$$
 $$\lambda_{k+\beta}(B^{\delta'}) \leq \lambda_{k+1}(B^s) \leq \lambda_{k+1}(B^p) \leq \lambda_{k+1+|V^n|}(B^{\delta}).$$
 Therefore, 
 \begin{equation}
    \lambda_{k+\beta}(B^{\delta'}) \leq  \lambda_{k+1+|V^n|}(B^{\delta}) \quad \text{and} \quad \lambda_{k+1}(B^{\delta}) \leq \lambda_{k+\beta+|V^p|}(B^a).
 \end{equation}
Moreover, for any metric graph $\Gamma$, not necessarily bipartite, one can obtain several bipartite metric graphs $B$ by creating vertices of degree two equipped with standard conditions at some interior points of $\Gamma$. Therefore, for any  metric graph $\Gamma^{\delta}$ with non-negative strengths at each vertex, the following relation hold.
\begin{equation}
    \lambda_{k+\beta}(B^{\delta'}) \leq \lambda_{k+\beta}(B^a)=\lambda_{k+1}(B^s)=\lambda_{k+1}(\Gamma^s) \leq \lambda_{k+1}(\Gamma^{\delta}).
\end{equation}
 Note that in the above relation the total length of the graphs $B^{\delta'}$ and $\Gamma^{\delta}$ are equal. That is, $\mathcal{L}(B^{\delta'})=\mathcal{L}(\Gamma^{\delta}).$ 
 Let $\Gamma^{\delta}$ and $\Gamma^{\delta'}$ be same underlying metric graphs with strengths $\alpha_m$ and $\alpha'_m$, respectively.
 Let $\Gamma^{\delta}_f$ be the corresponding flower graph obtained from $\Gamma^{\delta}$, then $\Gamma^{\delta}_f$ can be turned into a $\B^s$ bipartite graph  by introducing the degree two vertices at its leave. Therefore,
 $$\lambda_k(\Gamma^{\delta}) \leq \lambda_k(\Gamma^{\delta}_f) \leq \lambda_{k+1} (\Gamma^s_f)=\lambda_{k+1}(\B^s)=\lambda_{k+\beta}(\B^a)$$
 Let us impose the Dirichlet conditions at degree two vertices of $\B^a$ to obtain a graph $\B^a_d$. Now again impose the standard conditions at the Dirichlet vertices in $\B^a_d$. Thus  
 \begin{equation}
     \lambda_{k+\beta}(\B^a) \leq \lambda_{k+\beta} (\B^a_d) \leq \lambda_{k+2\beta}(\B^a_s)= \lambda_{k+2|E|}(\Gamma^a_f) \leq \lambda_{k+2|E|+1}(\Gamma^{\delta'}_f).
 \end{equation}
 Where $\Gamma^{\delta'}_f$ has strength $\alpha=\sum\limits_{m=1}^{|V|} \alpha'_m$, and
 if for all $m=1,2,\cdots,|V|$, the strengths $\alpha'_m \geq 0$ in $\Gamma^{\delta'}$, then restoring the vertices of $\Gamma^{\delta'}$ from $\Gamma_f^{\delta'}$ implies that 
 $$\lambda_k(\Gamma^{\delta}) \leq \lambda_{k+2|E|+1}(\Gamma^{\delta'}_f) \leq \lambda_{k+2|E|+1}(\Gamma^{\delta'}).$$
 If $|V^{p'}|$ and $|V^{n'}|$ denote the number vertices in $\Gamma^{\delta'}$ with positive and negative strengths, respectively. Then, 
 \begin{align*}
 \lambda_k(\Gamma^{\delta}) \leq \lambda_k(\Gamma^p) &\leq \lambda_{k+|V^p|}(\Gamma^s)\\
 &= \lambda_{k+\beta}(B^s)\\
 &= \lambda_{k+\beta+|V^p|-1} (B^a)\\
 &\leq \lambda_{k+\beta+|V^p|+|V^{p'}|+|V^{n'}|-1}(B^{\delta'})
 \end{align*}
 
 Since for the graph endowed with $\delta'$-type conditions, attaching a pendant edge, attaching an edge between existing vertices, or gluing two vertices with at least one equipped with anti-standard vertex conditions lowers all eigenvalues \cite[\textcolor{blue}{Theorem 3.2, 3.5 and 4.2}]{RS}. Therefore, for any metric graph $\Gamma^{\delta'}$ with length $\mathcal{L}(\Gamma^{\delta'}) \geq \mathcal{L}(\tilde{\Gamma}^{\delta})$, we obtain, 
 $$\lambda_{k+\beta}(\Gamma^{\delta'}) \leq \lambda_{k+\beta}(B^{\delta'}) \leq \lambda_{k+1+|V^n|}(\tilde{\Gamma}^{\delta}).$$
 


\end{document}